\documentclass[11pt]{article}
\usepackage{amsmath, amsthm, amssymb, mathtools}
\usepackage{thmtools,thm-restate}
\usepackage{geometry}
\usepackage{hyperref}
\usepackage{cleveref}
\usepackage{natbib}

\newcommand{\Prob}{\mathbb{P}}

\newtheorem{thm}{Theorem}[section]

\newtheorem{prop}{Proposition}[section]
\newtheorem{asmp}{Assumption}[section]
\newtheorem{defn}{Definition}[section]

\newtheorem{rem}{Remark}[section]
\newtheorem{example}{Example}[section]

\def\approxcorrect{\checkmark\kern-1.1ex\raisebox{.89ex}{$\times$}}

\usepackage{amsmath,amsfonts,bm}

\def\eqref#1{equation~\ref{#1}}

\def\1{\bm{1}}

\DeclareMathAlphabet{\mathsfit}{\encodingdefault}{\sfdefault}{m}{sl}
\SetMathAlphabet{\mathsfit}{bold}{\encodingdefault}{\sfdefault}{bx}{n}

\def\gD{{\mathcal{D}}}
\def\gE{{\mathcal{E}}}

\def\gG{{\mathcal{G}}}

\def\gL{{\mathcal{L}}}

\def\gS{{\mathcal{S}}}

\def\gV{{\mathcal{V}}}

\def\gX{{\mathcal{X}}}
\def\gY{{\mathcal{Y}}}
\def\gZ{{\mathcal{Z}}}

\newcommand{\E}{\mathbb{E}}

\newcommand{\R}{\mathbb{R}}

\newcommand{\KL}{D_{\mathrm{KL}}}

\newcommand{\Cov}{\mathrm{Cov}}

\providecommand{\gX}{\mathcal{X}}
\providecommand{\gY}{\mathcal{Y}}
\providecommand{\gL}{\mathcal{L}}
\providecommand{\gG}{\mathcal{G}}
\providecommand{\gE}{\mathcal{E}}
\providecommand{\gZ}{\mathcal{Z}}
\providecommand{\gV}{\mathcal{V}}
\providecommand{\R}{\mathbb{R}}
\providecommand{\E}{\mathbb{E}}
\providecommand{\Prob}{\mathbb{P}}
\providecommand{\KL}{\mathrm{KL}}

\title{The Representation-Rationalizability Tradeoff in Reward Learning}
\date{}
\author{
Jing Dong\\
Vector Institute\\
\texttt{jing.dong@vectorinstitute.ai}
\and
Yaoliang Yu\\
University of Waterloo \& Vector Institute\\
\texttt{yaoliang.yu@uwaterloo.ca}
\and
Pascal Poupart\\
University of Waterloo \& Vector Institute\\
\texttt{ppoupart@uwaterloo.ca}
}
\begin{document}
\maketitle

\begin{abstract}
In RLHF, each training example contains a prompt $x$ and two candidate responses $y,y'$, and annotators provide pairwise preferences between these responses. The learning problem is to convert these heterogeneous pairwise judgments into a single scalar reward $r(x,y)$ that measures response quality for each prompt. Classical social choice implies an impossibility because heterogeneous annotator samples can induce pooled preferences with Condorcet cycles, so no scalar reward can evaluate all compared response pairs consistently. A growing literature analyzes RLHF as a social-choice problem, but usually assumes a fixed finite set of alternatives, i.e., a pre-enumerated finite set of candidate responses for each prompt. Modern pipelines instead score responses through a learned representation $\phi(x,y)$ before a scalar head, so $\phi$ determines which responses are treated as distinguishable alternatives and which comparisons are visible to the reward model. Once this embedding is part of the problem, the impossibility results from social choice theory become a tradeoff. We show that the excess cross-entropy loss of any reward  built on $\phi$ decomposes exactly into a representational term, which a richer $\phi$ shrinks, and an aggregation term, which a richer $\phi$ enlarges by exposing more comparisons that no scalar can rank consistently. The same results extend to direct preference optimization (DPO), and jointly training the embedding and the reward cannot guarantee to recover the sweet spot of this tradeoff. Experiments on synthetic data and real preference datasets corroborate our results.
\end{abstract}

\section{Introduction}

Should reward models use richer representations to better fit human preferences? In modern RLHF  \citep{Ouyangetal22}, the aggregation task is to fit a single scalar reward to pooled pairwise preferences from many annotators. A popular architecture uses a pretrained language-model encoder that maps a prompt-response pair $(x,y)$ to features $\phi(x,y)$, followed by a thin score head $g$, so $r(x,y)=g(\phi(x,y))$. A common intuition is that increasing representational richness can only help because it gives the reward model more signals to fit. We show that this intuition is incomplete. Richer representations can improve fit in one sense while hurting it in another.

The reason is that representation changes the aggregation problem itself. The embedding determines which response pairs are distinguishable to the reward fitter, so refining $\phi$ turns previously collapsed responses into distinct alternatives and adds new pairwise comparison that a single scalar reward must rationalize. Classical social choice theory already warns that no scalar reward can consistently aggregate pooled judgments from heterogeneous annotators when their combined comparisons contain cycles
\cite{arrow2012social,gibbard1973manipulation,satterthwaite1975strategy}, and a recent body of work has translated these social choice results into the RLHF setting \cite{conitzerposition24,dai2024mapping,xiao2025theoretical,liu2025statistical}. Most of these works, however, treat the response space as a fixed finite set of alternatives. Once the embedding is allowed to vary, the social choice impossibility result becomes tunable because changing $\phi$ changes which annotator disagreements are visible in the pooled comparisons.

Our main result makes this trade-off precise. Under a random-utility model with arbitrary annotator heterogeneity, the excess cross-entropy loss of any embedding-based reward decomposes into two parts: an embedding loss measuring information that $\phi$ discards by collapsing distinguishable preferences, and an agreement cost that measures how far the pooled preferences, as seen through $\phi$, are from being rationalizable by any scalar reward. Under a mild assumption on the representation, the lower bounds of the two terms move in opposite directions as $\phi$ refines. As a consequence, both the smallest and the largest representations can be suboptimal, and the optimal embedding dimension is dataset-dependent. We further show that this lower bound is tight under a constructive reward.

Our framework extends beyond reward modeling. We show that Direct Preference Optimization  (DPO) \citep{rafailov2023direct} admits a similar decomposition and excess-loss lower bound, demonstrating that the same tradeoff governs preference-based policy learning even without an explicit reward model. Based on the excess loss lower bound results, one might hope that jointly training the embedding and reward can naturally converge to an optimal embedding. However, we show that the joint embedding-reward training is not guaranteed to decrease either tradeoff term per round and admits only fixed points where the reward is constant on the data.

We complement the theory with experiments on synthetic data and three real preference datasets (Jester, Sushi, and MT-Bench) using two pretrained sentence encoders. Across settings, the embedding loss decreases and the agreement cost increases with embedding resolution, and total excess loss is often minimized at an intermediate dimension, consistent with our bounds.
\section{Related Work}
\paragraph{Alignment with Social Choice}
Social choice theory offers a rich framework for analyzing how RLHF aggregates diverse human preferences. The optimal solution to the Bradley-Terry loss in RLHF induces a ranking equivalent to the Borda count from social choice theory~\citep{siththaranjandistributional24}. Under the assumption that each pairwise comparison is labeled by a single evaluator, RLHF satisfies pairwise majority and Condorcet consistency~\citep{xiao2025theoretical}. More fundamentally, as the annotator pool grows, the probability that the preference profile contains a Condorcet cycle approaches one exponentially fast~\citep{liu2025statistical}, so reward-based alignment is statistically fragile under general preferences. In the case of a linear social choice model, where voters are linear directions in fixed embedding space, \citet{ge2024axioms} analyze the ranking induced by an optimal linear reward with respect to Pareto optimality (PO) and Pairwise Majority Consistency (PMC). \citet{hollenderenforcing} show that an appropriate reweighting of the Bradley-Terry loss can enforce PO and PMC, and further analyze clone robustness. 

\paragraph{RLHF Reward Modeling, and Failure Modes}
Reward models are central to RLHF yet remain difficult to train reliably. \citet{casper2023open} give a survey summarizing limitations of RLHF, including reward hacking, reward misgeneralization, distribution shift, and the difficulty of representing diverse human values with a single reward. \citet{wang2024secrets} identify noisy preference pairs and out-of-distribution generalization as practical challenges. A number of works identify reward overfitting and overoptimization in reward-model training and propose alternative training methods to standard RLHF \citep{rafailov2024scaling,zhu2024iterative,yang2024regularizing,costereward}. \citet{gao2023scaling} quantify how proxy reward scores diverge from true rewards as optimization pressure increases, following smooth scaling laws.

On the theoretical side, \citet{zhu2023principled}  show that with a linear reward model and a Bradley-Terry (BT) model, MLE can converge. A number of works propose alternatives to the Bradley-Terry model as scalar rewards cannot capture intransitive preferences \citep{munos2024nash,azar2024general,zhang2025beyond}. \citet{siththaranjandistributional24} show that implicitly aggregating over hidden contexts during reward modeling can produce counter-intuitive results very different from aggregation via expected utility. \citet{sun2024rethinking} establish the asymptotic convergence rate of BT reward models based on deep neural networks using embeddings.

\section{RLHF Pipeline}
We now introduce the setup used throughout the paper. We consider the standard RLHF pipeline with $n$ human annotators, denoted as $\{1,\dots,n\}$, a prompt space $\gX$, and a response space $\gY$.
\paragraph{Preference Model}
Fix a prompt $x\in\gX$ and an annotator $i\in\{1,\dots,n\}$.
Annotator $i$ has a latent utility $u_i(x,y)\in\R$ for each response $y\in\gY$.
Given two candidate responses $y,y^\prime\in\gY$, the annotator's observed preference is determined by a Gumbel perturbation~\citep{bradley1952rank,plackett1975analysis}
\begin{align}
  \tilde u_i(x,y) = u_i(x,y)+\varepsilon_i, \qquad \varepsilon_i\stackrel{\mathrm{i.i.d.}}{\sim}\mathrm{Gumbel}(0,\tau),
\end{align}
and the annotator prefers $y$ over $y^\prime$ whenever $\tilde u_i(x,y)>\tilde u_i(x,y^\prime)$.
We note that this Gumbel noise setup is the standard random utility model (RUM) for discrete choice~\citep{luce1959individual,mcfadden1973conditional}. Moreover, as the difference of two independent Gumbel$(0,\tau)$ variables is Logistic$(0,\tau)$, we can recover the Bradley-Terry choice probability as the pairwise preference probability. Specifically, the probability that annotator $i$ prefers $y$ over $y^\prime$ given $x$ is then
\begin{align*}
  P_i(y\succ y^\prime \mid x) = F\left(\frac{u_i(x, y)-u_i(x, y^\prime)}{\tau}\right),
  \qquad F(t)=\frac{1}{1+e^{-t}}\,.
\end{align*}
We set $\tau = 1$ for the rest of the paper for simplicity.

To aggregate across annotators we define the pooled win probability $\bar{p}(x, y, y^\prime) = \frac{1}{n}\sum_{i=1}^n P_i(y\succ y^\prime \mid x)$,
which gives the average probability that annotators prefer $y$ to $y^\prime$. 

\paragraph{Reward Learning}
The reward model is a scalar function $r:\gX\times\gY\to\R$ of the form $r(x,y)=g(\phi(x,y))$, where $\phi:\gX\times\gY\to\R^d$ is a latent embedding and $g:\R^d\to\R$ is a score head. Let $\nu$ be a distribution over prompts and, for each $x$, let $\mu_x$ be a (conditional) distribution over responses.
The reward is learned by minimizing the following negative log-likelihood of the pooled preferences:
\begin{align*}
  \gL(r) = \E_{\substack{x\sim\nu \\ (y,y^\prime)\sim\mu_x^{\otimes 2}}}
  \Bigl[
    -\bar p(x,y,y^\prime)\log F(\Delta r(x,y,y^\prime))
    -\bigl(1-\bar p(x,y,y^\prime)\bigr)\log\bigl(1-F(\Delta r(x,y,y^\prime))\bigr)
  \Bigr]\,,
\end{align*}
where $\Delta r(x,y,y^\prime)= r(x,y)-r(x,y^\prime)$.

\section{The Representation-Rationalizability Tradeoff}\label{sec:tradeoff}
With the setup in place, we now turn to the fundamental question: 
\begin{center}
    \textit{What determines how well a reward model can perform?}
\end{center}
This section gives the formal answer. We first show an exact decomposition of excess loss, then derive a lower bound that makes the representation-rationalizability tension explicit.

Before stating the main theorem, it is instructive to examine a simple example.
\begin{example}\label{ex:cycle}
Fix one prompt and $\gY=\{y_1,y_2,y_3\}$ with uniform $\mu$. Let responses have binary features $\psi(y_1)=(1,0),\quad \psi(y_2)=(0,1),\quad \psi(y_3)=(1,1)$, and suppose pooled preferences satisfy $\bar p(y_1,y_2)=\bar p(y_2,y_3)=\bar p(y_3,y_1)=\tfrac{2}{3}$.
The role of $\phi$ is then to decide which responses can receive different rewards, since $r=g\circ\phi$ is constant on each level set of $\phi$. Now consider two kinds of embeddings, one that uses only the first bit and one that uses both bits. If $\phi_a(y)=\psi(y)_1$, then $\phi_a(y_1)=\phi_a(y_3)$, so $r(y_1)=r(y_3)$ and necessarily $ F(\Delta r(y_1,y_3))=\frac{1}{2}\neq \bar p(y_1,y_3)=\frac{1}{3}$. If $\phi_b(y)=\psi(y)$, then $\phi_b$ is injective and the representation is no longer the bottleneck. But any scalar reward still satisfies $ \Delta r(y_1,y_2)+\Delta r(y_2,y_3)+\Delta r(y_3,y_1)=0$, whereas the preferences imply
\[
  F^{-1}(\bar p(y_1,y_2))+F^{-1}(\bar p(y_2,y_3))+F^{-1}(\bar p(y_3,y_1))
  =3\log 2\neq 0.
\]
Hence no scalar reward can rationalize all three comparisons simultaneously. In summary, a coarse $\phi$ fails by collapsing distinguishable responses, while a fine $\phi$ exposes an irreducible cycle that scalar rewards cannot fit.
\end{example}
To better characterize this tradeoff, we turn to the reward excess loss.

The Bayes-optimal loss $\gL^\ast$ represents the best possible performance achievable by any model that perfectly knows the pooled preference probabilities $\bar{p}(x,y,y^\prime)$. For any learned reward $r(x,y)=g(\phi(x,y))$, the excess loss $\gL(r)-\gL^\ast$ measures how far we fall short of this ideal. Our key observation is that this excess loss admits the following decomposition.
\begin{restatable}{prop}{decomposition}\label{prop:decomposition}
Define the embedding-induced win probability as
\begin{align*}
  \tilde{p}_\phi(x,y,y^\prime) = \E_{(y_0,y_0')\sim\mu_x^{\otimes 2}}\bigl[\bar{p}(x,y_0,y_0')\mid\phi(x,y_0)=\phi(x,y),\;\phi(x,y_0')=\phi(x,y^\prime)\bigr]\,.
\end{align*}
Then for any reward $r(x,y)=g(\phi(x,y))$,
\begin{align*}
  \gL(r)-\gL^\ast
  &= \underbrace{\E_{x\sim\nu}\E_{(y, y^\prime)\sim\mu_x^{\otimes 2}}\left[
    \KL\left(\bar{p}(x,y, y^\prime)\big\|\tilde{p}_\phi(x,y, y^\prime)\right)
  \right]}_{\gE_{\mathrm{emb}}(\phi), \text{embedding loss}}\\
  &\quad + \underbrace{\E_{x\sim\nu}\E_{(y, y^\prime)\sim\mu_x^{\otimes 2}}\left[
    \KL\left(\tilde{p}_\phi(x,y, y^\prime)\big\|F(\Delta r(x,y, y^\prime))\right)
  \right]}_{\gE_{\mathrm{agr}}(\phi,r), \text{agreement cost}}\,.
\end{align*}
\end{restatable}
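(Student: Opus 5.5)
The plan is to rewrite the pointwise cross-entropy as entropy plus KL, and then split the KL term using the tower property over the level sets of $\phi$. Write $H(p,q)=-p\log q-(1-p)\log(1-q)$ for the binary cross-entropy, and $h(p)=H(p,p)$ for the binary entropy. Then $\gL(r)=\E[H(\bar p,F(\Delta r))]$. Since $H(p,q)=h(p)+\KL(p\|q)$ pointwise, and the infimum over all measurable rewards (or over all predicted probabilities) is attained when the predicted probability equals $\bar p$, I will take $\gL^\ast=\E[h(\bar p)]$. This gives
\[
  \gL(r)-\gL^\ast=\E_{x}\E_{(y,y')}\bigl[\KL(\bar p\,\|\,F(\Delta r))\bigr].
\]
I will interpret $\gL^\ast$ in this Bayes sense, as the loss of a model that knows $\bar p$, as the text indicates.

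Next I would expand the KL and add and subtract $\tilde p_\phi$ inside the logarithms:
\[
  \KL(\bar p\|F(\Delta r))=\KL(\bar p\|\tilde p_\phi)+\Bigl[\bar p\log\tfrac{\tilde p_\phi}{F(\Delta r)}+(1-\bar p)\log\tfrac{1-\tilde p_\phi}{1-F(\Delta r)}\Bigr].
\]
The bracketed cross term has the same form as $\KL(\tilde p_\phi\|F(\Delta r))$, except that it is weighted by $\bar p$ instead of $\tilde p_\phi$. The key observation is that $\log(\tilde p_\phi/F(\Delta r))$ and its complement are functions only of $(x,\phi(x,y),\phi(x,y'))$. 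This holds because $\tilde p_\phi$ is by definition a conditional expectation given these quantities, and $r=g\circ\phi$ makes $\Delta r$ depend on $(y,y')$ only through $\phi$.

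Conditioning on the $\sigma$-algebra generated by $(x,\phi(x,y),\phi(x,y'))$ and applying the tower property, $\E[\bar p\mid\cdot]=\tilde p_\phi$. So the expectation of the cross term equals $\E[\KL(\tilde p_\phi\|F(\Delta r))]$, which is $\gE_{\mathrm{agr}}(\phi,r)$. The first term is $\gE_{\mathrm{emb}}(\phi)$, and summing the two gives the claimed identity.

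The main obstacle I expect is making the conditional-expectation step rigorous. First, I need to confirm that the definition of $\tilde p_\phi$, conditioning on the event that $\phi(x,y_0)=\phi(x,y)$ and $\phi(x,y_0')=\phi(x,y')$, agrees with the regular conditional expectation of $\bar p$ given $(x,\phi(x,y),\phi(x,y'))$ under the product measure. This is immediate for discrete $\phi$ and otherwise holds by taking a version of the conditional expectation. Second, I need integrability of the logarithms. To handle this, I would assume the finiteness of $\gL(r)$ and $\gL^\ast$ (implicit in the statement), and note that $F(\Delta r)\in(0,1)$ keeps $\log F(\Delta r)$ finite. Boundary cases where $\tilde p_\phi\in\{0,1\}$ are handled by the convention $0\log 0=0$. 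With this justified, linearity of expectation splits $\E[\KL(\bar p\|F(\Delta r))]$ exactly into the two terms.
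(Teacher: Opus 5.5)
Your proposal is correct and follows the paper's proof essentially step for step. You write the excess loss as $\E[\KL(\bar p\|F(\Delta r))]$, add and subtract $\tilde p_\phi$ inside the logarithms, and apply the tower property over $\sigma(\phi(x,Y),\phi(x,Y'))$, with respect to which both $\tilde p_\phi$ and $F(\Delta r)$ are measurable, so that the cross term becomes $\E[\KL(\tilde p_\phi\|F(\Delta r))]$.

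One small correction to your aside. $\gL^\ast=\E[h(\bar p)]$ is the infimum over arbitrary predicted probabilities $q(x,y,y')$, which is how the paper defines it, and not over scalar rewards. No $F(\Delta r)$ can match a cyclic $\bar p$, as in Example~\ref{ex:cycle}. Your final ``model that knows $\bar p$'' interpretation is the right one, and it is what the identity requires.
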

The embedding loss $\gE_{\mathrm{emb}}$ captures the information discarded when the embedding collapses distinguishable responses onto the same point. The agreement cost $\gE_{\mathrm{agr}}$ captures something fundamentally different. Even if the embedding perfectly preserves all distinctions between responses, the pooled preferences may fail to be rationalizable by any scalar reward. The agreement cost measures this irreducibility. When the embedding is injective, i.e., $\tilde{p}_\phi=\bar{p}$, the embedding loss vanishes entirely. The excess loss is then pure agreement cost. Conversely, when the embedding is constant, i.e., $\tilde{p}_\phi=\frac{1}{2}$ everywhere, the embedding loss is large. But in this case the agreement cost vanishes because a constant win probability is consistent with any reward. In between these extremes, both terms are generally positive.

\begin{defn}\label{def:feasible}
Fix an embedding $\phi_d:\gX\times\gY\to\R^d$. For each $(x,y,y^\prime)$, define $\eta_{d}(x,y,y^\prime)=\min\left\{\tilde{p}_{\phi_d}(x,y,y^\prime),1-\tilde{p}_{\phi_d}(x,y,y^\prime)\right\}$
and the feasibility class
\begin{align*}
  \mathcal{R}_d=\left\{(x,y)\mapsto g(\phi_d(x,y)):\  F\bigl(g(\phi_d(x,y))-g(\phi_d(x,y^\prime))\bigr)\in[\eta_d(x,y,y^\prime),1-\eta_d(x,y,y^\prime)] \right\}.
\end{align*}
\end{defn}
The restriction to $r\in\mathcal{R}_d$ asks only that the implied pairwise probabilities $F(\Delta r)$ stay within the interval $[\eta_d,1-\eta_d]$ that already contains $\tilde p_{\phi_d}$, equivalently that $|\Delta r|\le \log\tfrac{1-\eta_d}{\eta_d}$. This is a mild boundedness condition that is automatically satisfied by any reward whose log-odds match those of the embedding-induced probabilities. 

The above tradeoff is captured by the following lower bound, and its proof is deferred to Appendix~\ref{appendix:tradeoff}.
\begin{thm}\label{thm:mainlb}
For any dimension $d\ge 0$, any embedding $\phi_d:\gX\times\gY\to\R^d$, any $r\in\mathcal{R}_d$, and any $\delta\in(0,\frac{1}{2})$,
\begin{align}\label{eq:mainlb}
  \gL(r)-\gL^\ast \geq 2e_d+\tfrac{\ell_\delta^2}{2}\bar P_d^\delta\,,
\end{align}
where $e_d=\E[(\bar p-\tilde p_{\phi_d})^2]$, $\bar P_d^\delta$ is the curvature-weighted Condorcet cycle probability at margin $\delta$ defined in Definition~\ref{def:condorcet}, $\ell_\delta=F^{-1}(\frac{1}{2}+\delta)=\log\frac{1+2\delta}{1-2\delta}>0$.
\end{thm}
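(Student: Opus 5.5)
We start from Proposition~\ref{prop:decomposition}, which writes the excess loss as $\gE_{\mathrm{emb}}(\phi_d)+\gE_{\mathrm{agr}}(\phi_d,r)$, and bound the two terms separately. This gives the two summands of \eqref{eq:mainlb}.

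\textbf{Embedding term.} Pinsker's inequality for Bernoulli distributions gives
\[
\KL(p\|q)\ge 2(p-q)^2 .
\]
Applying it pointwise and taking expectations yields $\gE_{\mathrm{emb}}(\phi_d)\ge 2\E[(\bar p-\tilde p_{\phi_d})^2]=2e_d$. This step is routine.

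\textbf{Agreement term: local quadratic bound.} We work in logit space. Set $a=F^{-1}(\tilde p_{\phi_d})$ and $b=\Delta r$. The Bernoulli KL $\KL(F(a)\|F(b))$ is the Bregman divergence of the log-partition $A(t)=\log(1+e^t)$:
\[
\KL(F(a)\|F(b))=A(b)-A(a)-F(a)(b-a)=\int_a^b F'(s)(b-s)\,ds .
\]
Since $A''=F(1-F)$, this is at least $\tfrac12\min_{s\in[a,b]}F'(s)\,(a-b)^2$.

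The role of $r\in\mathcal{R}_d$ is to control this curvature. Both $F(a)=\tilde p_{\phi_d}$ and $F(b)$ lie in $[\eta_d,1-\eta_d]$. Hence every $s$ between $a$ and $b$ has $F'(s)\ge \eta_d(1-\eta_d)$, which is exactly the curvature at $\tilde p_{\phi_d}$ itself. This gives
\[
\gE_{\mathrm{agr}}\ge \tfrac12\E\bigl[\eta_d(1-\eta_d)\,(F^{-1}(\tilde p_{\phi_d})-\Delta r)^2\bigr].
\]
I expect this curvature factor $\eta_d(1-\eta_d)$ is the weight meant by ``curvature-weighted'' in Definition~\ref{def:condorcet}.

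\textbf{Agreement term: cycle argument.} Next we exploit that $\Delta r$ is a gradient, i.e.\ it sums to zero around any cycle of embedding cells. Fix a prompt and a triple of cells $(c_1,c_2,c_3)$ that forms a Condorcet cycle at margin $\delta$. This means $\tilde p_{\phi_d}(c_1,c_2),\tilde p_{\phi_d}(c_2,c_3),\tilde p_{\phi_d}(c_3,c_1)\ge \tfrac12+\delta$, so each logit is at least $\ell_\delta$ and the three logits sum to at least $3\ell_\delta$. The three $\Delta r$ values sum to $0$. The residuals $a_{jk}-b_{jk}$ therefore sum to at least $3\ell_\delta$, and Cauchy--Schwarz gives a sum of squared residuals of at least $3\ell_\delta^2$.

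We then average over the random triples $(y,y',y'')\sim\mu_x^{\otimes 3}$. Each pairwise expectation under $\mu_x^{\otimes 2}$ equals the average over the three edges of a random triple. So the pairwise agreement expectation is at least $\tfrac13$ times the sum over the three edges, restricted to cyclic triples. This yields
\[
\gE_{\mathrm{agr}}\ge \tfrac12\cdot\tfrac13\cdot 3\ell_\delta^2\,\E[w\,\1\{\text{cycle}\}]=\tfrac{\ell_\delta^2}{2}\bar P_d^\delta ,
\]
with $w$ the curvature weight, taken as the minimum over the three edges of $\eta_d(1-\eta_d)$.

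\textbf{Expected obstacle.} The main difficulty is matching the constants and weighting exactly to Definition~\ref{def:condorcet}. Possible mismatches are whether the weight is a minimum or a product, whether both cycle orientations are counted, and whether a factor $4$ from normalizing $\eta(1-\eta)\le\tfrac14$ appears. Whichever normalization the definition uses, the chain Pinsker + curvature lower bound + zero-sum-on-cycles + Cauchy--Schwarz should be adjusted to it.
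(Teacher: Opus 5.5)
Your proposal is correct and takes the same route as the paper: the decomposition, Pinsker for $\gE_{\mathrm{emb}}\ge 2e_d$, and curvature $\kappa_d=\eta_d(1-\eta_d)$ of $t\mapsto\KL(\tilde p_{\phi_d}\|F(t))$ on the feasible logit interval. You also use the same cycle ingredients: the vanishing cycle sum of $\Delta r$, $(a_{12}+a_{23}+a_{31})^2\le 3\sum a_{ij}^2$, and exchangeability under $\mu_x^{\otimes 3}$, with the weight $\kappa_d^{\min}$ and both orientations exactly as in Definition~\ref{def:condorcet}. The only difference is ordering: the paper first proves $\gE_{\mathrm{agr}}\ge\tfrac{1}{18}\E[\kappa_d^{\min}C_d^2]$ (Theorem~\ref{thm:cyclebound}) and then uses $C_d^2\ge 9\ell_\delta^2$ on $\mathcal{CC}_d^\delta$, whereas you restrict to cycle triples before applying Cauchy--Schwarz; the constants come out identical, so the normalization concerns in your last paragraph do not arise.
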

In the following sections, we show that under a mild assumption, $e_d$ is monotonically non-increasing in $d$, while a lower bound on $\bar P_d^\delta$ is non-decreasing in $d$. This generalizes and quantifies the intuition built in Example~\ref{ex:cycle}.

\begin{rem}[Upper bound]
The upper-bound result is deferred to Appendix~\ref{appendix:upper}. With the same decomposition, the embedding loss term's upper and lower bounds are matched up to constants. The agreement cost term is upper bounded by a quantity of the same form as $\bar P_d^\delta$.
\end{rem}

\begin{rem}[DPO extension]
The same decomposition and cycle-based lower-bound proof extends to DPO. The full setup, formal statements, and assumptions are deferred to Appendix~\ref{appendix:dpo}.
\end{rem}

\subsection{Monotonicity Under Representation Refinement}\label{subsec:representation}
The embedding loss depends on the embedding's expressiveness. Intuitively, increasing the embedding dimension should improve the embedding's ability to distinguish responses, which ought to reduce the embedding loss. This intuition is indeed correct under a mild condition, namely that if a response pair looks identical under $\phi_d:\gX\times\gY\to\R^{d}$, it also looks identical under $\phi_{d-1}:\gX\times\gY\to\R^{d-1}$.

This condition is satisfied in many settings. If each $\phi_d$ is a linear embedding $\phi_d(x,y)=U_d^\top\psi(x,y)$ where $\psi:\gX\times\gY\to\R^{D}$ is a fixed feature map and the column spaces satisfy $\operatorname{col}(U_d)\subseteq\operatorname{col}(U_{d+1})$, then higher dimensions add new basis functions and the nesting holds automatically. The same holds when $\phi_d$ concatenates the first $d$ layers or coordinates of a learned representation, so that increasing $d$ only reveals additional information.

We formalize this as follows.
\begin{asmp}\label{asmp:filtration}
For each $d\geq0$, let $\phi_d:\gX\times\gY\to\R^{d}$ be measurable and let $\gG_d$ be the sigma-field generated by the map $\big(\phi_d(x,y), \phi_d(x, y')\big)$ on $\gX\times\gY^2$. Assume $\gG_d\subseteq\gG_{d+1}$ for every $d\geq0$.
\end{asmp}

Under this assumption, the embedding loss is non-increasing in dimension.
\begin{restatable}[]{lem}{monoed}\label{lem:monoed}
Under Assumption~\ref{asmp:filtration}, define $$e_d
  = \E_{x\sim\nu}\,\E_{(y, y^\prime)\sim\mu_x^{\otimes 2}}\left[\left(\bar{p}(x,y, y^\prime)-\tilde{p}_{\phi_d}(x,y, y^\prime)\right)^2\right]$$ then $e_{d+1}\leq e_d$ for all $d \geq 0$. 
\end{restatable}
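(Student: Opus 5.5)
The plan is to recognize $\tilde p_{\phi_d}$ as a conditional expectation and use the fact that conditioning on a finer $\sigma$-field gives a better $L^2$ approximation.

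\textbf{Setup.} Work on the joint probability space of $(x,y,y')$ with $x\sim\nu$ and $(y,y')\sim\mu_x^{\otimes 2}$. Let $\gG_d$ be the $\sigma$-field from Assumption~\ref{asmp:filtration}, augmented with $\sigma(x)$ so that conditioning is done prompt by prompt. This matches the definition of $\tilde p_\phi$, which conditions inside $\E_{\mu_x}$ for fixed $x$. If one prefers to read the assumption as already including $x$, nothing changes: the nesting $\gG_d\vee\sigma(x)\subseteq\gG_{d+1}\vee\sigma(x)$ still holds.

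\textbf{Step 1 (identify $\tilde p$).} Show that $\tilde p_{\phi_d}=\E[\bar p\mid\gG_d]$ almost surely. By definition, $\tilde p_{\phi_d}(x,y,y')$ averages $\bar p(x,y_0,y_0')$ over independent draws conditioned on $(\phi_d(x,y_0),\phi_d(x,y_0'))=(\phi_d(x,y),\phi_d(x,y'))$. This is the conditional expectation of $\bar p$ given the random variable $(\phi_d(x,y),\phi_d(x,y'))$, evaluated at the observed pair.

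The one subtle point is measure-zero level sets, e.g.\ an injective $\phi$ under a continuous $\mu_x$. There I would take the definition as the regular conditional expectation, i.e.\ the Doob--Dynkin version. I will also note that $\bar p\in[0,1]$ is bounded, so it is in $L^2$. I expect this identification to be the only real obstacle, and I would settle it by adopting that convention explicitly.

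\textbf{Step 2 (tower property and orthogonality).} Set $X=\bar p$, $Z_d=\E[X\mid\gG_d]$ and $Z_{d+1}=\E[X\mid\gG_{d+1}]$. Since $\gG_d\subseteq\gG_{d+1}$, the tower property gives $Z_d=\E[Z_{d+1}\mid\gG_d]$. Expand
\[
X-Z_d=(X-Z_{d+1})+(Z_{d+1}-Z_d).
\]
The cross term vanishes. The factor $Z_{d+1}-Z_d$ is $\gG_{d+1}$-measurable and bounded, so
\[
\E[(X-Z_{d+1})(Z_{d+1}-Z_d)]=\E\bigl[(Z_{d+1}-Z_d)\,\E[X-Z_{d+1}\mid\gG_{d+1}]\bigr]=0.
\]

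\textbf{Step 3 (conclude).} Squaring the expansion and using the vanishing cross term gives
\[
e_d=e_{d+1}+\E[(Z_{d+1}-Z_d)^2]\ \ge\ e_{d+1}.
\]
This is the desired monotonicity, with the gap equal to the variance explained by refining the embedding from $\phi_d$ to $\phi_{d+1}$.
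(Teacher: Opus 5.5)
Your proposal is correct and follows essentially the same route as the paper. Both identify $\tilde p_{\phi_d}$ with $\E[\bar p\mid\gG_d]$, use $\gG_d\subseteq\gG_{d+1}$, and apply the Pythagorean identity for nested conditional expectations to get $e_d=e_{d+1}+\E[(Z_{d+1}-Z_d)^2]\ge e_{d+1}$. Your adjoining of $\sigma(x)$ to $\gG_d$, which makes the conditioning match the per-prompt definition of $\tilde p_{\phi_d}$, and your explicit check that the cross term vanishes fill in details the paper leaves implicit.
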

The proof (deferred to Appendix~\ref{appendix:representation}) uses the tower property as conditioning on a coarser sigma-field yields larger mean-squared error. Since $\gG_d\subseteq\gG_{d+1}$, conditioning on $\gG_{d+1}$ and then on $\gG_d$ is the same as conditioning on $\gG_d$ alone, which cannot reduce the variance of $\bar{p}$.

Having established the monotonicity of the error, we can lower bound the first term as follows.

\begin{prop}\label{prop:ed}
For every measurable $\phi_d$ and every $d\geq0$, $\gE_{\mathrm{emb}}(\phi_d)\geq 2\,e_d$.
\end{prop}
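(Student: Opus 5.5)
The plan is to apply Pinsker's inequality pointwise for Bernoulli distributions and then take expectations. For fixed $(x,y,y')$, write $p=\bar p(x,y,y')$ and $q=\tilde p_{\phi_d}(x,y,y')$. Both are numbers in $[0,1]$, and $\KL(p\|q)$ denotes the divergence between $\mathrm{Bernoulli}(p)$ and $\mathrm{Bernoulli}(q)$. Pinsker's inequality gives
\[
\KL(p\|q)\ \ge\ 2\,\mathrm{TV}(\mathrm{Bern}(p),\mathrm{Bern}(q))^2 .
\]
The total variation distance between two Bernoullis is $\tfrac12(|p-q|+|(1-p)-(1-q)|)=|p-q|$. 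Hence $\KL(p\|q)\ge 2(p-q)^2$ pointwise.

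To keep the argument self-contained, I would instead prove the binary inequality directly. Fix $p$ and set
\[
h(q)=p\log\tfrac{p}{q}+(1-p)\log\tfrac{1-p}{1-q}-2(p-q)^2 .
\]
Then $h(p)=0$ and
\[
h'(q)=\frac{q-p}{q(1-q)}-4(q-p)=(q-p)\Bigl(\frac{1}{q(1-q)}-4\Bigr).
\]
Since $q(1-q)\le 1/4$, the second factor is nonnegative. So $h$ is decreasing on $(0,p)$ and increasing on $(p,1)$, and therefore $h\ge0$.

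Next, take $\E_{x\sim\nu}\E_{(y,y')\sim\mu_x^{\otimes2}}$ of the pointwise inequality. By the definition of $\gE_{\mathrm{emb}}$ in Proposition~\ref{prop:decomposition}, this gives $\gE_{\mathrm{emb}}(\phi_d)\ge 2\,\E[(\bar p-\tilde p_{\phi_d})^2]=2e_d$. Measurability of $\phi_d$ ensures that $\tilde p_{\phi_d}$ is a well-defined measurable (regular conditional expectation) function. As a result, the integrand is measurable and the expectations make sense. Both sides are nonnegative, so the inequality passes through the integral even if $\gE_{\mathrm{emb}}$ is infinite.

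The one point needing care is the boundary cases $q\in\{0,1\}$ with $p\neq q$. There the divergence is $+\infty$ and the inequality holds trivially. When $p\in\{0,1\}$, the convention $0\log0=0$ keeps the derivative argument valid on the open interval, and continuity extends it to the endpoints. Beyond these conventions I expect no real obstacle. This is the Pinsker step, and it uses nothing specific to the embedding.
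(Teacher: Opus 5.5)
Your proposal is correct and follows the paper's proof: apply Pinsker's inequality pointwise to $\mathrm{Bern}(\bar p)$ versus $\mathrm{Bern}(\tilde p_{\phi_d})$, use that their total variation distance is $|\bar p-\tilde p_{\phi_d}|$, and integrate over $x\sim\nu$, $(y,y')\sim\mu_x^{\otimes 2}$. Your calculus check of the binary bound via $h'(q)=(q-p)\bigl(\frac{1}{q(1-q)}-4\bigr)$, and your boundary and measurability remarks, are correct extras that the paper omits.
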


\begin{proof}
Pinsker's inequality gives $\KL(p\|q)\geq 2\operatorname{TV}(p,q)^2$ for distributions $p,q$. For Bernoulli parameters $p,q\in(0,1)$, the total variation distance is $\operatorname{TV}(\mathrm{Bern}(p),\mathrm{Bern}(q))=\tfrac{1}{2}\bigl(|p-q|+|(1-p)-(1-q)|\bigr)=|p-q|$, hence $\KL(p\|q)\geq 2(p-q)^2$. Setting $p=\bar{p}(x,y, y^\prime)$ and $q=\tilde{p}_{\phi_d}(x,y, y^\prime)$ and taking expectations on both sides yields the result.
\end{proof}

\subsection{Cycle-Induced Irrecoverability}\label{subsec:cycle}
The embedding loss captures the information lost to coarsening. But even with a perfectly informative embedding, a more subtle failure mode persists. Suppose $\phi_d$ is injective, so that $\tilde{p}_{\phi_d}=\bar{p}$. The agreement cost $\gE_{\mathrm{agr}}(\phi_d,r)$ measures whether the pooled preferences can be simultaneously rationalized by a scalar reward. When they cannot, we say the preferences are irrecoverable.

The main limitation is that scalar rewards impose transitive structure on preferences.  Define
\begin{align}\label{eq:wd}
  W_d(x,y,y') := F^{-1}(\tilde p_{\phi_d}(x,y,y')).
\end{align}
If $\phi_d$ is injective, then $\tilde p_{\phi_d}=\bar p$, and any reward of the form $r(x,y)=g(\phi(x,y))$ that fits the pooled preference must satisfy
\[
\Delta r(x,y,y') = g(\phi(x,y)) - g(\phi(x,y')) = W_d(x,y,y') = F^{-1}(\bar p(x,y,y')).
\]
The right-hand side is a difference of potentials, which implies a cyclic consistency condition. 

This phenomenon has a direct analogue in social choice theory. Condorcet's paradox~\citep{de2014essai} describes the situation where majority voting over three alternatives produces a cycle: A beats B, B beats C, but C beats A. 

To formalize when this cyclic inconsistency occurs, we introduce the following definition.
\begin{defn}
For a  triplet $(y_1,y_2,y_3)\in\gY^3$ and fixed $x\in\gX$, define the cycle sum $$ C_d(x,y_1,y_2,y_3)
  = W_d(x,y_1,y_2)+W_d(x,y_2,y_3)+W_d(x,y_3,y_1)\,.$$
\end{defn}
Note that if $W_d(x,y,y^\prime)=r(x,y)-r(x,y^\prime)$ for some reward $r$, then $C_d(x,y_1,y_2,y_3)=0$ for all $(y_1,y_2,y_3)$.  Conversely, a non-zero cycle sum prevents any scalar reward from rationalizing the preferences.

We then show that this cycle-sum quantity gives a lower bound on the agreement cost.
\begin{restatable}{thm}{cyclebound}\label{thm:cyclebound}
With $\eta_d$ and $\mathcal{R}_d$ as in Definition~\ref{def:feasible}, define $\kappa_d(x,y,y^\prime)=\eta_d(x,y,y^\prime)\bigl(1-\eta_d(x,y,y^\prime)\bigr)$ and $\kappa_d^{\min}(x,y_1,y_2,y_3)=\min\bigl\{\kappa_d(x,y_1,y_2),\,\kappa_d(x,y_2,y_3),\,\kappa_d(x,y_3,y_1)\bigr\}$.
Then, for any $r\in\mathcal{R}_d$,
\begin{align*}
  \gE_{\mathrm{agr}}(\phi_d,r)
  \geq
  \frac{1}{18}\;
  \E_{x\sim\nu}\,\E_{(y_1,y_2,y_3)\sim\mu_x^{\otimes 3}}
  \Bigl[
    \kappa_d^{\min}(x,y_1,y_2,y_3)
    \,C_d(x,y_1,y_2,y_3)^2
  \Bigr].
\end{align*}
\end{restatable}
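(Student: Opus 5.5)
The argument runs pointwise first and then aggregates: a local quadratic lower bound on each Bernoulli KL term, followed by an algebraic step that turns the three log-odds errors around a triangle into the cycle sum. Fix $x$ and a pair $(y,y')$, and write $p=\tilde p_{\phi_d}(x,y,y')$, $q=F(\Delta r(x,y,y'))$. Because $r\in\mathcal{R}_d$, both $p$ and $q$ lie in $[\eta_d,1-\eta_d]$. Set $a=F^{-1}(p)=W_d(x,y,y')$ and $b=\Delta r(x,y,y')$.

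For the pointwise bound we view $\KL(p\|q)$ as a function of the log-odds $b$. With $a$ fixed it is the Bregman divergence of the log-partition function $A(t)=\log(1+e^t)$:
\[
\KL(\mathrm{Bern}(F(a))\|\mathrm{Bern}(F(b)))=A(b)-A(a)-A'(a)(b-a).
\]
Since $A''(t)=F(t)(1-F(t))$, Taylor's theorem with integral remainder gives $\KL=\int_a^b (b-t)A''(t)\,dt$. For $t$ between $a$ and $b$, $F(t)$ lies between $p$ and $q$, so it stays in $[\eta_d,1-\eta_d]$, and hence $A''(t)\ge\eta_d(1-\eta_d)=\kappa_d$. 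It follows that $\KL(p\|q)\ge \tfrac{1}{2}\kappa_d(x,y,y')\,(W_d(x,y,y')-\Delta r(x,y,y'))^2$. This is the step where the feasibility class is genuinely needed: without it the curvature can degenerate.

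Next comes the cycle step. Fix a triple $(y_1,y_2,y_3)$ and let $e_{jk}=W_d(x,y_j,y_k)-\Delta r(x,y_j,y_k)$. The reward differences around the cycle telescope to zero, so $e_{12}+e_{23}+e_{31}=C_d(x,y_1,y_2,y_3)$. Cauchy–Schwarz then gives $e_{12}^2+e_{23}^2+e_{31}^2\ge C_d^2/3$. Combining with the pointwise bound and lower-bounding each $\kappa_d$ by $\kappa_d^{\min}$,
\[
\sum_{(j,k)\in\{(1,2),(2,3),(3,1)\}}\KL\bigl(\tilde p_{\phi_d}(x,y_j,y_k)\big\|F(\Delta r(x,y_j,y_k))\bigr)\ \ge\ \tfrac{1}{6}\,\kappa_d^{\min}\,C_d^2 .
\]

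Finally we aggregate. Draw $(y_1,y_2,y_3)\sim\mu_x^{\otimes3}$. Each of the pairs $(y_1,y_2)$, $(y_2,y_3)$, $(y_3,y_1)$ has marginal law $\mu_x^{\otimes2}$, so the expectation of the left-hand sum equals $3\,\E_{(y,y')\sim\mu_x^{\otimes2}}[\KL(\cdot)]$, and taking $\E_{x\sim\nu}$ turns this into $3\,\gE_{\mathrm{agr}}(\phi_d,r)$. Dividing by 3 gives $\gE_{\mathrm{agr}}\ge \tfrac{1}{18}\E[\kappa_d^{\min}C_d^2]$, which matches the stated constant.

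The main obstacle is the curvature bound. We need the segment between $a$ and $b$ to map under $F$ into $[\eta_d,1-\eta_d]$, and this holds because that interval is convex and contains both $p$ and $q$ by the definitions of $\eta_d$ and $\mathcal{R}_d$. We also need measurability and integrability of $W_d$; these hold whenever $\tilde p_{\phi_d}\in(0,1)$, and if that fails, $\kappa_d=0$ and the corresponding term contributes nothing.
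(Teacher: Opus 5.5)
Your proof is correct and follows essentially the same route as the paper: a pointwise strong-convexity bound $\KL\bigl(\tilde p_{\phi_d}\|F(\Delta r)\bigr)\ge \tfrac{\kappa_d}{2}(W_d-\Delta r)^2$ valid on the feasibility interval, where your Bregman/integral-remainder phrasing is equivalent to the paper's use of $f_p''(t)=F(t)(1-F(t))\ge\kappa_d$. The paper then likewise uses the vanishing cyclic sum of $\Delta r$, $(a+b+c)^2\le 3(a^2+b^2+c^2)$, $\kappa_d^{\min}\le\kappa_d$, and the fact that each of the three pairs has marginal law $\mu_x^{\otimes 2}$, to reach $1/18$; the only difference is that you combine the three pointwise bounds at the triplet level before integrating, whereas the paper first integrates to $\tfrac12\E[\kappa_d(W_d-\Delta r)^2]$ and then bounds the triplet expectation by $9$ times this, which is a cosmetic reordering.
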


Theorem~\ref{thm:cyclebound} shows that the agreement cost is lower-bounded by the minimum-weighted mean squared cycle sum. 

To interpret the bound more concretely, we connect the mean squared cycle sum to the familiar notion of Condorcet cycles.

\begin{defn}[$\delta$-Condorcet cycle]\label{def:condorcet}
Fix $\delta\in(0,\tfrac{1}{2})$. A triplet $(y_1,y_2,y_3)$ forms a positive $\delta$-Condorcet cycle under $\phi_d$ at prompt $x$ if $$\tilde{p}_{\phi_d}(x,y_1,y_2),
  \tilde{p}_{\phi_d}(x,y_2,y_3),
  \tilde{p}_{\phi_d}(x,y_3,y_1)\geq \tfrac{1}{2}+\delta\,,$$
and a negative one if each inequality is reversed ($\leq\tfrac{1}{2}-\delta$). Let $\mathcal{CC}_d^\delta$ be the union of these two events. Define the (unweighted) point-level cycle probability and the $\kappa$-weighted population-level cycle probability by
\begin{align*}
    P_d^\delta(x,y_1,y_2)
  = \ & 
  \Prob_{y_3\sim\mu_x}\left(
    \mathcal{CC}_d^\delta(x,y_1,y_2,y_3)
  \right),
  \quad
  P_d^\delta = 
  \E_{x\sim\nu}\E_{(y_1,y_2)\sim\mu_x^{\otimes 2}}\bigl[P_d^\delta(x,y_1,y_2)\bigr],
  \\
  \bar P_d^\delta
  = \ & 
  \E_{x\sim\nu}\,\E_{(y_1,y_2,y_3)\sim\mu_x^{\otimes 3}}
  \Bigl[
    \kappa_d^{\min}(x,y_1,y_2,y_3)\,\mathbf{1}_{\mathcal{CC}_d^\delta}(x,y_1,y_2,y_3)
  \Bigr]\,.
\end{align*}
Recall that  $\ell_\delta=F^{-1}(\tfrac{1}{2}+\delta) =\log\frac{1+2\delta}{1-2\delta}>0$.
\end{defn}

The population-level cycle probability $\bar P_d^\delta$ is the weighted probability that a random triplet forms a Condorcet cycle, where the weight $\kappa_d^{\min}$ 
downweights cycles whose weakest pair is near 50/50 (which is inherently ambiguous). When a triplet lies in $\mathcal{CC}_d^\delta$, each of its three pairs has a margin of at least $\delta$, so the cycle sum $C_d$ has magnitude at least $3\ell_\delta$.

With this, we lower bound the agreement cost term as follows.
\begin{restatable}{lem}{condorcetlb}\label{lem:condorcetlb}
For every $\delta\in(0,\tfrac{1}{2})$, $\inf_{r\in\mathcal{R}_d}\gE_{\mathrm{agr}}(\phi_d,r) \geq \frac{\ell_\delta^2}{2}\bar P_d^\delta$.
\end{restatable}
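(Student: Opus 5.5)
This lemma follows directly from Theorem~\ref{thm:cyclebound}. The plan is to restrict the expectation in that bound to the event $\mathcal{CC}_d^\delta$ and to lower bound $C_d^2$ on that event. Fix any $r\in\mathcal{R}_d$. Since $\kappa_d^{\min}\ge 0$ and $C_d^2\ge 0$ pointwise, we may drop the contribution of triplets outside the cycle event:
\begin{align*}
  \gE_{\mathrm{agr}}(\phi_d,r)\ \ge\ \frac{1}{18}\,\E_{x\sim\nu}\E_{(y_1,y_2,y_3)\sim\mu_x^{\otimes 3}}\Bigl[\kappa_d^{\min}\,C_d^2\,\mathbf{1}_{\mathcal{CC}_d^\delta}\Bigr].
\end{align*}

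The key pointwise step is to show that $|C_d|\ge 3\ell_\delta$ on $\mathcal{CC}_d^\delta$.

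On a positive $\delta$-cycle, each of $\tilde p_{\phi_d}(x,y_1,y_2)$, $\tilde p_{\phi_d}(x,y_2,y_3)$, $\tilde p_{\phi_d}(x,y_3,y_1)$ is at least $\tfrac12+\delta$. Since $F^{-1}$ is strictly increasing, each $W_d$ term is then at least $F^{-1}(\tfrac12+\delta)=\ell_\delta$. Hence $C_d\ge 3\ell_\delta>0$.

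On a negative cycle, each probability is at most $\tfrac12-\delta$. Using the antisymmetry $F^{-1}(1-q)=-F^{-1}(q)$, each $W_d$ term is at most $-\ell_\delta$, so $C_d\le -3\ell_\delta$.

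In either case $C_d^2\ge 9\ell_\delta^2$ on $\mathcal{CC}_d^\delta$. Substituting this into the restricted bound gives
\begin{align*}
  \gE_{\mathrm{agr}}(\phi_d,r)\ \ge\ \frac{9\ell_\delta^2}{18}\,\E\bigl[\kappa_d^{\min}\mathbf{1}_{\mathcal{CC}_d^\delta}\bigr]=\frac{\ell_\delta^2}{2}\bar P_d^\delta .
\end{align*}
This bound holds uniformly over all $r\in\mathcal{R}_d$, so taking the infimum proves the lemma.

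There are two minor points to check.

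First, $W_d$ must be finite wherever it is used. On $\mathcal{CC}_d^\delta$ the relevant $\tilde p$ values lie in $[\tfrac12+\delta,1]$ or $[0,\tfrac12-\delta]$, and values of $0$ or $1$ would make $W_d$ infinite. This is harmless. If such a value occurs, then $\eta_d=0$ for that pair, so $\kappa_d^{\min}=0$ and the term contributes nothing to either side. Alternatively, $\bar p\in(0,1)$ under the logistic model, and averaging preserves this, so the boundary values never occur.

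Second, the only nontrivial content of the argument sits in Theorem~\ref{thm:cyclebound}, which we take as given. The step that requires care is making sure the $\kappa_d^{\min}$ weight in that theorem is exactly the weight appearing in the definition of $\bar P_d^\delta$. It is, by Definition~\ref{def:condorcet}, so the constants combine as $9/18=1/2$.
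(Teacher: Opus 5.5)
Your proposal is correct and matches the paper's proof: restrict the bound of Theorem~\ref{thm:cyclebound} to $\mathcal{CC}_d^\delta$, use monotonicity of $F^{-1}$ (and, for negative cycles, the symmetry $F^{-1}(1-q)=-F^{-1}(q)$) to get $C_d^2\ge 9\ell_\delta^2$ there, and combine the constants as $9/18=1/2$. Your remark that $\tilde p_{\phi_d}\in(0,1)$, so that $W_d$ is finite, is a small check the paper leaves implicit.
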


To show how the cycle probability evolves with embedding dimension, we first make the following definition.
\begin{defn}
Define $P_\infty^\delta$ as $P_d^\delta$ but with $\bar{p}$ in place of $\tilde{p}_{\phi_d}$ in Definition~\ref{def:condorcet} (same margin conditions on the three pairs $(y_1,y_2)$, $(y_2,y_3)$, and $(y_3,y_1)$). Also define $$\bar P_{\infty}^{\inf,\delta}
=
\inf_{d\ge 0}\;
\E_{x\sim\nu}\,\E_{(y_1,y_2,y_3)\sim\mu_x^{\otimes 3}}
\Bigl[
  \kappa_d^{\min}(x,y_1,y_2,y_3)\,\mathbf{1}_{\mathcal{CC}_\infty^\delta}(x,y_1,y_2,y_3)
\Bigr]\,.$$
\end{defn}

\begin{restatable}{lem}{condorcetgrowth}\label{lem:condorcetgrowth}
Assume $\phi_0$ is constant. For every $\delta\in(0,\tfrac{1}{4})$. Then
\begin{enumerate}
    \item $\bar P_0^\delta=0$, 
    \item For all $d\geq 0$, $\bar P_d^\delta\geq \bar P_\infty^{\inf,2\delta} - \frac{3}{\delta^2}e_d$, 
    \item The map $d\mapsto \bar P_\infty^{\inf,2\delta}-\frac{3}{\delta^2}e_d$ is non-decreasing in $d$.
\end{enumerate}

\end{restatable}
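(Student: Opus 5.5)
Part 1 is immediate from the definitions. When $\phi_0$ is constant, the conditioning event in $\tilde p_{\phi_0}$ carries no information, so $\tilde p_{\phi_0}(x,y,y')=\E_{(y_0,y_0')\sim\mu_x^{\otimes 2}}[\bar p(x,y_0,y_0')]$. Because $\bar p(x,y_0,y_0')+\bar p(x,y_0',y_0)=1$ and $\mu_x^{\otimes 2}$ is exchangeable, this expectation equals $\tfrac12$. No pair can then satisfy $\tilde p\ge \tfrac12+\delta$ or $\tilde p\le\tfrac12-\delta$, so $\mathcal{CC}_0^\delta$ is empty and $\bar P_0^\delta=0$.

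Part 3 follows from Lemma~\ref{lem:monoed}. By definition $\bar P_\infty^{\inf,2\delta}$ is an infimum over all $d$, so it is a constant that does not depend on $d$. Since $e_{d+1}\le e_d$, the map $d\mapsto -\tfrac{3}{\delta^2}e_d$ is non-decreasing, and adding a constant keeps it so.

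Part 2 is the substantive step, and I would argue it by a perturbation and union-bound comparison. Fix $d$. Call a pair $(y,y')$ \emph{bad} if $|\bar p(x,y,y')-\tilde p_{\phi_d}(x,y,y')|\ge\delta$. By Markov/Chebyshev, the probability that a given pair in a random triple is bad is at most $e_d/\delta^2$. A triple $(y_1,y_2,y_3)$ has three pairs $(y_1,y_2),(y_2,y_3),(y_3,y_1)$, and each is marginally distributed as $\mu_x^{\otimes 2}$. A union bound therefore shows that the probability some pair of the triple is bad is at most $3e_d/\delta^2$.

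On the complement event, where no pair is bad, membership in $\mathcal{CC}_\infty^{2\delta}$ implies membership in $\mathcal{CC}_d^{\delta}$. Indeed, $\bar p\ge\tfrac12+2\delta$ and $|\bar p-\tilde p|<\delta$ give $\tilde p>\tfrac12+\delta$, and the negative case is symmetric. Hence
\[
\mathbf 1_{\mathcal{CC}_d^\delta}\ \ge\ \mathbf 1_{\mathcal{CC}_\infty^{2\delta}}-\mathbf 1_{\text{some pair bad}}.
\]
Multiplying by $\kappa_d^{\min}$ uses $0\le\kappa_d^{\min}\le\tfrac14\le1$. Taking expectations then gives
\[
\bar P_d^\delta\ \ge\ \E\bigl[\kappa_d^{\min}\mathbf 1_{\mathcal{CC}_\infty^{2\delta}}\bigr]-\frac{3e_d}{\delta^2}.
\]
Finally, the first term is at least its infimum over $d$, which is $\bar P_\infty^{\inf,2\delta}$. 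The hypothesis $\delta<\tfrac14$ ensures $2\delta<\tfrac12$, so the $2\delta$-cycle event is well defined.

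The main obstacle is matching the weights. The weight $\kappa_d^{\min}$ in $\bar P_d^\delta$ is built from $\tilde p_{\phi_d}$, not from $\bar p$. This is exactly why the paper's $\bar P_\infty^{\inf,2\delta}$ is defined with $\kappa_d^{\min}$ inside an infimum over $d$: it lets the weight stay the same $d$-dependent weight on both sides of the comparison. The only other care needed is in the union bound, where one must check that each cyclic pair, including $(y_3,y_1)$, has the same marginal law $\mu_x^{\otimes2}$, so that $e_d$ controls each of the three pairs.
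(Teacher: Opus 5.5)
Your proposal is correct and matches the paper's proof essentially step for step. It uses the same symmetry argument for part 1 and the same indicator inequality $\mathbf{1}_{\mathcal{CC}_\infty^{2\delta}}\le\mathbf{1}_{\mathcal{CC}_d^\delta}+\sum_{(i,j)}\mathbf{1}_{\{|\tilde p_{\phi_d}-\bar p|\ge\delta\}}$ with $\kappa_d^{\min}\le\tfrac14$, Markov's inequality and the infimum definition of $\bar P_\infty^{\inf,2\delta}$ for part 2, and Lemma~\ref{lem:monoed} for part 3; note that, as in the paper, part 3 silently requires Assumption~\ref{asmp:filtration}, which the lemma statement does not list.
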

The first condition says that a constant embedding sees no cycle, which is trivially true as $\tilde{p}_{\phi_0}=\frac{1}{2}$ on every pair. The second condition gives a direct lower bound on the weighted cycle probability that appears in Lemma~\ref{lem:condorcetlb}. Roughly, if a triplet is a population-level cycle with margin $2\delta$, it remains a cycle under $\phi_d$ unless the embedding error $e_d$ is large. The third item follows from the monotonicity of $e_d$.

\subsection{Quantitative Lower Bound Under H\"older Utilities}\label{subsec:holder}
The generic lower bound from the previous section holds for any utility functions and any embedding $\phi_d$, and it only shows that there is a tradeoff for the excess loss. However, it is not immediately clear whether the minimum of the excess loss is unique. In particular, while the lower bound of $\bar P_d^\delta$ is non-decreasing in $d$, $\bar P_d^\delta$ might not be. 

To derive more interpretable bounds we now contextualize it under two specific assumptions, a H\"{o}lder continuity condition on the annotator utilities and an $\epsilon$-separating property on the embedding.

Fix a metric $d_\gY$ on $\mathcal{Y}$, such as normalized Hamming or edit distance. We consider the following class of continuous utilities.
\begin{asmp}[$\alpha$-H\"{o}lder utilities]\label{asmp:holder}
There exist $\alpha\in(0,1]$ and $L_u\ge0$ such that for all $x\in\gX$, all $i\in\{1,\dots,n\}$, and all $y,\tilde y\in\gY$, $ |u_i(x,y)-u_i(x,\tilde y)|  \leq L_u\,d_\gY(y,\tilde y)^\alpha$. 
\end{asmp}

The parameter $\alpha\in(0,1]$ interpolates between Lipschitz ($\alpha=1$) and uniformly bounded pairwise differences ($\alpha\to0^+$). 

To measure how much $\phi_d$ coarsens the response space we use the following definition.
\begin{defn}[$\varepsilon$-separating]\label{def:eps-sep}
The embedding $\phi_d$ is $\varepsilon$-\emph{separating} if for all $x\in\gX$, $\phi_d(x,y)=\phi_d(x,\tilde y)$, then $d_\gY(y,\tilde y)\leq\varepsilon$.
\end{defn}

When $\varepsilon=0$ this reduces to injectivity. Larger $\varepsilon$ allows the embedding to identify responses that are farther apart in $d_\gY$. The embedding error $e_d$ is then controlled by the separation parameter. $\varepsilon$-separating embeddings of arbitrarily small $\varepsilon$ can be realized by finite-width ReLU MLPs under mild conditions and we defer the proof to Appendix~\ref{appendix:holder}.

\begin{restatable}{lem}{edupper}\label{lem:ed-upper}
Under Assumption~\ref{asmp:holder}, if $\phi_d$ is $\varepsilon$-separating, then $e_d \le \frac{L_u^2}{4}\varepsilon^{2\alpha}$.
\end{restatable}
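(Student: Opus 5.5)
We would prove the bound pointwise and then integrate. Fix $x$ and a pair $(y,y')$. By definition, $\tilde p_{\phi_d}(x,y,y')$ is the conditional average of $\bar p(x,y_0,y_0')$ over pairs $(y_0,y_0')$ with $\phi_d(x,y_0)=\phi_d(x,y)$ and $\phi_d(x,y_0')=\phi_d(x,y')$. The $\varepsilon$-separating property (Definition~\ref{def:eps-sep}) guarantees that every such pair satisfies $d_\gY(y_0,y)\le\varepsilon$ and $d_\gY(y_0',y')\le\varepsilon$. By Jensen's inequality (convexity of the square),
\[
(\bar p(x,y,y')-\tilde p_{\phi_d}(x,y,y'))^2\le \E\bigl[(\bar p(x,y,y')-\bar p(x,y_0,y_0'))^2\mid \phi_d(x,y_0)=\phi_d(x,y),\ \phi_d(x,y_0')=\phi_d(x,y')\bigr].
\]
So it suffices to show the uniform bound $|\bar p(x,y,y')-\bar p(x,y_0,y_0')|\le \tfrac{L_u}{2}\varepsilon^\alpha$ for all such pairs.

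For the uniform bound we work annotator by annotator. Set $a_i=u_i(x,y)-u_i(x,y')$ and $b_i=u_i(x,y_0)-u_i(x,y_0')$. Then
\[
|\bar p(x,y,y')-\bar p(x,y_0,y_0')|\le \frac1n\sum_i |F(a_i)-F(b_i)|\le \frac14\max_i|a_i-b_i|,
\]
because $F'(t)=F(t)(1-F(t))\le \tfrac14$. By the triangle inequality and Assumption~\ref{asmp:holder},
\[
|a_i-b_i|\le |u_i(x,y)-u_i(x,y_0)|+|u_i(x,y')-u_i(x,y_0')|\le 2L_u\varepsilon^\alpha.
\]
Hence the difference is at most $\tfrac{L_u}{2}\varepsilon^\alpha$, its square is at most $\tfrac{L_u^2}{4}\varepsilon^{2\alpha}$, and taking expectations over $x\sim\nu$ and $(y,y')\sim\mu_x^{\otimes2}$ gives $e_d\le \tfrac{L_u^2}{4}\varepsilon^{2\alpha}$.

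The main subtlety is measure-theoretic. The conditional expectation in $\tilde p_{\phi_d}$ is defined only $\mu_x^{\otimes2}$-almost surely as $\E[\bar p\mid\gG_d]$. So we would phrase the Jensen step as: the version of $\E[\bar p\mid \gG_d]$ is supported on a conditional law concentrated on the fiber $\{(y_0,y_0'):\phi_d(x,y_0)=\phi_d(x,y),\ \phi_d(x,y_0')=\phi_d(x,y')\}$. Alternatively, and more simply, since $\bar p$ varies by at most $\tfrac{L_u}{2}\varepsilon^\alpha$ on each fiber, any conditional average lies within that distance of $\bar p(x,y,y')$ almost surely. This avoids needing regular conditional distributions beyond the fiber structure. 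The rest of the argument is routine.
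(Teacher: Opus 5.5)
Your proposal is correct and follows essentially the same route as the paper: the $\varepsilon$-separating property bounds the distances inside each fiber, $F'\le\tfrac14$ together with H\"older continuity gives the uniform estimate $|\bar p(x,y_0,y_0')-\bar p(x,y,y')|\le\tfrac{L_u}{2}\varepsilon^\alpha$ on the conditioning event, and squaring and integrating then gives the bound on $e_d$. Your Jensen step and your remark about versions of the conditional expectation add care that the paper omits (it simply asserts that the conditional average inherits the pointwise bound), but they do not change the argument.
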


Combining Lemma~\ref{lem:ed-upper} with the generic lower bound yields the following result. 
\begin{restatable}{thm}{excess}\label{thm:excess}
Under Assumption~\ref{asmp:holder}, suppose $\phi_d$ is $\varepsilon$-separating. Then for any reward $r\in\mathcal{R}_d$ and any $\delta\in(0,\tfrac{1}{4})$, 
\[
  \gL(r)-\gL^\ast
    \geq
  \frac{1}{2}\,\ell_\delta^2\,\bar P_\infty^{\inf,2\delta}
   - 
  \left(\frac{3\,\ell_\delta^2}{2\,\delta^2}-2\right)^{+}
  \frac{L_u^2}{4\tau^2}\,\varepsilon^{2\alpha},
\]
where $(x)^{+}=\max(0,x)$. 
\end{restatable}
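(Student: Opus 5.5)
The plan is to chain the generic lower bound of Theorem~\ref{thm:mainlb} with the cycle-growth estimate of Lemma~\ref{lem:condorcetgrowth} and the H\"older control of Lemma~\ref{lem:ed-upper}. Theorem~\ref{thm:mainlb} applies to any $r\in\mathcal{R}_d$ and $\delta\in(0,\tfrac12)$, so for $\delta\in(0,\tfrac14)$ it gives
\[
\gL(r)-\gL^\ast\ \ge\ 2e_d+\tfrac{\ell_\delta^2}{2}\bar P_d^\delta .
\]
Item 2 of Lemma~\ref{lem:condorcetgrowth} requires $\delta<\tfrac14$, which is exactly the range in the statement. It gives $\bar P_d^\delta\ge \bar P_\infty^{\inf,2\delta}-\tfrac{3}{\delta^2}e_d$. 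Substituting this yields
\[
\gL(r)-\gL^\ast\ \ge\ \tfrac{\ell_\delta^2}{2}\bar P_\infty^{\inf,2\delta}-\Bigl(\tfrac{3\ell_\delta^2}{2\delta^2}-2\Bigr)e_d .
\]
Here I would note that Lemma~\ref{lem:condorcetgrowth} was stated under the hypothesis that $\phi_0$ is constant. That hypothesis is used only for item 1 and for the monotonicity in item 3. Item 2 is a pointwise statement for a fixed $d$, so it applies to the given $\phi_d$ without any nesting assumption. I would check this against the proof of that lemma.

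Next I bound the coefficient of $e_d$ by its positive part: $-c\,e_d\ge -c^{+}e_d$ because $e_d\ge0$. If $c\le0$, the term $-c\,e_d$ is nonnegative and can be dropped. If $c>0$, I use the upper bound on $e_d$. In either case $-c\,e_d\ge -c^{+}\sup e_d$.

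Lemma~\ref{lem:ed-upper} then gives $e_d\le \tfrac{L_u^2}{4}\varepsilon^{2\alpha}$ for an $\varepsilon$-separating $\phi_d$ under Assumption~\ref{asmp:holder}. Since the paper set $\tau=1$, the factor $\tfrac{1}{\tau^2}$ in the statement equals one. If general $\tau$ is kept, the utility differences enter $F$ as $(u_i-u_i')/\tau$, so the Lipschitz constant of $u\mapsto F(u/\tau)$ is $\tfrac{1}{4\tau}$. This produces $e_d\le \tfrac{L_u^2}{4\tau^2}\varepsilon^{2\alpha}$, and I would rerun the Lemma~\ref{lem:ed-upper} argument with the scaled logistic to confirm it. Combining the pieces gives the stated inequality.

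The only delicate step is justifying the use of item 2 of Lemma~\ref{lem:condorcetgrowth} for an arbitrary $\varepsilon$-separating $\phi_d$ that is not part of a filtration. The infimum over $d$ in $\bar P_\infty^{\inf,2\delta}$ makes the bound hold for the particular $\kappa_d^{\min}$ of the given embedding, so the argument should go through. The rest is algebra.
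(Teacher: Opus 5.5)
Your proposal is correct and follows the paper's proof. The paper combines the decomposition with $\gE_{\mathrm{emb}}\ge 2e_d$ and Lemma~\ref{lem:condorcetlb}, which together are exactly Theorem~\ref{thm:mainlb}. It then substitutes item 2 of Lemma~\ref{lem:condorcetgrowth} to obtain $(2-\beta)e_d+\tfrac12\ell_\delta^2\bar P_\infty^{\inf,2\delta}$ with $\beta=\tfrac{3\ell_\delta^2}{2\delta^2}$. Finally it minimizes this affine function of $e_d$ over $[0,\tfrac{L_u^2}{4\tau^2}\varepsilon^{2\alpha}]$, which is equivalent to your positive-part step.

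Your side remarks are also accurate. Item 2 does not use the constancy of $\phi_0$, and the $1/\tau^2$ factor comes from the $\tfrac{1}{4\tau}$ Lipschitz constant of the scaled logistic.
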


H\"older regularity converts the lower bound in~\eqref{eq:mainlb} into an explicit quantitative bound with three implications. First, the term $\frac{1}{2}\ell_\delta^2\bar P_\infty^{\inf,2\delta}$ is determined by the population cycle geometry of $\bar p$ and therefore defines a data-dependent lower floor that representation design alone cannot eliminate. Second, coarsening can reduce cyclic inconsistency only through an approximation error of order $\varepsilon^{2\alpha}$. Larger $\alpha$ (smoother utilities) yields a more favorable tradeoff, while smaller $\alpha$ limits the benefit of coarsening relative to the induced information loss. Third, $\delta$ specifies the cycle margin at which inconsistency is evaluated. Increasing $\delta$ restricts attention to stronger cycles while decreasing $\delta$ includes weaker cycles but increases sensitivity to representation error through the $1/\delta^2$ scaling. The choice of $\delta$ is therefore a calibration parameter tied to the margin distribution of cycles. Consequently, the minimizing representation dimension is intrinsically dataset-dependent.

\section{Joint Embedding-Reward Updates}\label{sec:joint}
In practice, embeddings and reward heads are typically trained simultaneously, and one might hope that joint training implicitly navigates the representation-rationalizability tradeoff and lands at the sweet spot of our lower bound. We show two structural facts that make this hope fragile.

At iteration $t$ we have a response distribution $\mu^{(t)}_x$ (with joint $\gD_t=\E_{x\sim\nu}(\mu_x^{(t)})^{\otimes 2}$), an embedding $\phi_d^{(t)}$ and a reward $r_t=g_t\circ\phi_d^{(t)}$ minimizing $\gL(\cdot,\gD_t)$, and the soft update for the prompt distribution $d\mu_x^{(t+1)}/d\mu_x^{(t)}(y)\propto e^{\beta r_t(x,y)}$ for a temperature $\beta>0$. Write $e_d(\mu,\phi)=\E_{x\sim\nu}\E_{(y,y^\prime)\sim\mu_x^{\otimes 2}}[(\bar p-\tilde p_\phi)^2]$ for the embedding error at $(\mu,\phi)$, $P_\infty^{2\delta}(\mu)=\Prob_{\nu\otimes\mu^{\otimes 3}}(\mathcal{CC}_\infty^{2\delta})$ for the embedding-independent cycle probability under $\mu$, and $\rho_{r,\beta}(x,y)=e^{\beta r(x,y)}/\E_{\mu_x}[e^{\beta r(x,\cdot)}]$ for the soft-update density. Define the tilt covariances
\begin{align*}
\Gamma(\mu,\phi;r,\beta) &= \E_\nu\Bigl[\Cov_{\mu_x^{\otimes 2}}\bigl((\bar p-\tilde p_\phi)^2,\,\rho_{r,\beta}(x,y)\rho_{r,\beta}(x,y^\prime)\bigr)\Bigr],\\
\Theta(\mu;r,\beta) &= \E_\nu\Bigl[\Cov_{\mu_x^{\otimes 3}}\bigl(\mathbf{1}_{\mathcal{CC}_\infty^{2\delta}},\,\rho_{r,\beta}(x,y_1)\rho_{r,\beta}(x,y_2)\rho_{r,\beta}(x,y_3)\bigr)\Bigr]\,.
\end{align*}

\begin{restatable}{prop}{jointcov}\label{prop:joint-cov}
With $\phi=\phi_d^{(t)}$ and $r_t$ measurable, $e_d(\mu^{(t+1)},\phi)-e_d(\mu^{(t)},\phi) =\Gamma_t\,,P_\infty^{2\delta}(\mu^{(t+1)})-P_\infty^{2\delta}(\mu^{(t)}) =\Theta_t$.
\end{restatable}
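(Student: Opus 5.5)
The argument is a change of measure followed by the identity $\E[fg]-\E[f]\E[g]=\Cov(f,g)$ with $\E[g]=1$. Fix $x$ and write $\rho=\rho_{r_t,\beta}(x,\cdot)$. By the soft update, $\mu_x^{(t+1)}$ has density $\rho$ with respect to $\mu_x^{(t)}$. The normalization $\E_{\mu_x^{(t)}}[e^{\beta r_t(x,\cdot)}]$ is finite and positive whenever the update is well defined; this holds for instance when $r_t\in\mathcal{R}_d$ or $r_t$ is bounded, and we assume it. Then $\E_{\mu_x^{(t)}}[\rho]=1$. For product measures the Radon--Nikodym derivative factorizes: $(\mu_x^{(t+1)})^{\otimes k}$ has density $\prod_{j=1}^k\rho(y_j)$ with respect to $(\mu_x^{(t)})^{\otimes k}$. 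By independence of the coordinates under $(\mu_x^{(t)})^{\otimes k}$, this product also has mean one.

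For the first identity, take $k=2$ and $f(x,y,y')=(\bar p-\tilde p_\phi)^2(x,y,y')$. This function is measurable and bounded in $[0,1]$. We read $e_d(\mu^{(t+1)},\phi)$ with the same fixed function $f$, i.e.\ with $\tilde p_\phi$ computed as in the statement of Proposition~\ref{prop:joint-cov} (conditioning under $\mu^{(t)}$). If the conditional expectation were recomputed under $\mu^{(t+1)}$, an extra term would appear and the clean identity would fail. This fixed-$f$ reading is what makes the covariance form exact.

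Then
\[
\E_{(\mu_x^{(t+1)})^{\otimes2}}[f]=\E_{(\mu_x^{(t)})^{\otimes2}}\bigl[f\,\rho(y)\rho(y')\bigr]
=\E_{(\mu_x^{(t)})^{\otimes2}}[f]\cdot 1+\Cov_{(\mu_x^{(t)})^{\otimes2}}\bigl(f,\rho(y)\rho(y')\bigr).
\]
Integrating over $x\sim\nu$, which is unchanged by the update, gives $e_d(\mu^{(t+1)},\phi)-e_d(\mu^{(t)},\phi)=\Gamma_t$. Fubini applies because $f$ is bounded and $\rho\otimes\rho$ is integrable.

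For the second identity, take $k=3$ and $f=\mathbf 1_{\mathcal{CC}_\infty^{2\delta}}(x,y_1,y_2,y_3)$. This event depends only on $\bar p$, so it involves no embedding and no measure-dependent conditioning; the subtlety above does not arise. The same computation gives $P_\infty^{2\delta}(\mu^{(t+1)})=P_\infty^{2\delta}(\mu^{(t)})+\Theta_t$.

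The only real obstacle is justifying that the relevant integrand is a fixed function of the triple or pair while the measure is tilted. Everything else is the elementary covariance identity.
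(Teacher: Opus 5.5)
Your second identity is correct and matches the paper's argument. The density of $(\mu_x^{(t+1)})^{\otimes 3}$ factorizes as $\rho_t(x,y_1)\rho_t(x,y_2)\rho_t(x,y_3)$ with mean one, and $\mathbf{1}_{\mathcal{CC}_\infty^{2\delta}}$ depends only on $\bar p$.

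The first identity has a genuine gap. $e_d(\mu^{(t+1)},\phi)$ is the embedding error at $(\mu^{(t+1)},\phi)$. Its $\tilde p_\phi$ is therefore $\E[\bar p\mid\mathcal{G}]$, with $\mathcal{G}=\sigma(\phi(x,Y),\phi(x,Y'))$, computed under $(\mu_x^{(t+1)})^{\otimes 2}$. This is the quantity that enters the decomposition and the lower bounds at round $t+1$.

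You sidestep this by freezing $\tilde p_\phi$ at its round-$t$ value. You also assert that under the natural reading ``an extra term would appear and the clean identity would fail.'' That assertion is false under the proposition's hypothesis, which you never use: $r_t=g_t\circ\phi_d^{(t)}$ is $\sigma(\phi)$-measurable, and this is what ``$r_t$ measurable'' refers to.

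Consequently the pair density $D=\rho_t(x,y)\rho_t(x,y')=Z_x^{-2}e^{\beta(g_t(\phi(x,y))+g_t(\phi(x,y')))}$ is strictly positive and $\mathcal{G}$-measurable. The abstract Bayes formula then gives
\[
\E_{\mu^{(t+1)}}\bigl[\bar p\mid\mathcal{G}\bigr]=\frac{\E_{\mu^{(t)}}\bigl[\bar p\,D\mid\mathcal{G}\bigr]}{\E_{\mu^{(t)}}\bigl[D\mid\mathcal{G}\bigr]}=\E_{\mu^{(t)}}\bigl[\bar p\mid\mathcal{G}\bigr]
\]
almost surely, since the two measures are equivalent. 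So $\tilde p_\phi$ is invariant under the soft update, and your ``extra term'' is identically zero.

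This invariance is exactly what the paper proves via Proposition~\ref{prop:emb-gamma}. The tilt is constant on level sets of $\phi$, so the conditional law of $Y$ given $\phi(x,Y)$ is the same under $\mu_x^{(t)}$ and $\mu_x^{(t+1)}$. With that step added, your fixed-$f$ change-of-measure computation proves the statement as intended. Without it, you have only proved an identity for a redefined quantity that is not the round-$(t+1)$ embedding error.

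A minor side point: $r_t\in\mathcal{R}_d$ does not by itself make $r_t$ bounded, because $\log\frac{1-\eta_d}{\eta_d}$ need not be uniformly bounded. Finiteness of $Z_x$ should simply be assumed, as you eventually do.
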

Proposition~\ref{prop:joint-cov} says the two quantities driving our static lower bound from Lemma~\ref{lem:condorcetlb} and Lemma~\ref{lem:condorcetgrowth}, namely the embedding error $e_d$ and the cycle probability $P_\infty^{2\delta}$, both shift across rounds via reward-tilt covariances of the same form, differing only in the order of the $\rho$-product. Round-by-round optimization fits $r_t$ to $\gD_t$ but does not control the sign of the resulting covariances $\Gamma_t,\Theta_t$, so neither sequence is guaranteed to decrease, and the joint update has no built-in mechanism for shrinking the lower-bound floor.
\begin{restatable}{prop}{jointfixed}\label{prop:joint-fixed}
At any fixed point $(\mu^\ast,\phi^\ast,r^\ast)$ of the joint update with $\phi^\ast$ optimal on $\mu^\ast$ and $r^\ast\in\mathcal{R}_d(\phi^\ast)$ loss-optimal on $\mu^\ast$, both $\Gamma(\mu^\ast,\phi^\ast;r^\ast,\beta)=0$ and $\Theta(\mu^\ast;r^\ast,\beta)=0$.
\end{restatable}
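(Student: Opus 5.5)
The plan is to read ``fixed point'' literally: the state after one joint update equals the state before, i.e.\ $\mu^{(t+1)}=\mu^{(t)}=\mu^\ast$, and the re-optimized embedding and reward can again be taken to be $\phi^\ast$ and $r^\ast$. Proposition~\ref{prop:joint-cov} then gives the result directly. Its two difference identities hold for any measurable reward and any fixed embedding. We apply them at $(\mu^{(t)},\phi,r_t)=(\mu^\ast,\phi^\ast,r^\ast)$, which is legitimate because $r^\ast$ is loss-optimal on $\mu^\ast$ and $\phi^\ast$ is optimal, exactly the roles played by $r_t$ and $\phi_d^{(t)}$ there. This yields
\[
\Gamma(\mu^\ast,\phi^\ast;r^\ast,\beta)=e_d(\mu^{(t+1)},\phi^\ast)-e_d(\mu^\ast,\phi^\ast),\qquad
\Theta(\mu^\ast;r^\ast,\beta)=P_\infty^{2\delta}(\mu^{(t+1)})-P_\infty^{2\delta}(\mu^\ast).
\]
At a fixed point $\mu^{(t+1)}=\mu^\ast$, so both right-hand sides vanish.

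The real work is confirming that the fixed-point condition forces $\mu^{(t+1)}_x=\mu^\ast_x$ for $\nu$-a.e.\ $x$ as measures, not merely that some summary statistic is stationary. By the soft update,
\[
d\mu^{(t+1)}_x/d\mu^\ast_x=\rho_{r^\ast,\beta}(x,\cdot),
\]
so stationarity of $\mu$ means $\rho_{r^\ast,\beta}(x,y)=1$ for $\mu^\ast_x$-a.e.\ $y$. Since $\beta>0$, this says $r^\ast(x,\cdot)$ is $\mu^\ast_x$-a.s.\ constant, which is the ``reward constant on the data'' remark in the introduction. It gives a second, self-contained route that does not depend on the bookkeeping in Proposition~\ref{prop:joint-cov}. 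Substitute $\rho\equiv 1$ into the definitions of $\Gamma$ and $\Theta$. Each covariance is then taken against an a.s.\ constant product $\rho(x,y)\rho(x,y')$ or $\rho(x,y_1)\rho(x,y_2)\rho(x,y_3)$, and the covariance of any integrable variable with an a.s.\ constant is zero. Integrability holds because $(\bar p-\tilde p_\phi)^2\in[0,1]$ and the cycle indicator is bounded. Averaging over $\nu$ then gives $\Gamma=\Theta=0$.

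The main obstacle is that $r^\ast$ is only determined $\mu^\ast_x$-a.e. Any statement about $\rho$ must therefore be made almost surely, and we should check that $\rho_{r^\ast,\beta}$ is well defined, meaning $\E_{\mu_x}[e^{\beta r^\ast}]<\infty$. This follows from membership in $\mathcal{R}_d(\phi^\ast)$, which bounds $|\Delta r^\ast|$ by $\log\frac{1-\eta_d}{\eta_d}$ pairwise. I would also state explicitly that the fixed-point definition includes invariance of $\mu$, with the optimality of $\phi^\ast$ and $r^\ast$ serving only to make $(\phi^\ast,r^\ast)$ a legitimate output of the update step. If that invariance is in doubt, the fallback is to argue it from the triple being unchanged, which includes $\mu$.
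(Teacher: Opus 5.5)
Your proposal is correct. Its second, self-contained route is exactly the paper's proof: invariance of $\mu^\ast$ under the soft update forces $\rho_{r^\ast,\beta}=1$ $\mu^\ast_x$-a.s., so both covariances are taken against an a.s.\ constant and vanish. The first route through Proposition~\ref{prop:joint-cov} also works, but the $\Gamma$ identity there needs $r^\ast$ to factor through $\phi^\ast$, not merely to be measurable; this holds because $r^\ast\in\mathcal{R}_d(\phi^\ast)$.

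One small correction: membership in $\mathcal{R}_d(\phi^\ast)$ does not by itself give $\E_{\mu_x}[e^{\beta r^\ast}]<\infty$. The pairwise bound $\log\frac{1-\eta_d}{\eta_d}$ is not uniform, since $\eta_d$ can approach $0$. Finiteness of the normalizer is instead presupposed by the soft update being well defined at the fixed point.
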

At a fixed point, invariance of $\mu^\ast$ under the soft-update forces $\rho^\ast=1$ on $\mathrm{supp}(\mu^\ast)$, which means $r^\ast$ is constant on the data and provides no preference signal, and both covariances then vanish trivially. Either the joint dynamics keep moving without converging, or they stabilize only at a degenerate point at which the reward stops ranking responses. In either case, they do not implicitly converge to the sweet spot of the static tradeoff.

\section{Experiments}
We evaluate our theoretical results on both synthetic and real preference data. For the synthetic experiments, we consider a finite item set with a single prompt $x$ and sample unordered pairs uniformly. There are $K$ annotator types. Each type $k$ follows a logistic pairwise model $P_k(y\succ y^\prime\mid x)=F(U_k(y)-U_k(y^\prime))$, and $\bar{p} = \frac{1}{K}\sum_k P_k$ is the arithmetic mean. The utility functions have three components, a transitive ranking component, a cyclic class component that varies across annotator types, and idiosyncratic noise. When cyclic heterogeneity is present, pooling across types yields a non-Bradley-Terry $\bar{p}$. The embedding $\phi_d$ is a random projection into binary codes of length $D_{\max}$, retaining the first $d$ coordinates. This satisfies Assumption~\ref{asmp:filtration} by construction. We report Monte Carlo estimates (for synthetic data) and empirical estimates based on the full dataset (for real datasets) of $\gE_{\mathrm{emb}}(\phi_d)$ and $\gE_{\mathrm{agr}}(\phi_d,r)$, where $r$ is a Bradley-Terry model with one scalar per embedding group (responses sharing the same $\phi_d$ value).  We compare the result for both fitting a explicit reward (which we refer to as RLHF) and for extracting the implicit reward from DPO (which we just refer to as DPO here). More implementation details are given in Appendix~\ref{appendix:exp}.

We then evaluate on three real preference datasets: Jester \cite{eigentaste}, Sushi \cite{kamishima}, and MT-Bench \cite{mtbench}. Jester contains individual ratings of jokes on a continuous scale; we construct pairwise preferences by comparing ratings. Sushi provides ranked preferences over sushi types from multiple users, which we convert to pairwise win probabilities. MT-Bench consists of multi-turn dialogue preferences from human evaluations. For each dataset we embed items, namely joke captions, sushi types, or dialogue responses, using a pretrained sentence transformer.

Figure~\ref{fig:minilm_main} shows the excess loss decomposition as a function of embedding dimension for both RLHF and DPO with the pretrained embedding \texttt{all-MiniLM-L6-v2} \citep{reimers-2019-sentence-bert}. The top row gives RLHF on synthetic, Jester, and Sushi, and the bottom row gives the corresponding DPO results. In each panel, the embedding term decreases with dimension while the agreement term increases, consistent with Proposition~\ref{prop:ed} and Theorem~\ref{thm:cyclebound}. The total excess loss often has a minimum at an intermediate dimension. We provide more experimental results with pretrained embedding \texttt{mpnet} \citep{song2020mpnet} and with the MT-Bench dataset in Appendix~\ref{appendix:exp}.
\begin{figure}[h]
    \centering
    \includegraphics[width=0.32\linewidth]{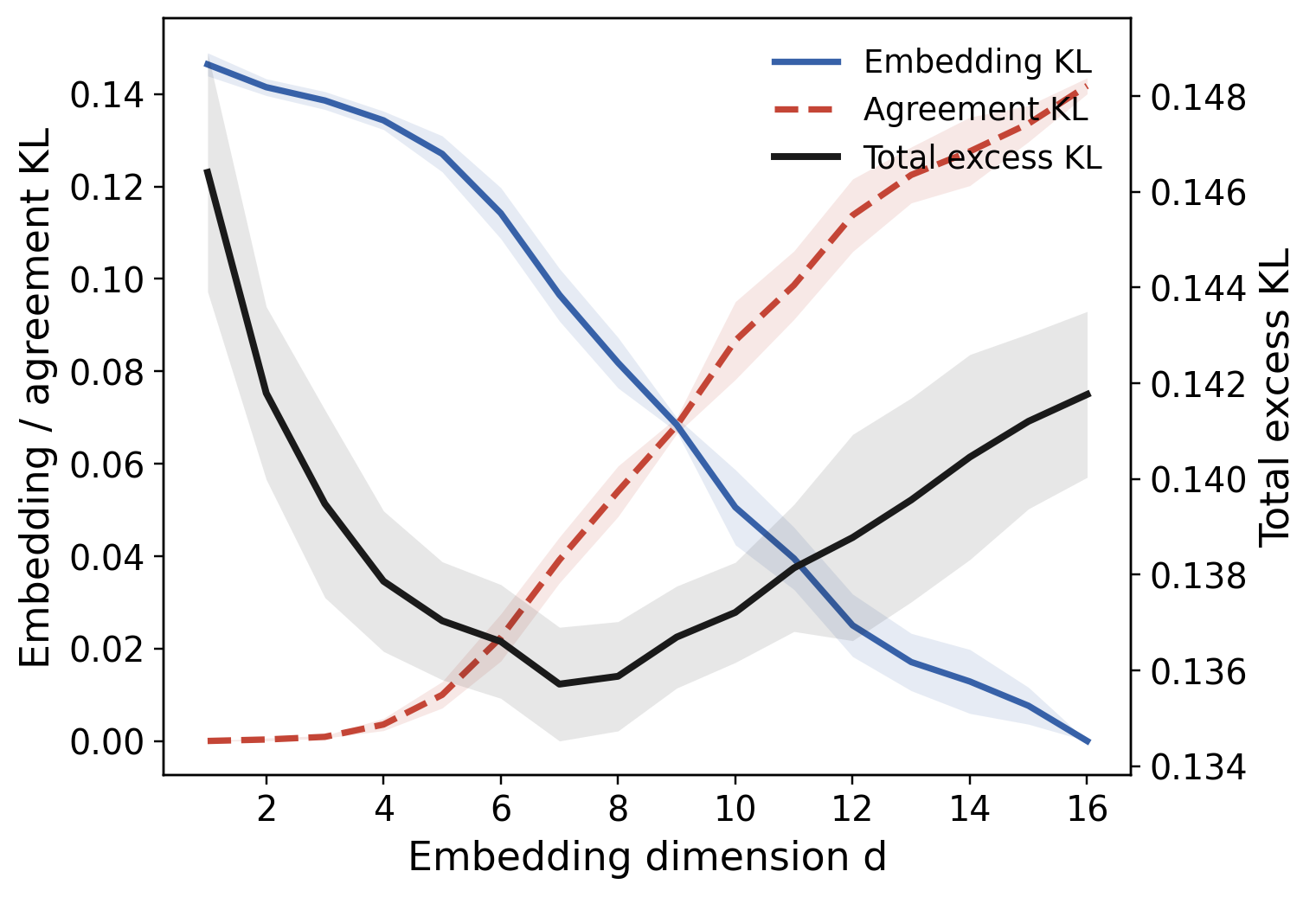}
    \includegraphics[width=0.32\linewidth]{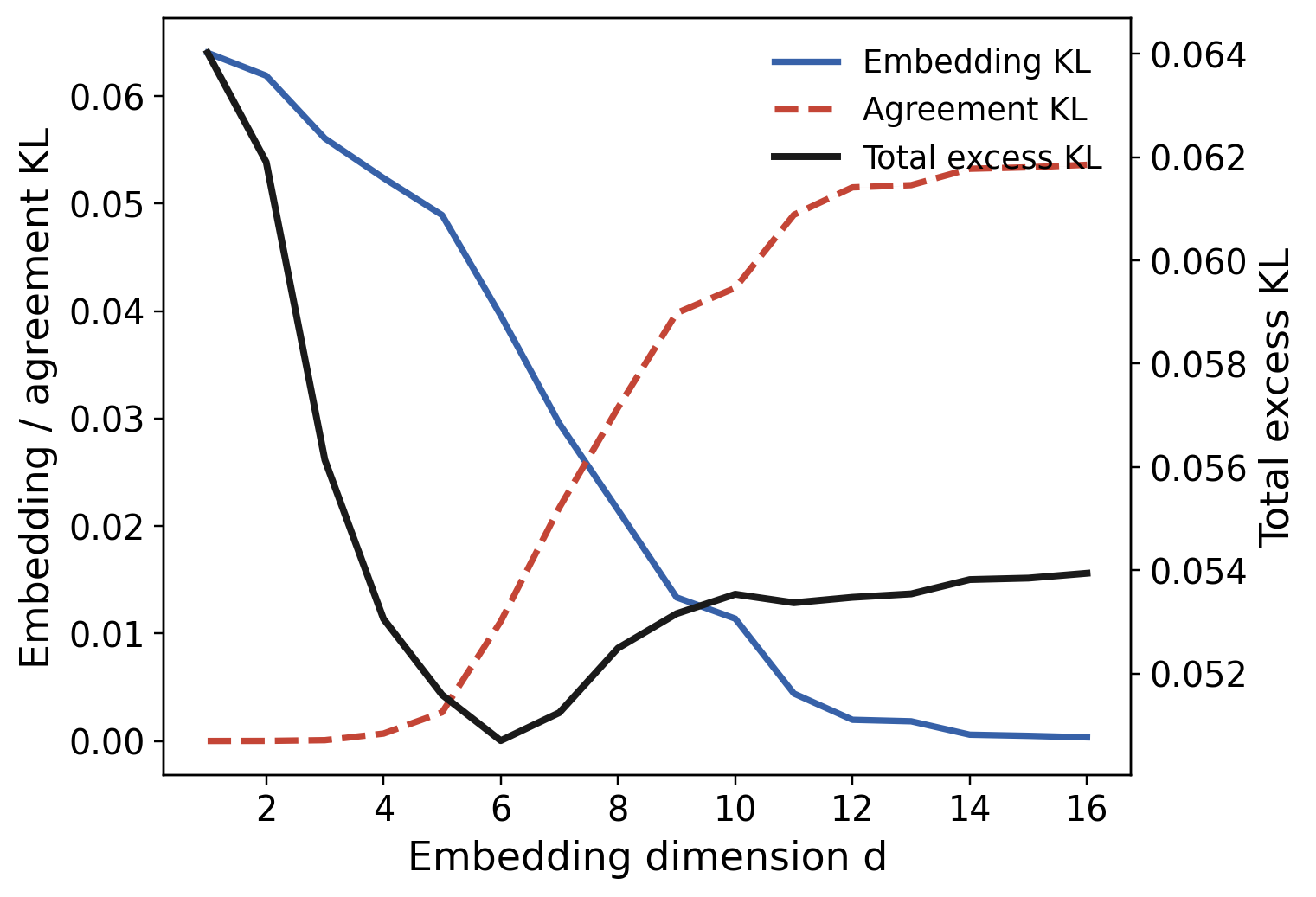}
    \includegraphics[width=0.32\linewidth]{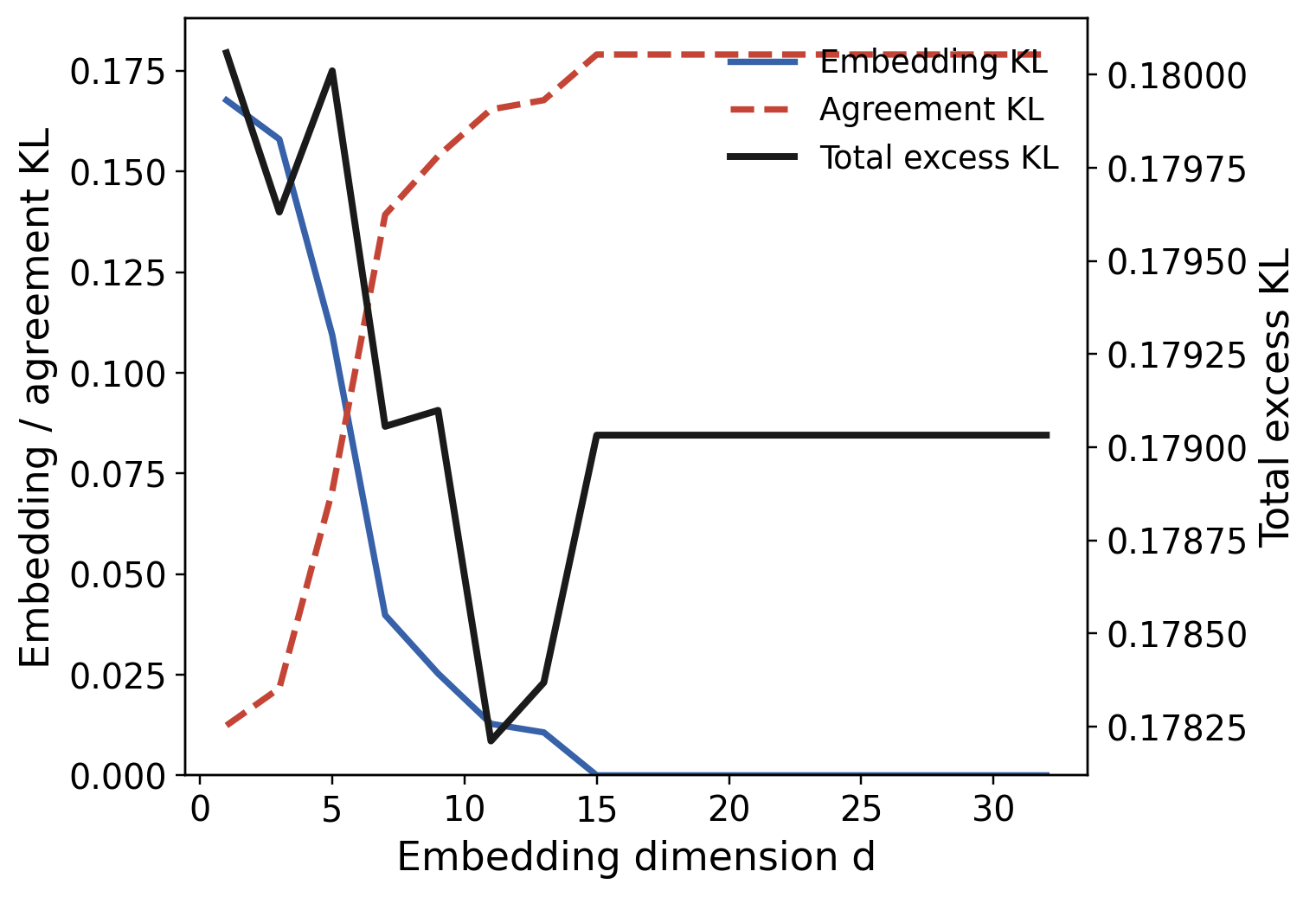}
    \includegraphics[width=0.32\linewidth]{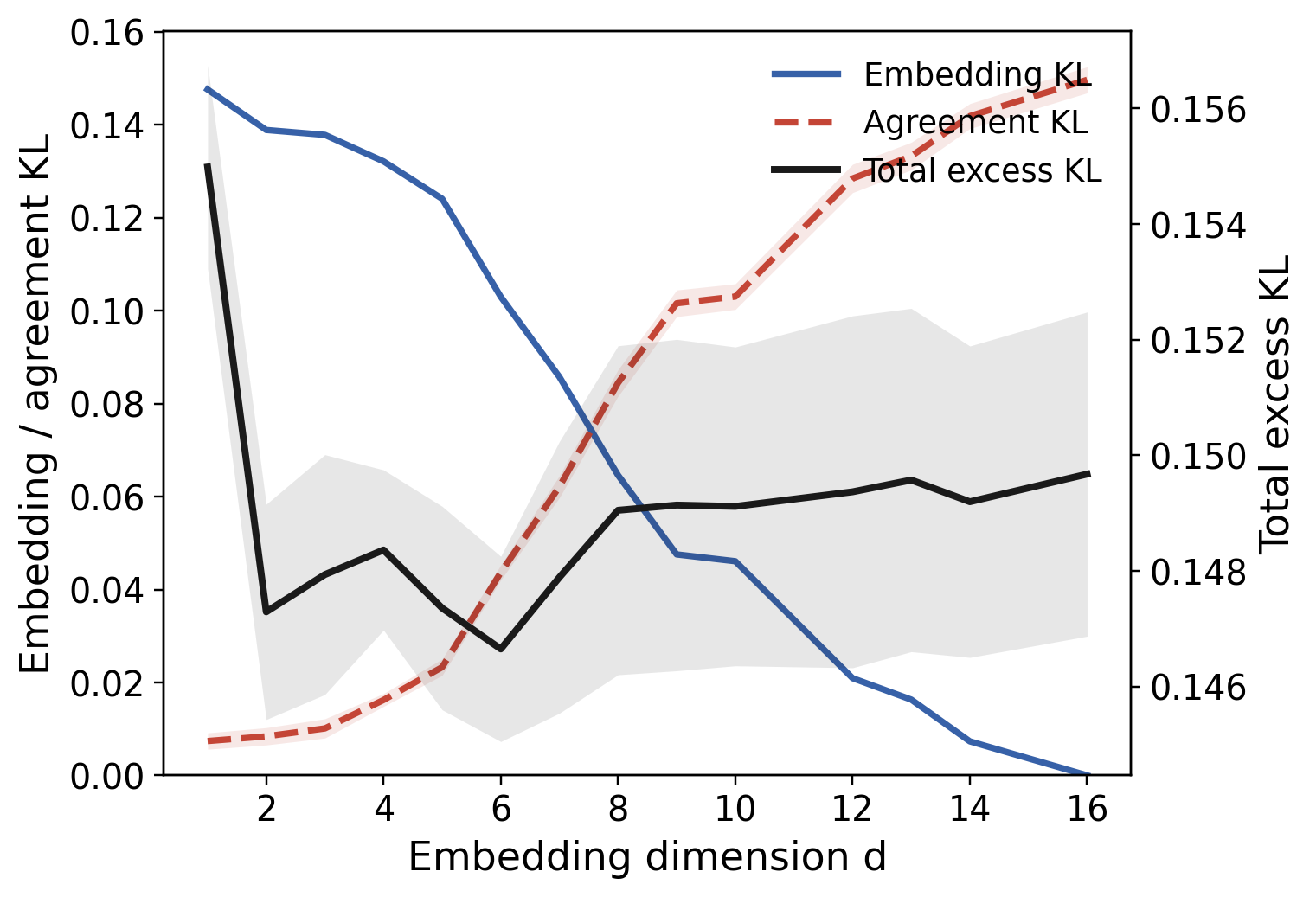}
    \includegraphics[width=0.32\linewidth]{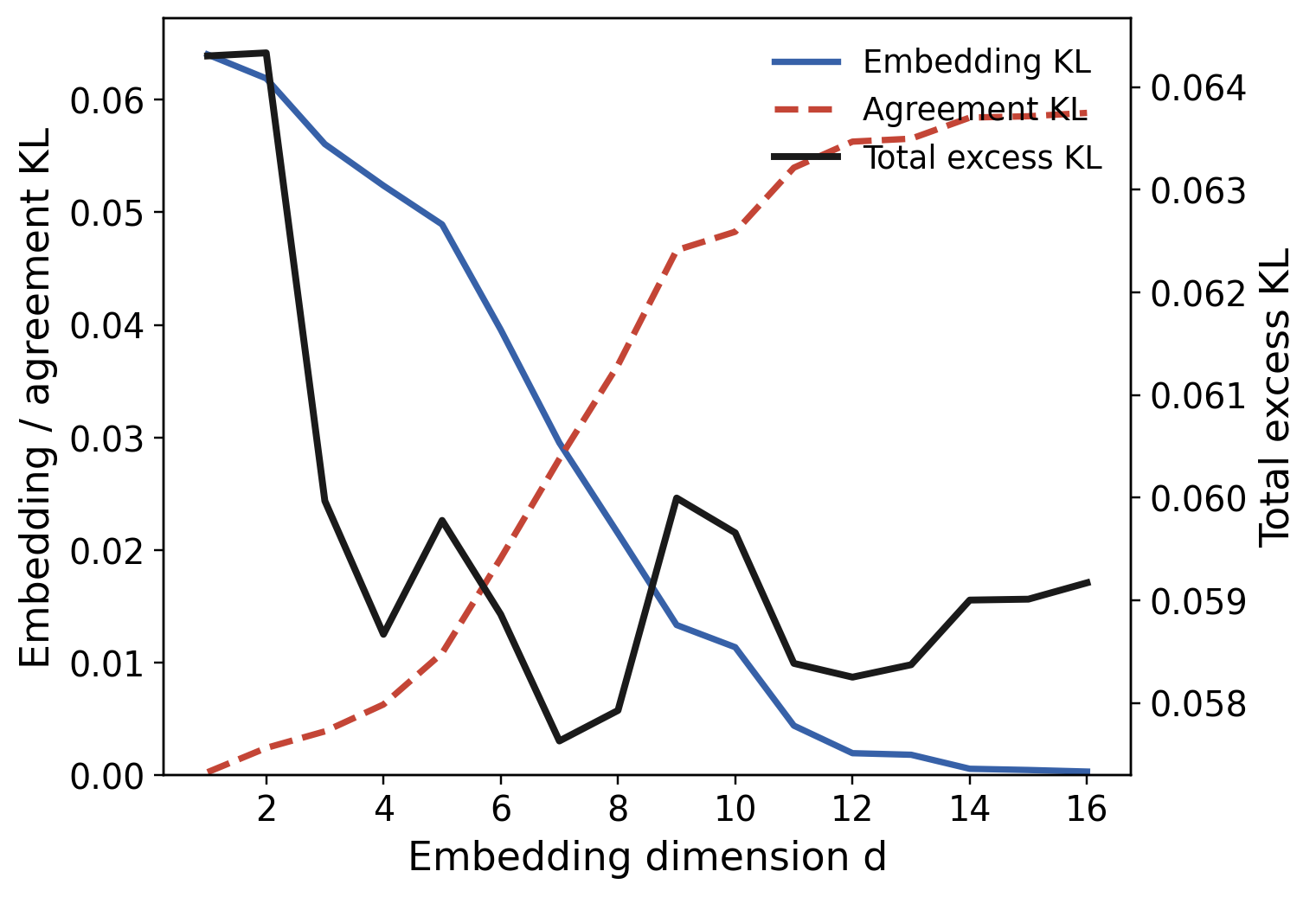}
    \includegraphics[width=0.32\linewidth]{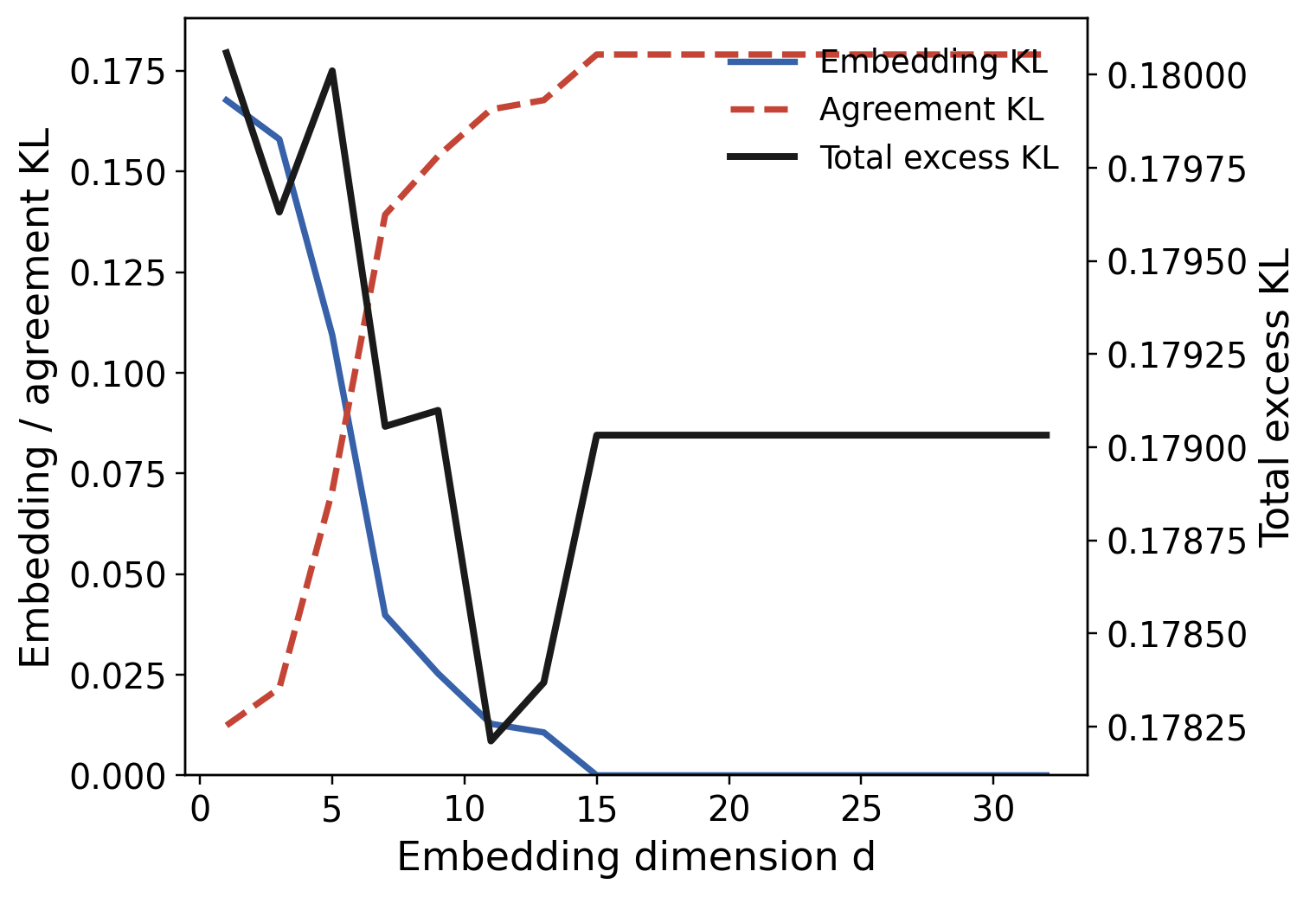}
    \caption{Top row: RLHF on synthetic (left), Jester (middle), and Sushi (right). Bottom row: DPO on synthetic (left), Jester (middle), and Sushi (right).}
    \label{fig:minilm_main}
\end{figure}

Figure~\ref{fig:visualization} provides a geometric visualization of this tradeoff. The top row shows the top 1, 2, and 3 principal components of the embedding used for reward model training on the synthetic data. The dot intensity represents how many data points are mapped to each representation, and the red area highlights the region where points form a Condorcet cycle (Definition~\ref{def:condorcet}). As the embedding dimension increases, it provides a finer-grained representation of the points, and the Condorcet cycle region grows correspondingly.

\begin{figure}[tbh]
    \centering
    \includegraphics[width=\linewidth]{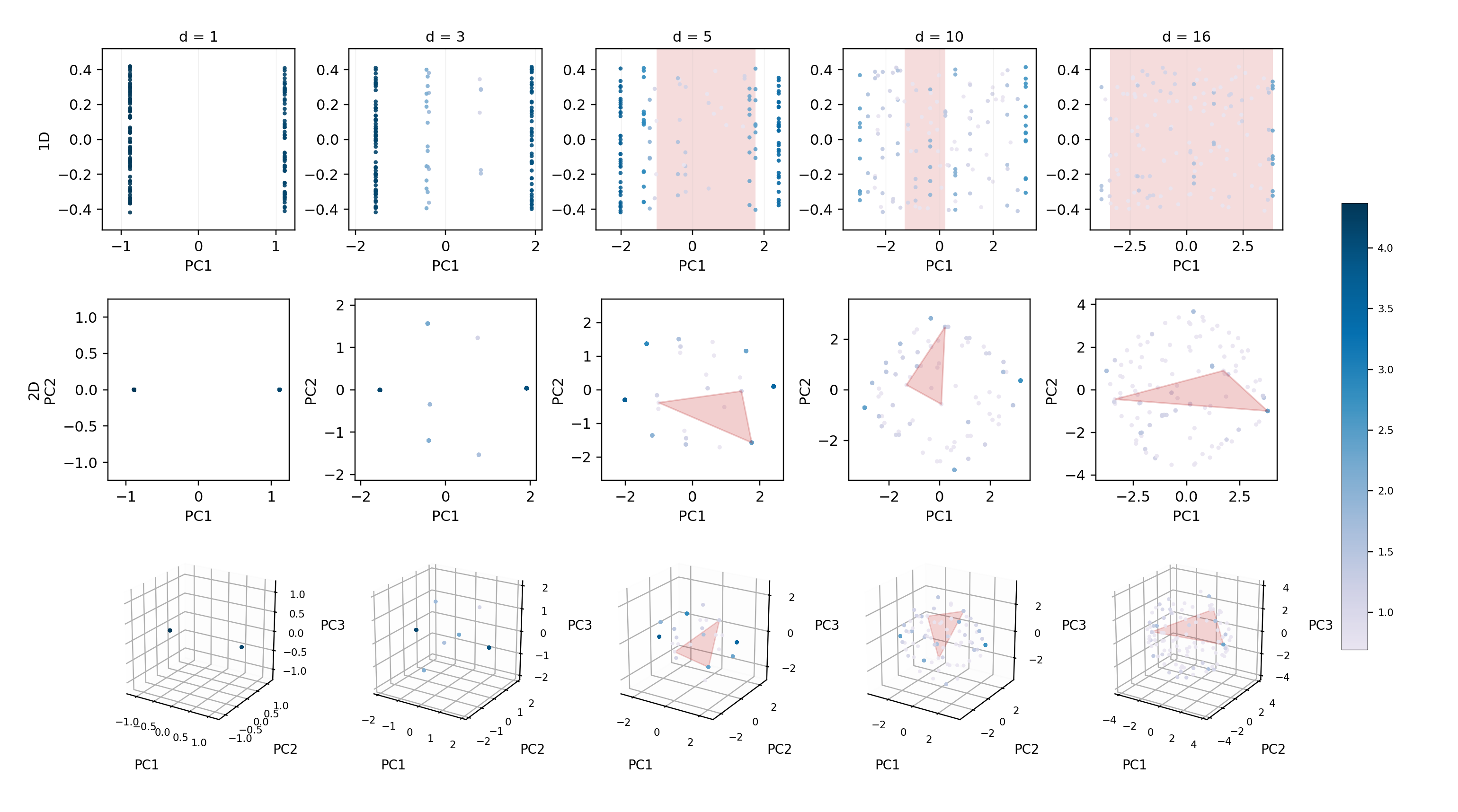}
    \caption{Visualization for the RLHF on the synethtic data. Top 1, 2, and 3 principal components of the embedding used for reward model training on the synthetic data. The intensity of the dots represents how many data points are mapped to each representation, and the red area highlights the region where points form a Condorcet cycle (Definition~\ref{def:condorcet}).} 
    \label{fig:visualization}
\end{figure}

\section{Conclusion}
We show that in RLHF, representation is part of the aggregation problem, not just a modeling detail. The learned embedding determines which preference distinctions are visible to the reward model, inducing a fundamental representation-rationalizability tradeoff. We prove a lower bound on reward excess loss to quantify this tradeoff, where richer representations reduce embedding loss but can increase aggregation inconsistency through cycle structure. Our result extends to DPO and to joint embedding--reward training. In particular, joint updates do not automatically balance  the tradeoff, and can degrade as training proceeds. Experiments on synthetic and real preference datasets confirm our theoretical findings.

\paragraph{Broader Impact}
Our work is mostly theoretical and does not introduce new models, datasets, or deployable systems. To the extent it has societal implications, it clarifies a fundamental limitation of scalar reward models under heterogeneous preferences and may lead to better reward-model design. We do not foresee any direct negative societal impact from the analysis itself.

\section*{Acknowledgments}

Jing Dong is supported by a Distinguished Post Doctoral Fellowship from the Vector Institute.  We acknowledge funding from the Canada CIFAR AI Chair program, discovery grants to Poupart and Yu from the Natural Sciences and Engineering Research Council of Canada and a grant from IITP \& MSIT of Korea (No. RS-2024-00457882, AI Research Hub Project). Computational resources used in preparing this research were provided, in part, by the Province of Ontario, the Government of Canada through CIFAR, and companies sponsoring the Vector Institute https://vectorinstitute.ai/about/current-partners/.

\bibliography{ref}

@inproceedings{Ouyangetal22,
  title     = {Training language models to follow instructions with human feedback},
  author    = {Long Ouyang and Jeffrey Wu and Xu Jiang and Diogo Almeida and Carroll Wainwright and Pamela Mishkin and Chong Zhang and Sandhini Agarwal and Katarina Slama and Alex Gray and John Schulman and Jacob Hilton and Fraser Kelton and Luke Miller and Maddie Simens and Amanda Askell and Peter Welinder and Paul Christiano and Jan Leike and Ryan Lowe},
  booktitle = {Advances in Neural Information Processing Systems},
  year      = {2022},
}

@article{bradley1952rank,
  title={Rank analysis of incomplete block designs: I. the method of paired comparisons},
  author={Bradley, Ralph Allan and Terry, Milton E},
  journal={Biometrika},
  volume={39},
  number={3/4},
  pages={324--345},
  year={1952},
  publisher={JSTOR}
}

@article{plackett1975analysis,
  title={The analysis of permutations},
  author={Plackett, Robin L},
  journal={Journal of the Royal Statistical Society Series C: Applied Statistics},
  volume={24},
  number={2},
  pages={193--202},
  year={1975},
  publisher={Oxford University Press}
}

@book{luce1959individual,
  title={Individual choice behavior: A theoretical analysis},
  author={Luce, R. Duncan},
  year={1959},
  publisher={Wiley}
}

@article{mcfadden1973conditional,
  title={Conditional Logit Analysis of Qualitative Choice Behavior},
  author={Mcfadden, D},
  journal={Frontier in Econometrics},
  year={1973},
  publisher={Academic Press}
}

@article{song2020mpnet,
  title={{MPN}et: Masked and permuted pre-training for language understanding},
  author={Song, Kaitao and Tan, Xu and Qin, Tao and Lu, Jianfeng and Liu, Tie-Yan},
  journal={Advances in Neural Information Processing Systems},
  volume={33},
  pages={16857--16867},
  year={2020}
}

@inproceedings{reimers-2019-sentence-bert,
  title = "Sentence-{BERT}: Sentence Embeddings using Siamese {BERT}-Networks",
  author = "Reimers, Nils and Gurevych, Iryna",
  booktitle = "Proceedings of the Conference on Empirical Methods in Natural Language Processing",
  year = "2019",
}

@article{eigentaste,
  author  = {Goldberg, Ken and Roeder, Theresa and Gupta, Dhruv and Perkins, Chris},
  title   = {Eigentaste: A Constant Time Collaborative Filtering Algorithm},
  journal = {Information Retrieval},
  year    = {2001},
  volume  = {4},
  number  = {2},
  pages   = {133--151}
}

@article{gibbard1973manipulation,
  title={Manipulation of voting schemes: a general result},
  author={Gibbard, Allan},
  journal={Econometrica},
  pages={587--601},
  year={1973},
  publisher={JSTOR}
}

@inproceedings{conitzerposition24,
  title={Position: Social Choice Should Guide {AI} Alignment in Dealing with Diverse Human Feedback},
  author={Conitzer, Vincent and Freedman, Rachel and Heitzig, Jobst and Holliday, Wesley H and Jacobs, Bob M and Lambert, Nathan and Moss{\'e}, Milan and Pacuit, Eric and Russell, Stuart and Schoelkopf, Hailey and Emanuel Tewolde and William S. Zwicker},
  booktitle={Forty-first International Conference on Machine Learning},
  year={2024}
}

@article{satterthwaite1975strategy,
  title={Strategy-proofness and {A}rrow's conditions: Existence and correspondence theorems for voting procedures and social welfare functions},
  author={Satterthwaite, Mark Allen},
  journal={Journal of economic theory},
  volume={10},
  number={2},
  pages={187--217},
  year={1975},
  publisher={Elsevier}
}

@inproceedings{azar2024general,
  title={A general theoretical paradigm to understand learning from human feedback},
  author={Azar, Mohammad Gheshlaghi and Rowland, Mark and Piot, Bilal and Guo, Daniel and Calandriello, Daniele and Valko, Michal and Munos, R{\'e}mi},
  booktitle={International Conference on Artificial Intelligence and Statistics (AISTATS)},
  year={2024}
}

@inproceedings{costereward,
  title={Reward Model Ensembles Help Mitigate Overoptimization},
  author={Coste, Thomas and Anwar, Usman and Kirk, Robert and Krueger, David},
  booktitle={The Twelfth International Conference on Learning Representations},
  year={2024}
}

@article{yang2024regularizing,
  title={Regularizing hidden states enables learning generalizable reward model for {LLM}s},
  author={Yang, Rui and Ding, Ruomeng and Lin, Yong and Zhang, Huan and Zhang, Tong},
  journal={Advances in Neural Information Processing Systems},
  volume={37},
  pages={62279--62309},
  year={2024}
}

@article{rafailov2024scaling,
  title={Scaling laws for reward model overoptimization in direct alignment algorithms},
  author={Rafailov, Rafael and Chittepu, Yaswanth and Park, Ryan and Sikchi, Harshit Sushil and Hejna, Joey and Knox, Brad and Finn, Chelsea and Niekum, Scott},
  journal={Advances in Neural Information Processing Systems},
  volume={37},
  pages={126207--126242},
  year={2024}
}

@inproceedings{munos2024nash,
  title={Nash learning from human feedback},
  author={Munos, R{\'e}mi and Valko, Michal and Calandriello, Daniele and Azar, Mohammad Gheshlaghi and Rowland, Mark and Guo, Zhaohan Daniel and Tang, Yunhao and Geist, Matthieu and Mesnard, Thomas and Fiegel, C{\^o}me and Andrea Michi and  Marco Selvi and Sertan Girgin and Nikola Momchev and Olivier Bachem and Daniel J. Mankowitz and Doina Precup and Bilal Piot},
  booktitle={Forty-first International Conference on Machine Learning},
  year={2024}
}

@article{kamishima,
  author  = {Kamishima, Toshihiro},
  title   = {Nantonac Collaborative Filtering: Recommendation Based on Order Responses},
  journal = {Knowledge Discovery and Data Mining},
  year    = {2003},
  pages   = {573--579}
}

@inproceedings{zhang2025beyond,
  title={Beyond {B}radley-{T}erry Models: A General Preference Model for Language Model Alignment},
  author={Zhang, Yifan and Zhang, Ge and Wu, Yue and Xu, Kangping and Gu, Quanquan},
  booktitle={International Conference on Machine Learning},
  pages={76939--76965},
  year={2025},
}

@inproceedings{zhu2023principled,
  title={Principled reinforcement learning with human feedback from pairwise or k-wise comparisons},
  author={Zhu, Banghua and Jordan, Michael and Jiao, Jiantao},
  booktitle={International Conference on Machine Learning},
  pages={43037--43067},
  year={2023},
}

@unpublished{sun2024rethinking,
  title={Rethinking {B}radley-{T}erry models in preference-based reward modeling: Foundations, theory, and alternatives},
  author={Sun, Hao and Shen, Yunyi and Ton, Jean-Francois},
  note={arXiv preprint arXiv:2411.04991},
  year={2024}
}

@inproceedings{gao2023scaling,
  title={Scaling laws for reward model overoptimization},
  author={Gao, Leo and Schulman, John and Hilton, Jacob},
  booktitle={International Conference on Machine Learning},
  pages={10835--10866},
  year={2023},
}

@inproceedings{zhu2024iterative,
  title={Iterative data smoothing: mitigating reward overfitting and overoptimization in {RLHF}},
  author={Zhu, Banghua and Jordan, Michael I and Jiao, Jiantao},
  booktitle={Proceedings of the 41st International Conference on Machine Learning},
  pages={62405--62428},
  year={2024}
}

@article{casper2023open,
  title={Open Problems and Fundamental Limitations of Reinforcement Learning from Human Feedback},
  author={Casper, Stephen and Davies, Xander and Shi, Claudia and Krendl Gilbert, Thomas and Scheurer, J{\'e}r{\'e}my and Rando Ramirez, Javier and Freedman, Rachel and Korbak, Tomasz and Lindner, David and Freire, Pedro and Tony Wang and Samuel Marks and Charbel-Rapha{\"e}l Segerie and Micah Carroll and Andi Peng and Phillip Christoffersen and Mehul Damani and Stewart Slocum and Usman Anwar and Anand Siththaranjan and Max Nadeau and Eric J. Michaud and Jacob Pfau and Dmitrii Krasheninnikov and Xin Chen and Lauro Langosco and Peter Hase and Erdem Bıyık and Anca Dragan and David Krueger and Dorsa Sadigh and Dylan Hadfield-Menell},
  journal={Transactions on Machine Learning Research},
  year={2023},
  publisher={OpenReview}
}

@unpublished{wang2024secrets,
  title={Secrets of {RLHF} in large language models part {II}: Reward modeling},
  author={Wang, Binghai and Zheng, Rui and Chen, Lu and Liu, Yan and Dou, Shihan and Huang, Caishuang and Shen, Wei and Jin, Senjie and Zhou, Enyu and Shi, Chenyu and Songyang Gao and Nuo Xu and Yuhao Zhou and Xiaoran Fan and Zhiheng Xi and Jun Zhao and Xiao Wang and Tao Ji and Hang Yan and Lixing Shen and Zhan Chen and Tao Gui and Qi Zhang and Xipeng Qiu and Xuanjing Huang and Zuxuan Wu and Yu-Gang Jiang},
  note={arXiv preprint arXiv:2401.06080},
  year={2024}
}

@book{arrow2012social,
  title={Social choice and individual values},
  author={Arrow, Kenneth J.},
  edition = {2nd},
  year={1963},
  publisher={Yale University Press}
}

@article{mtbench,
  author  = {Zheng, Lianmin and Chiang, Wei-Lin and Sheng, Ying and Zhuang, Siyuan and Wu, Zhanghao and Zhuang, Yonghao and Lin, Zi and Li, Zhuohan and Li, Dacheng and Xing, Eric P. and Zhang, Hao and Gonzalez, Joseph E. and Stoica, Ion},
  title   = {Judging {LLM}-as-a-{J}udge with {MT}-{B}ench and {C}hatbot {A}rena},
  journal = {Advances in Neural Information Processing Systems},
  year    = {2023},
  volume  = {36}
}

@book{de2014essai,
  title={Essai sur l'application de l'analyse {\`a} la probabilit{\'e} des d{\'e}cisions rendues {\`a} la pluralit{\'e} des voix},
  author={Nicolas de Condorcet},
  year={1784},
  publisher={reprinted by Cambridge University Press}
}

@article{hornik1989multilayer,
  title={Multilayer feedforward networks are universal approximators},
  author={Hornik, Kurt and Stinchcombe, Maxwell and White, Halbert},
  journal={Neural networks},
  volume={2},
  number={5},
  pages={359--366},
  year={1989},
  publisher={Elsevier}
}

@inproceedings{rafailov2023direct,
  title={Direct preference optimization: Your language model is secretly a reward model},
  author={Rafailov, Rafael and Sharma, Archit and Mitchell, Eric and Manning, Christopher D and Ermon, Stefano and Finn, Chelsea},
  booktitle={Advances in Neural Information Processing Systems},
  pages={53728--53741},
  year={2023}
}

@article{ge2024axioms,
  title={Axioms for {AI} alignment from human feedback},
  author={Ge, Luise and Halpern, Daniel and Micha, Evi and Procaccia, Ariel D and Shapira, Itai and Vorobeychik, Yevgeniy and Wu, Junlin},
  journal={Advances in Neural Information Processing Systems},
  volume={37},
  pages={80439--80465},
  year={2024}
}

@inproceedings{hollenderenforcing,
  title={Enforcing Axioms for {AI} Alignment under Loss-Based Rules},
  author={Hollender, Alexandros and Kraiczy, Sonja},
  booktitle={The Fourteenth International Conference on Learning Representations},
  year={2026}
}

@unpublished{liu2025statistical,
  title={Statistical impossibility and possibility of aligning {LLM}s with human preferences: From {C}ondorcet paradox to {N}ash equilibrium},
  author={Liu, Kaizhao and Long, Qi and Shi, Zhekun and Su, Weijie J and Xiao, Jiancong},
  note={arXiv preprint arXiv:2503.10990},
  year={2025}
}

@unpublished{xiao2025theoretical,
  title={Theoretical tensions in {RLHF}: Reconciling empirical success with inconsistencies in social choice theory},
  author={Xiao, Jiancong and Shi, Zhekun and Liu, Kaizhao and Long, Qi and Su, Weijie J},
  note={arXiv preprint arXiv:2506.12350},
  year={2025}
}

@inproceedings{siththaranjandistributional24,
  title={Distributional Preference Learning: Understanding and Accounting for Hidden Context in {RLHF}},
  author={Siththaranjan, Anand and Laidlaw, Cassidy and Hadfield-Menell, Dylan},
  booktitle={The Twelfth International Conference on Learning Representations},
  year={2024}
}

@inproceedings{dai2024mapping,
  title={Mapping Social Choice Theory to {RLHF}},
  author={Dai, Jessica and Fleisig, Eve},
  booktitle={ICLR 2024 Workshop on Reliable and Responsible Foundation Models},
  year={2024}
}

\newpage
\appendix
\section{Proofs for Section~\ref{sec:tradeoff}}\label{appendix:tradeoff}
\decomposition*
\begin{proof}
For fixed $(x, y, y^\prime)$, minimizing $h(q)=-p\log q-(1-p)\log(1-q)$ over $q\in(0,1)$ with $p=\bar{p}(x,y, y^\prime)$ gives the minimizer $q=p$, so the best pointwise fit satisfies $F(\Delta r(x, y, y^\prime))=\bar{p}(x,y, y^\prime)$. Plugging $q=p=\bar{p}$ gives the Bayes-optimal loss
\begin{align*}
  \gL^\ast = \E_{x\sim\nu}\,\E_{(y,y^\prime)\sim\mu_x^{\otimes 2}}\left[ H\left(\bar{p}(x,y, y^\prime)\right) \right],
  \qquad H(p)=-p\log p-(1-p)\log(1-p).
\end{align*}

Moreover,
\begin{align*}
  & \gL(r)-\gL^\ast \nonumber\\
  &= \E_{\substack{x\sim\nu \\ (y,y^\prime)\sim\mu_x^{\otimes 2}}} \left[
    \bar{p}(x,y, y^\prime)\log\frac{\bar{p}(x,y, y^\prime)}{F(\Delta r(x,y, y^\prime))}
    +\left(1-\bar{p}(x,y, y^\prime)\right)\log\frac{1-\bar{p}(x,y, y^\prime)}{1-F(\Delta r(x,y, y^\prime))}
  \right] \nonumber \\
  &= \E_{\substack{x\sim\nu \\ (y,y^\prime)\sim\mu_x^{\otimes 2}}}\left[
    \KL\left(\bar{p}(x,y, y^\prime)\,\big\|\,F(\Delta r(x,y, y^\prime))\right)
  \right].
\end{align*}

Define the embedding-induced win probability as
\begin{align*}
  \tilde{p}_\phi(x, y, y^\prime) = \E_{(y_0,y_0')\sim\mu_x^{\otimes 2}} \left[
    \bar{p}(x, y_0,y_0')\mid\phi(x,y_0)=\phi(x,y),\,\phi(x,y_0')=\phi(x,y^\prime)
  \right].
\end{align*}
When $\phi$ is injective, $\tilde{p}_\phi=\bar{p}$. Otherwise $\tilde{p}_\phi$ averages $\bar{p}$ over pairs that $\phi$ cannot distinguish.

To decompose $\KL$, we use
\begin{align*}
  \bar{p}\log\frac{\bar{p}}{F(\Delta r)}
  &=\bar{p}\log\frac{\bar{p}}{\tilde{p}_\phi}+\bar{p}\log\frac{\tilde{p}_\phi}{F(\Delta r)}, \\
  (1-\bar{p})\log\frac{1-\bar{p}}{1-F(\Delta r)}
  &=(1-\bar{p})\log\frac{1-\bar{p}}{1-\tilde{p}_\phi}+(1-\bar{p})\log\frac{1-\tilde{p}_\phi}{1-F(\Delta r)},
\end{align*}
where $\bar{p}=\bar{p}(x,y, y^\prime)$, $\tilde{p}_\phi=\tilde{p}_\phi(x,y, y^\prime)$, and $F(\Delta r)=F(\Delta r(x,y, y^\prime))$ at each $(x,y, y^\prime)$.
Thus
\begin{align*}
  & \E_{x\sim\nu}\E_{(y, y^\prime)\sim\mu_x^{\otimes 2}}\bigl[\KL\bigl(\bar{p}(x,y, y^\prime)\big\|\,F(\Delta r(x,y, y^\prime))\bigr)\bigr] \\
  &= \E_{x\sim\nu}\E_{(y, y^\prime)\sim\mu_x^{\otimes 2}}\bigl[\KL\bigl(\bar{p}(x,y, y^\prime)\big\|\tilde{p}_\phi(x,y, y^\prime)\bigr)\bigr] \\
  &\quad + \E_{x\sim\nu}\E_{(y, y^\prime)\sim\mu_x^{\otimes 2}}\Biggl[
      \bar{p}\log\frac{\tilde{p}_\phi}{F(\Delta r)} + (1-\bar{p})\log\frac{1-\tilde{p}_\phi}{1-F(\Delta r)}
    \Biggr],
\end{align*}

Let $(Y,Y^\prime)\sim\mu_x^{\otimes 2}$ denote the underlying random variables on $\gX\times\gY^2$ and set $\gG=\sigma(\phi(x,Y),\phi(x,Y^\prime))$.

Then $\tilde{p}_\phi(x,Y, Y^\prime)=\E\left[\bar{p}(x,Y, Y^\prime)\mid \gG\right]$ by the equivalent definition of $\tilde{p}_\phi$ above, hence $\E\left[1-\bar{p}(x,Y, Y^\prime)\mid \gG\right]=1-\tilde{p}_\phi(x,Y, Y^\prime)$.
If $r(x,y)=g(\phi(x,y))$, then $F(\Delta r(x,Y, Y^\prime))$ is a measurable function of $(\phi(x,Y),\phi(x,Y^\prime))$, so $F(\Delta r(x,Y, Y^\prime))$ is $\gG$-measurable.

Using the tower property, we have
\begin{align*}
  \E\left[\bar{p}(x,y, y^\prime)\log\frac{\tilde{p}_\phi(x,y, y^\prime)}{F(\Delta r(x,y, y^\prime))}\right]
  &=\E\left[\E\left[\bar{p}(x,y, y^\prime)\mid\gG\right]\log\frac{\tilde{p}_\phi(x,y, y^\prime)}{F(\Delta r(x,y, y^\prime))}\right] \\
  &=\E\left[\tilde{p}_\phi(x,y, y^\prime)\log\frac{\tilde{p}_\phi(x,y, y^\prime)}{F(\Delta r(x,y, y^\prime))}\right], \\
  \E\left[(1-\bar{p}(x,y, y^\prime))\log\frac{1-\tilde{p}_\phi(x,y, y^\prime)}{1-F(\Delta r(x,y, y^\prime))}\right]
  &=\E\left[\E\left[1-\bar{p}(x,y, y^\prime)\mid\gG\right]\log\frac{1-\tilde{p}_\phi(x,y, y^\prime)}{1-F(\Delta r(x,y, y^\prime))}\right] \\
  &=\E\left[(1-\tilde{p}_\phi(x,y, y^\prime))\log\frac{1-\tilde{p}_\phi(x,y, y^\prime)}{1-F(\Delta r(x,y, y^\prime))}\right].
\end{align*}

Therefore,
\begin{align*}
  \gL(r)-\gL^\ast
  &= \E_{x\sim\nu}\,\E_{(y, y^\prime)\sim\mu_x^{\otimes 2}}\left[
    \KL\left(\bar{p}(x,y, y^\prime)\,\big\|\,\tilde{p}_\phi(x,y, y^\prime)\right)
  \right] \nonumber \\
  &\quad + \E_{x\sim\nu}\,\E_{(y, y^\prime)\sim\mu_x^{\otimes 2}}\left[
    \KL\left(\tilde{p}_\phi(x,y, y^\prime)\,\big\|\,F(\Delta r(x,y, y^\prime))\right)
  \right].
\end{align*}
\end{proof}

\subsection{Proofs for SubSection~\ref{subsec:representation}}\label{appendix:representation}
\monoed*
\begin{proof}
Let $X=\bar{p}(x,y, y^\prime)$ under $x\sim\nu$, $(y, y^\prime)\sim\mu_x^{\otimes 2}$. Then $\E\left[X\mid\gG_d\right]=\tilde{p}_{\phi_d}(x,y, y^\prime)$ a.s., hence $e_d=\left\|X-\E\left[X\mid\gG_d\right]\right\|_2^2$.

By Assumption~\ref{asmp:filtration}, $\gG_d\subseteq\gG_{d+1}$, thus the tower property gives
$\E\left[X\mid\gG_d\right]=\E\left[\E\left[X\mid\gG_{d+1}\right]\mid\gG_d\right]$, and
\begin{align*}
  \left\|X-\E\left[X\mid\gG_d\right]\right\|_2^2
  = \ & 
  \left\|X-\E\left[X\mid\gG_{d+1}\right]\right\|_2^2
  +\left\|\E\left[X\mid\gG_{d+1}\right]-\E\left[X\mid\gG_d\right]\right\|_2^2 \\
  \geq \ & 
  \left\|X-\E\left[X\mid\gG_{d+1}\right]\right\|_2^2. 
\end{align*}
\end{proof}

\newpage
\subsection{Proofs for SubSection~\ref{subsec:cycle}}
\cyclebound*
\begin{proof}
For each $(x,y,y^\prime)$ set $\eta=\eta_d(x,y,y^\prime)\in(0,\frac{1}{2}]$, $\kappa=\kappa_d(x,y,y^\prime)=\eta(1-\eta)$, $p=\tilde p_{\phi_d}(x,y,y^\prime)$, and
\[
I_{d,(x,y,y^\prime)}=\left[\log\tfrac{\eta}{1-\eta},\ \log\tfrac{1-\eta}{\eta}\right].
\]
The map $f_p(t)=\KL(p\|F(t))$ satisfies $f_p'(t)=F(t)-p$ and $f_p''(t)=F(t)(1-F(t))$. On $I_{d,(x,y,y^\prime)}$, $F(t)\in[\eta,1-\eta]$, so $f_p''(t)\ge\kappa$. The unique minimizer $t_p=F^{-1}(p)$ lies in $I_{d,(x,y,y^\prime)}$ with $f_p(t_p)=0$, and strong convexity gives
\[
  f_p(t)\geq \tfrac{\kappa}{2}\,(t-t_p)^2,\qquad\forall t\in I_{d,(x,y,y^\prime)}.
\]

 Fix $r\in\mathcal{R}_d$ and set $t=\Delta r(x,y,y^\prime)$. By definition of $\mathcal{R}_d$, $F(t)\in[\eta,1-\eta]$, so $t\in I_{d,(x,y,y^\prime)}$. Applying the pointwise bound and integrating,
\begin{equation}\label{eq:cyc-pointwise}
  \gE_{\mathrm{agr}}(\phi_d,r)
  \geq \frac{1}{2}\,\E_{x\sim\nu}\,\E_{(y,y^\prime)\sim\mu_x^{\otimes 2}}
  \Bigl[\kappa_d(x,y,y^\prime)\,
        \bigl(W_d(x,y,y^\prime)-\Delta r(x,y,y^\prime)\bigr)^2\Bigr].
\end{equation}

For $(y_1,y_2,y_3)\sim\mu_x^{\otimes 3}$ write $a_{ij}(x,y_i,y_j)=W_d(x,y_i,y_j)-\Delta r(x,y_i,y_j)$. Since the cyclic sum of $\Delta r$ vanishes,
\[
C_d(x,y_1,y_2,y_3)=W_d(x,y_1,y_2)+W_d(x,y_2,y_3)+W_d(x,y_3,y_1)=a_{12}+a_{23}+a_{31}.
\]

For each triplet, $(a_{12}+a_{23}+a_{31})^2\le 3(a_{12}^2+a_{23}^2+a_{31}^2)$, and $\kappa_d^{\min}\le\kappa_d(x,y_i,y_j)$ for each pair $(i,j)\in\{(1,2),(2,3),(3,1)\}$, so
\begin{align}\label{eq:cyc-min-bound}
  \kappa_d^{\min}(x,y_1,y_2,y_3)\,C_d^2
  &\leq 3\,\kappa_d^{\min}(x,y_1,y_2,y_3)\,(a_{12}^2+a_{23}^2+a_{31}^2) \notag\\
  &\leq 3\bigl(\kappa_d(x,y_1,y_2)\,a_{12}^2+\kappa_d(x,y_2,y_3)\,a_{23}^2+\kappa_d(x,y_3,y_1)\,a_{31}^2\bigr).
\end{align}
Take $\E_{(y_1,y_2,y_3)\sim\mu_x^{\otimes 3}}$ on both sides. Each $a_{ij}^2$ is a function of only two of the three coordinates, so by exchangeability of $\mu_x^{\otimes 3}$ each of the three terms on the right marginalizes to the same pair expectation $\E_{(y,y^\prime)\sim\mu_x^{\otimes 2}}\bigl[\kappa_d(x,y,y^\prime)(W_d-\Delta r)^2\bigr]$. Therefore
\begin{equation}\label{eq:cyc-triplet-to-pair}
  \E_{(y_1,y_2,y_3)\sim\mu_x^{\otimes 3}}\bigl[\kappa_d^{\min}\,C_d^2\bigr]
  \leq 9\,\E_{(y,y^\prime)\sim\mu_x^{\otimes 2}}\bigl[\kappa_d(x,y,y^\prime)\,(W_d-\Delta r)^2\bigr].
\end{equation}

 Taking $\E_x$ in~\eqref{eq:cyc-triplet-to-pair} and combining with~\eqref{eq:cyc-pointwise},
\[
  \gE_{\mathrm{agr}}(\phi_d,r)
  \geq \frac{1}{18}\,
  \E_{x\sim\nu}\,\E_{(y_1,y_2,y_3)\sim\mu_x^{\otimes 3}}
  \Bigl[\kappa_d^{\min}(x,y_1,y_2,y_3)\,C_d(x,y_1,y_2,y_3)^2\Bigr].
\]
Since the right-hand side does not depend on $r\in\mathcal{R}_d$, the bound applies to $\inf_{r\in\mathcal{R}_d}\gE_{\mathrm{agr}}(\phi_d,r)$.
\end{proof}
\condorcetlb*
\begin{proof}
Since $(F^{-1})'(p)=1/(p(1-p))>0$, the map $F^{-1}$ is strictly increasing on $(0,1)$.

On a positive $\delta$-cycle, each term has $\tilde{p}_{\phi_d}\geq\frac{1}{2}+\delta$, so
\begin{align*}
    C_d(x,y_1,y_2,y_3) 
    = \ & F^{-1} \left(\tilde{p}_{\phi_d}(x,y_1,y_2)\right) + F^{-1} \left(\tilde{p}_{\phi_d}(x,y_2,y_3)\right) + F^{-1} \left(\tilde{p}_{\phi_d}(x,y_3,y_1)\right)\\
    \geq \ & 3F^{-1}\left(\frac{1}{2} + \delta\right) = 3\ell_\delta > 0 \,.
\end{align*}
Similarly, on a negative $\delta$-cycle each term is at most $-\ell_\delta$ and $C_d\leq -3\ell_\delta$. Therefore $C_d^2\geq 9\ell_\delta^2$ on $\mathcal{CC}_d^\delta$, hence
\[
  \E_{x,(y_1,y_2,y_3)}\bigl[\kappa_d^{\min}\,C_d^2\bigr]
  \geq \E_{x,(y_1,y_2,y_3)}\bigl[\kappa_d^{\min}\,C_d^2\,\mathbf{1}_{\mathcal{CC}_d^\delta}\bigr]
  \geq 9\ell_\delta^2\,\E_{x,(y_1,y_2,y_3)}\bigl[\kappa_d^{\min}\,\mathbf{1}_{\mathcal{CC}_d^\delta}\bigr]
  = 9\ell_\delta^2\,\bar P_d^\delta.
\]
Combining with Theorem~\ref{thm:cyclebound},
\[
\inf_{r\in\mathcal{R}_d}\gE_{\mathrm{agr}}(\phi_d,r)\geq \tfrac{1}{18}\,\E\bigl[\kappa_d^{\min}\,C_d^2\bigr]\geq \tfrac{9\ell_\delta^2}{18}\,\bar P_d^\delta = \tfrac{\ell_\delta^2}{2}\,\bar P_d^\delta.
\]
\end{proof}
\condorcetgrowth*
\begin{proof}
\textbf{(1)} Under $\phi_0$, conditioning is trivial and $\tilde{p}_{\phi_0}(x,y, y^\prime)=\E_{(y_0,y_0')\sim\mu_x^{\otimes 2}}[\bar{p}(x,y_0,y_0')]$. Symmetry of $\mu_x^{\otimes2}$ in $(y_0,y_0')$ and $\bar{p}(x,y, y^\prime)+\bar{p}(x,y^\prime,y)=1$ imply $\tilde{p}_{\phi_0}=\frac{1}{2}$ everywhere, hence $\mathcal{CC}_0^\delta$ has probability zero and therefore $\bar P_0^\delta=0$.

\medskip\noindent\textbf{(2)}
Let $\mathcal{CC}_\infty^{2\delta}$ be the population analogue of Definition~\ref{def:condorcet} with margin $2\delta$ and $\bar{p}$ in place of $\tilde{p}_{\phi_d}$, so $P_\infty^{2\delta}=\Prob(\mathcal{CC}_\infty^{2\delta})$.

If a triplet is in $\mathcal{CC}_{\infty,+}^{2\delta}$ and each of the three pairs $(y_1,y_2)$, $(y_2,y_3)$, $(y_3,y_1)$ satisfies $|\tilde{p}_{\phi_d}-\bar{p}|\leq\delta$, then $\tilde{p}_{\phi_d}\geq\frac{1}{2}+\delta$ on each pair, so the triplet lies in $\mathcal{CC}_d^\delta$. The same implication holds for $\mathcal{CC}_{\infty,-}^{2\delta}$. Hence every triplet in $\mathcal{CC}_\infty^{2\delta}\setminus\mathcal{CC}_d^\delta$ has at least one of those three pairs with $|\tilde{p}_{\phi_d}-\bar{p}|>\delta$, i.e.
\[
  \mathbf{1}_{\mathcal{CC}_\infty^{2\delta}}
  \leq
  \mathbf{1}_{\mathcal{CC}_d^\delta}
  +\sum_{(i,j)\in\{(1,2),(2,3),(3,1)\}}
  \mathbf{1}_{\{|\tilde{p}_{\phi_d}(y_i,y_j)-\bar{p}(y_i,y_j)|>\delta\}}.
\]
Multiplying by $\kappa_d^{\min}\in[0,1/4]$ and taking expectations,
\[
  \E\left[\kappa_d^{\min}\mathbf{1}_{\mathcal{CC}_\infty^{2\delta}}\right]
  \le
  \bar P_d^\delta
  +3\,\Prob \left(|\tilde{p}_{\phi_d}(y, y^\prime)-\bar{p}(y, y^\prime)|>\delta\right).
\]
By definition of $\bar P_\infty^{\inf,2\delta}$, we have $\E[\kappa_d^{\min}\mathbf{1}_{\mathcal{CC}_\infty^{2\delta}}]\ge \bar P_\infty^{\inf,2\delta}$, and Markov's inequality gives $\Prob(|\tilde{p}_{\phi_d}-\bar{p}|>\delta) \leq \delta^{-2}\|\tilde{p}_{\phi_d}-\bar{p}\|_2^2 \leq e_d/\delta^2$, which proves item (2). 

\medskip\noindent\textbf{(3)}
If $d_1\leq d_2$ then $e_{d_1}\geq e_{d_2}$ by Lemma~\ref{lem:monoed}, so $\bar P_\infty^{\inf,2\delta}-\frac{3}{\delta^2}e_{d_1} \leq \bar P_\infty^{\inf,2\delta}-\frac{3}{\delta^2}e_{d_2}$.
\end{proof}

\newpage
\subsection{Proofs for SubSection~\ref{subsec:holder}}\label{appendix:holder}

\paragraph{Constructing $\varepsilon$-separating embeddings.} For completeness we note that $\varepsilon$-separating embeddings of arbitrarily small $\varepsilon$ can be realized by finite-width ReLU MLPs whenever the response space is compact and the embedding map $y\mapsto e(y)$ is continuous and injective in $y$.

Let $\mathcal{N}_\gY(\epsilon)$ denote the $\epsilon$-covering number of $(\gY,d_\gY)$. A ReLU MLP is a function $f:\R^D\to\R^m$ of the form $f = W_L\circ\sigma\circ W_{L-1}\circ\cdots\circ\sigma\circ W_1$, where $W_\ell$ are affine maps and $\sigma(t)=\max(0,t)$ is applied coordinate-wise.

\begin{thm}[Universal approximation~\cite{hornik1989multilayer}]\label{thm:uat}
Let $K\subset\R^D$ be compact and $g:K\to\R^m$ be continuous.  For any $\eta>0$, there exists a ReLU MLP $f:\R^D\to\R^m$ of finite depth and width such that $\sup_{x\in K}\|f(x)-g(x)\|_\infty\leq\eta$.
\end{thm}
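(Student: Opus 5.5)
We would prove Theorem~\ref{thm:uat} in three stages: first reduce to scalar outputs, then reduce to one-dimensional functions through ridge functions, and finally handle the one-dimensional case with explicit piecewise-linear ReLU sums.

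\emph{Reduction to scalar outputs.} Write $g=(g_1,\dots,g_m)$. Suppose each $g_j$ admits a one-hidden-layer ReLU network $f_j$ with $\sup_K|f_j-g_j|\le\eta$. Placing the hidden units of all the $f_j$ side by side in one hidden layer gives a single network. Its output layer is block-diagonal, mapping the $j$-th block to the $j$-th coordinate, so it computes $f=(f_1,\dots,f_m)$. Then $\sup_K\|f-g\|_\infty=\max_j\sup_K|f_j-g_j|\le\eta$. So it suffices to treat $m=1$. Since $K$ is compact, it lies in some cube $[-R,R]^D$, and nothing below requires extending $g$ beyond $K$.

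\emph{Reduction to ridge functions.} Let $\mathcal{A}$ be the linear span of the functions $x\mapsto\cos(w^\top x)$ and $x\mapsto\sin(w^\top x)$ for $w\in\R^D$. Product-to-sum identities show that $\mathcal{A}$ is a subalgebra of $C(K)$. It contains the constants ($w=0$) and separates points: if $x\neq x'$, pick $w$ with $w^\top(x-x')$ small and nonzero, so that $\sin(w^\top x)\neq\sin(w^\top x')$ or the corresponding cosines differ. By Stone--Weierstrass, some finite combination $\sum_k c_k h_k(w_k^\top x)$, with $h_k\in\{\cos,\sin\}$, lies within $\eta/2$ of $g$ on $K$. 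Let $B=\max_k\|w_k\|_2\sqrt{D}R$. Every argument $w_k^\top x$ with $x\in K$ then lies in the compact interval $[-B,B]$.

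\emph{The one-dimensional step and assembly.} This is where ReLU enters, and it is the step requiring care. A continuous $h$ on $[-B,B]$ is uniformly continuous, so its piecewise-linear interpolant on a fine enough uniform grid $-B=t_0<\dots<t_N=B$ is uniformly within any prescribed $\eta'$ of $h$. This interpolant equals, on $[-B,B]$, the function $h(t_0)+\sum_{i=0}^{N-1}(s_i-s_{i-1})\,\sigma(t-t_i)$, where $s_i$ is the slope on $[t_i,t_{i+1}]$ and $s_{-1}:=0$. That is a one-hidden-layer ReLU network in $t$. Substituting $t=w_k^\top x$ makes each unit $\sigma(w_k^\top x-t_i)$, so the result is a ReLU network in $x$. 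Choose $\eta'=\eta/(2\sum_k|c_k|)$. Summing the $K$-many one-hidden-layer networks in parallel then gives a depth-two ReLU network within $\eta/2+\eta/2=\eta$ of $g$ on $K$.

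The main obstacle is bookkeeping rather than a hard estimate. The error budget must be allocated between the Stone--Weierstrass step and the interpolation step, which the choice of $\eta'$ above does. The grid must be chosen after the frequencies $w_k$, so that the bound $B$ on the ridge arguments is already known. Once that ordering is fixed, everything is finite and explicit.
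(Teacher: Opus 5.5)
The paper gives no proof of this statement; it takes it as a black box from \cite{hornik1989multilayer}. Your argument is correct and self-contained.

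Its skeleton matches the classical Hornik--Stinchcombe--White proof:
\begin{itemize}
\item Stone--Weierstrass on the algebra spanned by trigonometric ridge functions $\cos(w^\top x)$ and $\sin(w^\top x)$, which is closed under products by the product-to-sum identities.
\item A one-dimensional approximation of $\cos$ and $\sin$ by the activation on a bounded interval, chosen after the frequencies $w_k$ are fixed.
\end{itemize}

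The real difference is in the one-dimensional step, and here it works in your favor. The cited theorem covers squashing activations (non-decreasing, with limits $0$ and $1$). The ReLU $\sigma(t)=\max(0,t)$ is not one of these. Applying that result literally to ReLU networks therefore needs an extra step: $\sigma(t)-\sigma(t-1)$ is a squashing function, at the cost of doubling the width. Your exact identity, interpolant $= h(t_0)+\sum_{i=0}^{N-1}(s_i-s_{i-1})\,\sigma(t-t_i)$ on $[-B,B]$, avoids this detour. It gives an explicit one-hidden-layer ReLU network, and existence of such a network is all that the $\varepsilon$-separating construction in Appendix~\ref{appendix:holder} uses.

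Three minor points to tidy up:
\begin{itemize}
\item You use $K$ both for the compact set and for the number of ridge terms.
\item In the point-separation step, require $w^\top(x-x')\in(0,2\pi)$. This guarantees $e^{iw^\top x}\neq e^{iw^\top x'}$, so the sines or the cosines differ.
\item Dispose of the trivial case $\sum_k|c_k|=0$ before defining $\eta'=\eta/(2\sum_k|c_k|)$.
\end{itemize}
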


\begin{restatable}{lem}{mlpeps}\label{lem:mlp-eps}
Let $(\gY,d_\gY)$ be compact and $e:\gY\to\R^D$ be continuous and injective. Set $N_\varepsilon=\mathcal{N}_\gY(\varepsilon/4)$. For every $d\ge N_\varepsilon$, there exists a ReLU MLP $f:\R^D\to\R^d$ such that the embedding $y\mapsto f(e(y))$ is $\varepsilon$-separating.
\end{restatable}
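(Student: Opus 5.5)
We construct a continuous "soft indicator" map from a finite cover, approximate it by a ReLU MLP using Theorem~\ref{thm:uat}, and then check that any two responses with equal MLP output must be close in $d_\gY$.

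\emph{Step 1: the target map.} Fix a finite $\varepsilon/4$-net $\{c_1,\dots,c_N\}$ of $(\gY,d_\gY)$ with $N=N_\varepsilon$. Since $\gY$ is compact and $e$ is continuous and injective, $e$ is a homeomorphism onto the compact set $K=e(\gY)\subset\R^D$. Consequently $e^{-1}:K\to\gY$ is continuous. For $k\le N$ define
\[
h_k(z)=\max\Bigl\{0,\,1-\tfrac{4}{\varepsilon}\,d_\gY\bigl(e^{-1}(z),c_k\bigr)\Bigr\}.
\]
Then $h_k(z)=1$ when $e^{-1}(z)=c_k$, and $h_k(z)=0$ once $d_\gY(e^{-1}(z),c_k)\ge\varepsilon/4$. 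To get a map into $\R^d$ for $d\ge N$, pad with zero coordinates: $h=(h_1,\dots,h_N,0,\dots,0):K\to\R^d$, which is continuous.

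\emph{Step 2: approximation.} Apply Theorem~\ref{thm:uat} with accuracy $\eta$ to obtain a ReLU MLP $f$ with $\sup_K\|f-h\|_\infty\le\eta$. Using the slope-$4/\varepsilon$ profile directly does not give an exact separation guarantee. The reason is that $f(e(y))=f(e(\tilde y))$ only forces $|h_k(e(y))-h_k(e(\tilde y))|\le 2\eta$ for each $k$.

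\emph{Step 3: a quantitative separation argument.} Take $y,\tilde y$ with equal embeddings. Pick $k$ with $d_\gY(y,c_k)\le\varepsilon/4$, which gives $h_k(e(y))\ge 0$ but only trivially. The fix is to use a two-scale profile: set $h_k=1$ on the ball of radius $\varepsilon/4$ and $h_k=0$ outside radius $\varepsilon/2$, linear in between. Then $h_k(e(y))=1$, so $h_k(e(\tilde y))\ge 1-2\eta>0$ once $\eta<1/2$. This places $\tilde y$ within $\varepsilon/2$ of $c_k$. The triangle inequality gives $d_\gY(y,\tilde y)\le \varepsilon/4+\varepsilon/2<\varepsilon$, so $y\mapsto f(e(y))$ is $\varepsilon$-separating. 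I would redo Step 1 with this profile and choose $\eta=1/4$.

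\emph{Main obstacle.} The delicate point is that equality of approximate outputs carries only $2\eta$-closeness of the ideal features. The separation must therefore come from a gap: a covering center's feature is exactly $1$ near it and less than $1-2\eta$ far away. Designing $h_k$ with such a plateau and a sufficient gap is what makes the argument work. The radius $\varepsilon/4$ in $N_\varepsilon$ is what leaves room for this plateau. A secondary point is justifying that $e^{-1}$ is continuous, which uses compactness together with injectivity.
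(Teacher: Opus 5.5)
Your proof is correct. Its skeleton matches the paper's: an $\varepsilon/4$-net, continuous bump features composed with $e^{-1}$ (continuous because $e$ is a homeomorphism onto the compact set $e(\gY)$), the universal approximation theorem on $e(\gY)$, and zero-padding up to dimension $d$. It ends with a gap argument showing that equal MLP outputs force $y$ and $\tilde y$ to share a cover element.

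The real difference is how you produce the gap. The paper uses tents $(\varepsilon/4-d_\gY(y,y_i))_+$, normalized into a partition of unity, and approximates them to accuracy $\eta=1/(3N_\varepsilon)$. It then uses a pigeonhole step: some coordinate of a probability vector is at least $1/N_\varepsilon$. This shows the supports of the two feature vectors overlap, giving $d_\gY(y,\tilde y)<\varepsilon/2$.

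You instead use plateau bumps that equal $1$ on the closed $\varepsilon/4$-ball and vanish outside the $\varepsilon/2$-ball, so the gap is built into the profile itself. This buys three things:
\begin{itemize}
\item accuracy $\eta=1/4$ suffices, independent of $N_\varepsilon$;
\item no normalization is needed;
\item the argument works whether the covering uses open or closed balls. The paper's tents are positive only on open balls, so it implicitly needs the open balls to cover.
\end{itemize}
The cost is a weaker but still sufficient separation, $3\varepsilon/4$ instead of $\varepsilon/2$.

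When you write it up, define the plateau profile from the start, for example $h_k(z)=\min\{1,\max\{0,\,2-\tfrac{4}{\varepsilon}\,d_\gY(e^{-1}(z),c_k)\}\}$. Don't first introduce the single-slope tent and then discard it.
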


\edupper*
\begin{proof}
Fix $(x,y,y^\prime)$.  For any $(y_0,y_0')$ in the conditioning event defining $\tilde p_{\phi_d}(x,y,y^\prime)$, the $\varepsilon$-separating condition gives $d_\gY(y_0,y)\le\varepsilon$ and $d_\gY(y_0',y^\prime)\leq\varepsilon$.

Recalling $\bar p(x,y,y^\prime)=\frac{1}{n}\sum_{i=1}^n F\bigl(\frac{u_i(x,y)-u_i(x,y^\prime)}{\tau}\bigr)$ and $F'(t)=F(t)(1-F(t))\leq\frac{1}{4}$, the mean-value theorem gives
\begin{align*}
  |\bar p(x,y_0,y_0')-\bar p(x,y,y^\prime)|
  \leq \ & \frac{1}{4\tau n}\sum_{i=1}^n\bigl[|u_i(x,y_0)-u_i(x,y)|+|u_i(x,y_0')-u_i(x,y^\prime)|\bigr] \\
  \leq \ & \frac{L_u}{4\tau}\bigl[d_\gY(y_0,y)^\alpha+d_\gY(y_0',y^\prime)^\alpha\bigr]
  \leq \frac{L_u}{2}\,\varepsilon^\alpha.
\end{align*}
Since this holds for every $(y_0,y_0')$ in the conditioning event, the conditional expectation satisfies $|\tilde p_{\phi_d}(x,y,y^\prime)-\bar p(x,y,y^\prime)|\le\frac{L_u}{2}\varepsilon^\alpha$, whence $e_d=\E[(\bar p-\tilde p_{\phi_d})^2]\le\frac{L_u^2}{4}\varepsilon^{2\alpha}$.
\end{proof}

\excess*
\begin{proof}
Combining Lemma~\ref{lem:condorcetlb} with Lemma~\ref{lem:condorcetgrowth} (i.e., $\bar P_d^\delta\ge \bar P_\infty^{\inf,2\delta}-\frac{3}{\delta^2}e_d$) yields
\[
\inf_{r\in\mathcal{R}_d}\gE_{\mathrm{agr}}(\phi_d,r)\ge\tfrac{\ell_\delta^2}{2}\bar P_d^\delta\ge\tfrac{\ell_\delta^2}{2}\Bigl(\bar P_\infty^{\inf,2\delta}-\tfrac{3}{\delta^2}e_d\Bigr).
\]
Together with Proposition~\ref{prop:ed} ($\gE_{\mathrm{emb}}\ge 2 e_d$) and the decomposition $\gL(r)-\gL^\ast=\gE_{\mathrm{emb}}+\gE_{\mathrm{agr}}$,
\begin{align*}
  \gL(r)-\gL^\ast
  &\ge 2\,e_d+\tfrac{\ell_\delta^2}{2}\Bigl(\bar P_\infty^{\inf,2\delta}-\tfrac{3}{\delta^2}e_d\Bigr)\\
  &= (2-\beta)\,e_d+\tfrac{1}{2}\,\ell_\delta^2\,\bar P_\infty^{\inf,2\delta},
\end{align*}
where $\beta=\frac{3\,\ell_\delta^2}{2\,\delta^2}$.

Define the affine function $f(t)=(2-\beta)\,t+\frac{1}{2}\,\ell_\delta^2\,\bar P_\infty^{\inf,2\delta}$ for $t\ge 0$. By Lemma~\ref{lem:ed-upper}, $e_d\in[0,\,\tfrac{L_u^2}{4\tau^2}\varepsilon^{2\alpha}]$, so $\gL(r)-\gL^\ast\ge\min_{t\in[0,\,L_u^2\varepsilon^{2\alpha}/(4\tau^2)]}f(t)$.
Since $f$ is affine, the minimum is attained at an endpoint. At $t=0$ if $\beta\le 2$ and at $t=\frac{L_u^2}{4\tau^2}\varepsilon^{2\alpha}$ if $\beta>2$. In either case,
\[
\gL(r)-\gL^\ast\ge\tfrac{1}{2}\,\ell_\delta^2\,\bar P_\infty^{\inf,2\delta}-(\beta-2)^{+}\,\tfrac{L_u^2}{4\tau^2}\,\varepsilon^{2\alpha}.
\]
Substituting $\beta$ completes the proof.
\end{proof}
\mlpeps*
\begin{proof}
Let $\{y_1,\ldots,y_{N_\varepsilon}\}$ be an $\varepsilon/4$-net in $\gY$, so the open balls $B_i=\{y\in\gY:d_\gY(y,y_i)<\varepsilon/4\}$ cover $\gY$.

For $i=1,\ldots,N_\varepsilon$, define the continuous tent function $u_i(y)=(\varepsilon/4-d_\gY(y,y_i))_+$, which is positive precisely on $B_i$. Let $S(y)=\sum_{j=1}^{N_\varepsilon}u_j(y)$; the cover property gives $S(y)>0$ for every $y\in\gY$, and by compactness $\inf_{y\in\gY}S(y)>0$. Define
\[
\hat\psi:\gY\to\Delta^{N_\varepsilon-1}\subseteq\R^{N_\varepsilon},\qquad\hat\psi(y)_i=u_i(y)/S(y).
\]
Then $\hat\psi$ is continuous on $\gY$, $\hat\psi(y)_i>0$ iff $y\in B_i$, and $\sum_i\hat\psi(y)_i=1$ for every $y$.

Since $\bar\psi:=\hat\psi\circ e^{-1}$ is continuous, we can extend it to a continuous function $\bar\psi:\R^D\to\R^{N_\varepsilon}$.

Apply Theorem~\ref{thm:uat} to $\bar\psi$ on the compact set $e(\gY)$ with tolerance $\eta=\frac{1}{3 N_\varepsilon}$.  There exists a ReLU MLP $\tilde f:\R^D\to\R^{N_\varepsilon}$ such that $\sup_{z\in e(\gY)}\|\tilde f(z)-\bar\psi(z)\|_\infty\le\eta$. If $d>N_\varepsilon$, pad $\tilde f$ with zero coordinates to obtain $f:\R^D\to\R^d$.

Suppose $f(e(y))=f(e(\tilde y))$. The padded coordinates carry no information, so $\tilde f(e(y))=\tilde f(e(\tilde y))$. Hence, 
\begin{align*}
    \|\hat\psi(y)-\hat\psi(\tilde y)\|_\infty
=  \ & \|\bar\psi(e(y))-\bar\psi(e(\tilde y))\|_\infty \\
\leq \ & \|\bar\psi(e(y))-\tilde f(e(y))\|_\infty
+\|\tilde f(e(y))-\tilde f(e(\tilde y))\|_\infty
+\|\tilde f(e(\tilde y))-\bar\psi(e(\tilde y))\|_\infty \,.
\end{align*}
Since $\tilde f(e(y))=\tilde f(e(\tilde y))$, and both $e(y),e(\tilde y)\in e(\gY)$, we have $\|\tilde f(e(y))-\tilde f(e(\tilde y))\|_\infty=0$, $\|\bar\psi(e(y))-\tilde f(e(y))\|_\infty\leq\eta$, $\|\tilde f(e(\tilde y))-\bar\psi(e(\tilde y))\|_\infty\leq\eta$.
Therefore
\[
\|\hat\psi(y)-\hat\psi(\tilde y)\|_\infty\le 2\eta=\frac{2}{3N_\varepsilon}.
\]
We claim that there exists $i^\ast$ with $\hat\psi(y)_{i^\ast}>0$ and $\hat\psi(\tilde y)_{i^\ast}>0$. Suppose for contradiction that $\hat\psi(y)$ and $\hat\psi(\tilde y)$ have disjoint supports. Both are probability vectors in $\R^{N_\varepsilon}$, so $\max_i\hat\psi(y)_i\ge 1/N_\varepsilon$ at some index $i_0$. At that index, $\hat\psi(\tilde y)_{i_0}=0$, hence $|\hat\psi(y)_{i_0}-\hat\psi(\tilde y)_{i_0}|\ge 1/N_\varepsilon>2/(3 N_\varepsilon)$, a contradiction.

Therefore $y\in B_{i^\ast}$ and $\tilde y\in B_{i^\ast}$, so $d_\gY(y,\tilde y)\le d_\gY(y,y_{i^\ast})+d_\gY(y_{i^\ast},\tilde y)<\varepsilon/2\le\varepsilon$.
\end{proof}

\newpage
\section{Excess Loss Upper bound}\label{appendix:upper}
\begin{thm}\label{thm:upper}
Suppose $\tilde p_{\phi_d}(x,y,y^\prime)\in(0,1)$ almost surely and define
\[
  \kappa_d(x,y,y^\prime)=\tilde p_{\phi_d}(x,y,y^\prime)\bigl(1-\tilde p_{\phi_d}(x,y,y^\prime)\bigr).
\]
Define $\nu_{d,x}:=\phi_{d\star}\mu_x$ and $\nu_{d,x}^{\otimes2}:=\nu_{d,x}\otimes\nu_{d,x}$. Define the score function $W_d(x,y,y^\prime)=F^{-1}(\tilde p_{\phi_d}(x,y,y^\prime))$ and the subspace
\[
  \gV_x=\overline{\bigl\{\Delta h \text{ where }h:\gZ_d(x)\to\R\bigr\}}^{L^2(\nu_{d,x}^{\otimes2})},
  \qquad(\Delta h)(z,z')=h(z)-h(z'),
\]
let $\Pi_{\gV_x}$ denote orthogonal projection onto $\gV_x\subseteq L^2(\nu_{d,x}^{\otimes2})$, and set
\[
  U_x:=\widetilde W_{d,x}-\Pi_{\gV_x}\widetilde W_{d,x},
    \qquad
  \widetilde W_{d,x}(z,z')=W_d(x,y,y^\prime)\ \text{where }z=\phi_d(x,y)\,, z'=\phi_d(x,y^\prime).
\]
Then for every $\varepsilon>0$, there exists a measurable prompt-indexed score head $g_x^\varepsilon:\R^d\to\R$ such that $r_d^\varepsilon(x,y)=g_x^\varepsilon(\phi_d(x,y))$ satisfies
\[
  \gL(r_d^\varepsilon)-\gL^\ast \leq
  \E\left[\frac{\bigl(\bar p-\tilde p_{\phi_d}\bigr)^2}{\kappa_d}\right]
  + \frac{1}{8}\,\E_x\bigl[\|U_x\|_{L^2(\nu_{d,x}^{\otimes2})}^2\bigr]
  +\varepsilon\,.
\]
\end{thm}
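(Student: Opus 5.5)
Use Proposition~\ref{prop:decomposition}: for any $r=g\circ\phi_d$, $\gL(r)-\gL^\ast=\gE_{\mathrm{emb}}(\phi_d)+\gE_{\mathrm{agr}}(\phi_d,r)$. Bound the two terms separately. The embedding term does not depend on the head. The agreement term is controlled by choosing the head as an approximate $L^2$ projection of the log-odds field $W_d$ onto potential differences.

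\textbf{Step 1 (embedding term).} Bound the Bernoulli KL by the $\chi^2$ divergence: $\KL(p\|q)\le \frac{(p-q)^2}{q(1-q)}$. This follows from $\log u\le u-1$ applied inside the expectation, i.e.\ $\KL\le\log(1+\chi^2)\le\chi^2$. With $p=\bar p$ and $q=\tilde p_{\phi_d}$ this gives
\[
\gE_{\mathrm{emb}}(\phi_d)\le \E\bigl[(\bar p-\tilde p_{\phi_d})^2/\kappa_d\bigr],
\]
which is exactly the first term.

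\textbf{Step 2 (agreement term, pointwise).} For fixed $p=\tilde p_{\phi_d}$, write $f_p(t)=\KL(p\|F(t))$. As in the proof of Theorem~\ref{thm:cyclebound}, $f_p(t_p)=0$ at $t_p=F^{-1}(p)=W_d$, $f_p'(t_p)=0$, and $f_p''(t)=F(t)(1-F(t))\le \tfrac14$ for every $t$. Taylor's theorem then gives $f_p(t)\le \tfrac18 (t-W_d)^2$ globally, with no feasibility restriction needed. Consequently
\[
\gE_{\mathrm{agr}}(\phi_d,r)\le \tfrac18\,\E_x\bigl\|\widetilde W_{d,x}-\Delta g_x\bigr\|^2_{L^2(\nu_{d,x}^{\otimes2})},
\]
after pushing $(y,y')\sim\mu_x^{\otimes2}$ forward through $\phi_d$.

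\textbf{Step 3 (choice of head).} For each $x$, $\Pi_{\gV_x}\widetilde W_{d,x}$ lies in the closure of $\{\Delta h\}$. Pick $h_x$ with
\[
\|\widetilde W_{d,x}-\Delta h_x\|^2\le \|U_x\|^2+8\varepsilon .
\]
This is possible by Pythagoras: $\|\widetilde W-\Delta h\|^2=\|U_x\|^2+\|\Pi\widetilde W-\Delta h\|^2$, since $U_x\perp\gV_x$. Set $g_x^\varepsilon=h_x$. Taking expectations, the agreement term is at most $\tfrac18\E_x\|U_x\|^2+\varepsilon$.

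\textbf{Main obstacle.} The hard part is making this choice uniform and measurable in $x$, with $\widetilde W_{d,x}\in L^2$, so that $\E_x$ makes sense. Two routes are available, and I would try the second first:
\begin{itemize}
\item Use a measurable selection argument.
\item Approximate $\Pi_{\gV_x}\widetilde W_{d,x}$ through a countable dense family of simple functions $h$, and pick the first element of that family that achieves the tolerance, which is measurable in $x$.
\end{itemize}
If $\E_x\|U_x\|^2=\infty$ the bound is trivial anyway. The tolerance can be taken relative, e.g.\ $\varepsilon$ per $x$, since $\nu$ is a probability measure.

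Summing the bounds from Steps 1–3 yields the claimed inequality.
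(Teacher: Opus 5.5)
Your proposal is correct and follows the paper's proof essentially step for step: the $\chi^2$ bound on the embedding term, the global Taylor bound $f_p(t)\le\tfrac18(t-W_d)^2$ coming from $F(1-F)\le\tfrac14$, and the per-prompt choice of $h_x$ with $\|\Pi_{\gV_x}\widetilde W_{d,x}-\Delta h_x\|^2\le 8\varepsilon$ combined with the Pythagorean identity. The paper simply asserts that such an $h_x^\varepsilon$ exists for each $x$ and does not address measurability in $x$, so the countable-dense-family selection you propose is a reasonable way to fill that gap.
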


\begin{proof}
Throughout let $(y,y^\prime)\sim\mu_x^{\otimes2}$, $(z,z')=\bigl(\phi_d(x,y),\phi_d(x,y^\prime)\bigr)\sim\nu_{d,x}^{\otimes2}$, and $\nu_{d,x}=\phi_{d\star}\mu_x$ be the pushforward.

For $u\in[0,1]$ and $v\in(0,1)$, the elementary inequality $\log(1+t)\le t$ yields
$\KL(u\|v)\le\chi^2(u,v)=\frac{(u-v)^2}{v(1-v)}$. With $v=\tilde p_{\phi_d}(x,y,y^\prime)$ and $\kappa_d=v(1-v)$,
\[
  \gE_{\mathrm{emb}}(\phi_d)
  =\E\bigl[\KL(\bar p\|\tilde p_{\phi_d})\bigr]
  \leq
  \E\left[\frac{(\bar p-\tilde p_{\phi_d})^2}{\kappa_d}\right].
\]

Define $W_d(x,y,y^\prime)=F^{-1}(\tilde p_{\phi_d}(x,y,y^\prime))$ and
\[
  \gV_x=\overline{\bigl\{\Delta h: h:\gZ_d(x)\to\R\bigr\}}^{L^2(\nu_{d,x}^{\otimes2})},
  \qquad
  (\Delta h)(z,z')=h(z)-h(z').
\]
Let $\Pi_{\gV_x}$ denote orthogonal projection onto $\gV_x\subseteq L^2(\nu_{d,x}^{\otimes2})$ and set $U_x=\widetilde W_{d,x}-\Pi_{\gV_x}\widetilde W_{d,x}$, where $\widetilde W_{d,x}(z,z')=W_d(x,y,y^\prime)$ is well-defined because $W_d$ is $\sigma(\phi_d(x,y),\phi_d(x,y^\prime))$-measurable.

Fix $\varepsilon>0$. Since $\Pi_{\gV_x}\widetilde W_{d,x}\in\gV_x=\overline{\{\Delta h\}}$, for each $x$ there exists measurable $h_x^\varepsilon:\gZ_d(x)\to\R$ such that
\[
  \|\Pi_{\gV_x}\widetilde W_{d,x}-\Delta h_x^\varepsilon\|_{L^2(\nu_{d,x}^{\otimes2})}^2\le 8\varepsilon.
\]
Extend $h_x^\varepsilon$ arbitrarily to $g_x^\varepsilon:\R^d\to\R$ and set $r_d^\varepsilon(x,y)=g_x^\varepsilon(\phi_d(x,y))$. Then $\Delta r_d^\varepsilon(x,y,y^\prime)=\Delta h_x^\varepsilon\bigl(\phi_d(x,y),\phi_d(x,y^\prime)\bigr)$, and by the Pythagorean identity for orthogonal projection,
\begin{align*}
    \E_{(y,y^\prime)\sim\mu_x^{\otimes2}}\bigl[(W_d-\Delta r_d^\varepsilon)^2\bigr]
= \ & \|\widetilde W_{d,x}-\Delta h_x^\varepsilon\|_{L^2(\nu_{d,x}^{\otimes2})}^2 \\
= \ & \|U_x\|_{L^2(\nu_{d,x}^{\otimes2})}^2+\|\Pi_{\gV_x}\widetilde W_{d,x}-\Delta h_x^\varepsilon\|_{L^2(\nu_{d,x}^{\otimes2})}^2 \\
\leq \ & \|U_x\|_{L^2(\nu_{d,x}^{\otimes2})}^2+8\varepsilon\,.
\end{align*}

Set $v=\tilde p_{\phi_d}(x,y,y^\prime)$, $t_v=F^{-1}(v)=W_d(x,y,y^\prime)$ and $t'=\Delta r_d^\varepsilon(x,y,y^\prime)$. Let $f_v(t)=\KL(v\|F(t))$; then $f_v'(t)=F(t)-v$ vanishes at $t_v$ and $f_v''(t)=F(t)(1-F(t))\le\tfrac14$ for all $t\in\R$. Taylor's theorem with remainder at $t_v$ gives, for some $\xi$ between $t'$ and $t_v$,
\[
f_v(t')=f_v(t_v)+f_v'(t_v)(t'-t_v)+\frac{1}{2} f_v''(\xi)\,(t'-t_v)^2\le\tfrac{1}{8}\,(t'-t_v)^2,
\]
since $f_v(t_v)=0$ and $f_v'(t_v)=0$. Applying this pointwise and integrating,
\[
  \gE_{\mathrm{agr}}(\phi_d,r_d^\varepsilon)
  =\E\bigl[\KL(\tilde p_{\phi_d}\|F(\Delta r_d^\varepsilon))\bigr]
  \le\tfrac18\,\E_x\bigl[\|W_d-\Delta r_d^\varepsilon\|_{L^2(\nu_{d,x}^{\otimes2})}^2\bigr]
  \le\tfrac{1}{8}\,\E_x\left[\|U_x\|_{L^2(\nu_{d,x}^{\otimes2})}^2\right]+\varepsilon.
\]
Adding the embedding and agreement bounds yields the stated inequality.
\end{proof}

\newpage
\section{Extension to Direct Preference Optimization}\label{appendix:dpo}
Fix a finite vocabulary $\gV$ and identify $\gY$ with the set of finite sequences over $\gV$. A policy $\pi$ is a conditional distribution over $\gY$ given $x\in\gX$, with the autoregressive factorization
\[
  \pi(y\mid x)=\prod_{t=1}^{|y|}\pi(y_t\mid y_{<t},x),\qquad y_{<t}=(y_1,\ldots,y_{t-1}).
\]
The reference policy $\pi_{\mathrm{ref}}$ is a fixed autoregressive model and $\lambda>0$ is the KL-regularization coefficient. Define
\[
  \Delta_\pi(x,y,y^\prime) =\lambda\left(\log\frac{\pi(y\mid x)}{\pi_{\mathrm{ref}}(y\mid x)}-\log\frac{\pi(y^\prime\mid x)}{\pi_{\mathrm{ref}}(y^\prime\mid x)}\right).
\]
Then the DPO loss can be written as
\begin{align}\label{eq:dpo-loss}
  \gL_{\mathrm{DPO}}(\pi) =\E_{x\sim\nu}\,\E_{(y,y^\prime)\sim\mu_x^{\otimes 2}}\Bigl[
    -\bar{p}(x,y,y^\prime)\log F\bigl(\Delta_\pi(x,y,y^\prime)\bigr)
    -\bigl(1-\bar{p}(x,y,y^\prime)\bigr)\log\bigl(1-F\bigl(\Delta_\pi(x,y,y^\prime)\bigr)\bigr)
  \Bigr]\,.
\end{align}
Its implicit reward is
\begin{align}\label{eq:r-pi-decomp}
  r_\pi(x,y)
  = \lambda\sum_{t=1}^{|y|}
    \log\frac{\pi(y_t\mid y_{<t},x)}{\pi_{\mathrm{ref}}(y_t\mid y_{<t},x)}.
\end{align}
For each position $t\geq 1$, let $\psi_t:\gX\times\gY\to\R^d$ be a prefix-conditional feature extractor. Define
\[
  \Phi_\psi(x,y)=\bigl(\psi_1(x,y_{\leq 1}),\ldots,\psi_{|y|}(x,y_{\leq|y|})\bigr)\in(\R^d)^{|y|},
\]
and
\[
  \mathcal{R}_{\mathrm{seq}}(\psi):=
  \Bigl\{
    r:\gX\times\gY\to\R
    \,\Big|\,
    r(x,y)=\sum_{t=1}^{|y|}g_t\bigl(\Phi_\psi(x,y_{\leq t})\bigr),
    \ g_t:(\R^d)^t\to\R\ \text{measurable for each }t\ge 1
  \Bigr\}.
\]
For $(y_0,y_0')\sim\mu_x^{\otimes 2}$, set
\[
  \tilde{p}_\psi(x,y,y^\prime) = \E\bigl[\bar{p}(x,y_0,y_0')\big|\,\Phi_\psi(x,y_0)=\Phi_\psi(x,y),\Phi_\psi(x,y_0')=\Phi_\psi(x,y^\prime)\bigr].
\]
Define $P_\psi^\delta$ as $P_d^\delta$ in Definition~\ref{def:condorcet}, with $\tilde{p}_\psi$ in place of $\tilde{p}_{\phi_d}$, and denote the corresponding cycle event by $\mathcal{CC}_\psi^\delta$.

\begin{asmp}[Prefix-covering]\label{asmp:prefix-cover}
For every $x\in\gX$ and any pair of trajectories $(y,y^\prime)$, the set of positions at which they share the same prefix,
\[
  \mathrm{pref}(y,y^\prime)
  =\bigl\{t\in\mathcal{N} : y_{\leq t}=y^\prime_{\leq t}\bigr\},
\]
is determined by the pair of trajectory embeddings $(\Phi_\psi(x,y),\Phi_\psi(x,y^\prime))$. For every fixed $x$ and $y\in\gY$, the map $t\mapsto\Phi_\psi(x,y_{\leq t})$ is injective on $\{0,1,\dots,|y|\}$.
Equivalently, if $\Phi_\psi(x,y_1)=\Phi_\psi(x,y_2)$ then $\mathrm{pref}(y,y_1)=\mathrm{pref}(y,y_2)$ for every trajectory $y$, and two prefixes of the same trajectory at different depths never share an embedding.
\end{asmp}

\begin{restatable}{thm}{dpolb}\label{thm:dpo-lb}
Let $\pi$ satisfy $r_\pi\in\mathcal{R}_{\mathrm{seq}}(\psi)$ and suppose $\tilde p_\psi(x,y,y^\prime)\in(0,1)$ almost surely. Define
\[
  \alpha_\psi(x,y,y^\prime)=\tilde p_\psi(x,y,y^\prime)\bigl(1-\tilde p_\psi(x,y,y^\prime)\bigr),
\]
\[
  \alpha_{\psi}^{\min}(x,y_1,y_2,y_3)=\min\{\alpha_\psi(x,y_1,y_2),\alpha_\psi(x,y_2,y_3),\alpha_\psi(x,y_3,y_1)\},
\]
and
\[
  \bar P_\psi^\delta=\E_{x\sim\nu}\E_{(y_1,y_2,y_3)\sim\mu_x^{\otimes 3}}
  \Bigl[\alpha_\psi^{\min}(x,y_1,y_2,y_3)\,\mathbf{1}_{\mathcal{CC}_\psi^\delta}(x,y_1,y_2,y_3)\Bigr].
\]
Let $e_\psi=\E[(\bar p-\tilde p_\psi)^2]$.
Then for every $\delta\in(0,\tfrac14)$,
\[
  \gL_{\mathrm{DPO}}(\pi)-\gL^\ast
  \geq 8e_\psi+\frac{\ell_\delta^2}{2}\,\bar P_\psi^\delta.
\]
\end{restatable}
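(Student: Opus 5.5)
The plan is to mirror the proof of Theorem~\ref{thm:mainlb} for the DPO loss: first an exact decomposition, then a Pinsker bound on the embedding term, then a cycle bound on the agreement term. The one genuinely new ingredient is measurability.

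\textbf{Step 1 (measurability and decomposition).} Let $\gG_\psi=\sigma(\Phi_\psi(x,Y),\Phi_\psi(x,Y'))$. I need $\Delta_\pi=r_\pi(x,y)-r_\pi(x,y')$ to be $\gG_\psi$-measurable. Since $r_\pi\in\mathcal{R}_{\mathrm{seq}}(\psi)$, $r_\pi(x,y)=\sum_{t\le|y|}g_t(\Phi_\psi(x,y_{\le t}))$.
\begin{itemize}
\item The injectivity clause of Assumption~\ref{asmp:prefix-cover} lets us read $|y|$, and each prefix embedding, off $\Phi_\psi(x,y)$.
\item Hence $r_\pi(x,\cdot)$ is a function of $\Phi_\psi(x,\cdot)$.
\item The prefix-covering clause guarantees that the terms shared by $y$ and $y'$ (common prefixes) cancel consistently, so the difference is determined by the embedding pair.
\end{itemize}
With this in hand, the proof of Proposition~\ref{prop:decomposition} goes through verbatim with $\phi$ replaced by $\Phi_\psi$. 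Thus $\gL_{\mathrm{DPO}}(\pi)-\gL^\ast=\E[\KL(\bar p\|\tilde p_\psi)]+\E[\KL(\tilde p_\psi\|F(\Delta_\pi))]$, where the cross terms vanish by the tower property.

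\textbf{Step 2 (embedding term).} Pinsker for Bernoulli laws, exactly as in Proposition~\ref{prop:ed}, gives $\E[\KL(\bar p\|\tilde p_\psi)]\ge 2e_\psi$. The statement claims $8e_\psi$. Since $\KL(p\|q)\ge 2(p-q)^2$ is sharp near $q=\tfrac12$, I would first re-check whether the intended constant is $2$, or whether a normalization of $\bar p$ (or $\lambda$) produces the factor $4$. Absent such a convention, I would prove the bound with $2e_\psi$.

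\textbf{Step 3 (agreement term).} Cyclic sums of $\Delta_\pi$ vanish because $\Delta_\pi$ is a difference of potentials. So with $a_{ij}=W_\psi(y_i,y_j)-\Delta_\pi(y_i,y_j)$, where $W_\psi=F^{-1}(\tilde p_\psi)$, we get $C_\psi=a_{12}+a_{23}+a_{31}$. On $\mathcal{CC}_\psi^\delta$ we have $|C_\psi|\ge3\ell_\delta$, so some pair has $|a_{ij}|\ge\ell_\delta$.

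The difference from Theorem~\ref{thm:cyclebound} is that there is no feasibility class here: $F(\Delta_\pi)$ may leave $[\eta,1-\eta]$, and strong convexity of $f_p(t)=\KL(p\|F(t))$ with modulus $\alpha_\psi$ fails globally. I would replace it with a case split.
\begin{itemize}
\item If $|t-t_p|\le\ell_\delta$, use the local curvature lower bound.
\item Otherwise, use convexity of $f_p$: $f_p(t)\ge f_p(t_p\pm\ell_\delta)$, which is itself bounded below by the local quadratic evaluated at distance $\ell_\delta$.
\end{itemize}
Both cases together give $f_p(t)\gtrsim\tfrac{\alpha}{2}\min\{(t-t_p)^2,\ell_\delta^2\}$, with the curvature along the segment compared to $\alpha_\psi=\tilde p_\psi(1-\tilde p_\psi)$.

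On a cycle event, the pair attaining $|a_{ij}|\ge\ell_\delta$ then contributes at least $\tfrac{\alpha_\psi^{\min}}{2}\ell_\delta^2$. Averaging over triplets via exchangeability of $\mu_x^{\otimes3}$, as in \eqref{eq:cyc-triplet-to-pair}, yields a bound of order $\ell_\delta^2\bar P_\psi^\delta$; the constant must be tracked to reach $\tfrac12$.

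\textbf{Main obstacle.} I expect the hard part to be this last step: controlling $F(t)(1-F(t))$ from below by $\alpha_\psi$ along the segment from $t_p$ to $t_p\pm\ell_\delta$, without a feasibility constraint. My first attempt would use the elementary bound $F(t_p+s)(1-F(t_p+s))\ge e^{-|s|}F(t_p)(1-F(t_p))$. This costs a factor $e^{-\ell_\delta}$, which is bounded because $\delta<\tfrac14$, so it can perhaps be absorbed. Matching the exact constants of the statement may additionally require $\delta<\tfrac14$ as assumed.
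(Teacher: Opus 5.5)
\textbf{Where you agree with, and go beyond, the paper.} Both of your objections are correct, and they hit exactly the two steps the paper's own proof merely asserts. After the decomposition, the paper states $\gE_{\mathrm{emb}}(\Phi_\psi)\ge 8e_\psi$ without argument. It then claims $\KL(\tilde p_\psi\|F(\Delta r_\pi))\ge\frac{\alpha_\psi}{2}(W_\psi-\Delta r_\pi)^2$ by ``pointwise strong convexity'', before reusing Theorem~\ref{thm:cyclebound} and Lemma~\ref{lem:condorcetlb}.

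The constant $8$ is false, so the statement as written cannot be proved. Take one prompt, $\mu_x$ uniform on two one-token responses $a,b$ with $\bar p(a,b)=\tfrac12+\epsilon$, $\psi_1\equiv 0$, and $\pi=\pi_{\mathrm{ref}}$. Then $\tilde p_\psi\equiv\tfrac12$, $\Delta_\pi\equiv 0$, $\bar P_\psi^\delta=0$ and $e_\psi=\epsilon^2/2$. But $\gL_{\mathrm{DPO}}(\pi)-\gL^\ast=\tfrac12\KL(\tfrac12+\epsilon\|\tfrac12)=\epsilon^2+O(\epsilon^4)<8e_\psi$. Proving the bound with $2e_\psi$ is the right fix. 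The global quadratic bound is also false, because $\KL(p\|F(t))$ grows only linearly in $|t|$.

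\textbf{The gap: your repair of the agreement term loses a constant.} You take a truncated per-pair bound $\tfrac{\alpha_\psi}{2}e^{-\ell_\delta}\min\{a^2,\ell_\delta^2\}$, credit it to a single pair per cycle, and use that the cyclic sum has expectation $3\gE_{\mathrm{agr}}$. That yields only
\[
\gE_{\mathrm{agr}}\ge\frac{e^{-\ell_\delta}\ell_\delta^2}{6}\,\bar P_\psi^\delta,
\]
which is short of the target by a factor $3\frac{1+2\delta}{1-2\delta}\in(3,9)$. This cannot be ``absorbed'' into the stated constant. The factor $3$ is lost by discarding the linear growth beyond $\ell_\delta$ and crediting only one pair. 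The factor $e^{-\ell_\delta}$ is lost by letting the deviation point away from $\tfrac12$, where curvature really does decay.

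\textbf{The missing idea: minimize over the whole triple.} On a positive cycle, let $p_k$ be the three values of $\tilde p_\psi$, let $W_k=F^{-1}(p_k)\ge\ell_\delta$, and let $T=\max_kW_k$. Then $\alpha_\psi^{\min}=F(T)(1-F(T))\le F(s)(1-F(s))$ for all $|s|\le T$.

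Write $f_p(t)=\KL(p\|F(t))$. The realized values $t_k$ of $\Delta_\pi$ satisfy $\sum_kt_k=0$. Hence $\sum_kf_{p_k}(t_k)\ge\sum_kf_{p_k}(s^\ast_k)$, where $s^\ast$ is the constrained minimizer, which exists by coercivity. Stationarity gives $F(s^\ast_k)=p_k+\lambda$ with a common $\lambda$.

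\begin{enumerate}
\item Some $s^\ast_i\le 0$, so $\lambda\le\tfrac12-p_i<0$, and therefore every $s^\ast_k<W_k\le T$.
\item If some $s^\ast_j<-T$, then $\lambda<1-\max_kp_k-p_j$. This gives $F(s^\ast_k)<1-p_j<\tfrac12$ for all $k$, contradicting $\sum_ks^\ast_k=0$.
\end{enumerate}

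So each segment $[s^\ast_k,W_k]$ lies in $[-T,T]$. Taylor's formula at $W_k$ (where $f_{p_k}=f_{p_k}'=0$), followed by Cauchy--Schwarz with $C_\psi=\sum_kW_k\ge3\ell_\delta$, gives
\[
\sum_kf_{p_k}(t_k)\ge\frac{\alpha_\psi^{\min}}{2}\sum_k(W_k-s^\ast_k)^2\ge\frac{\alpha_\psi^{\min}}{6}\,C_\psi^2\ge\frac32\,\alpha_\psi^{\min}\ell_\delta^2 .
\]
Negative cycles follow by the symmetry $p\mapsto1-p$, $t\mapsto-t$. Taking expectations yields exactly $\gE_{\mathrm{agr}}\ge\frac{\ell_\delta^2}{2}\bar P_\psi^\delta$ for every $r_\pi$. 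No feasibility class is needed, and the bound holds for all $\delta\in(0,\tfrac12)$. The correct statement is therefore
\[
\gL_{\mathrm{DPO}}(\pi)-\gL^\ast\ge 2e_\psi+\frac{\ell_\delta^2}{2}\,\bar P_\psi^\delta .
\]

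\textbf{Minor point on Step~1.} Step~1 needs no extra hypothesis. $\Phi_\psi(x,y)\in(\R^d)^{|y|}$ already lists every $\Phi_\psi(x,y_{\le t})$, so $r_\pi(x,\cdot)$ is a function of $\Phi_\psi(x,\cdot)$ and $\Delta_\pi$ is $\sigma(\Phi_\psi(x,Y),\Phi_\psi(x,Y'))$-measurable. Your third bullet about cancelling shared prefixes is unnecessary.

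You should not invoke Assumption~\ref{asmp:prefix-cover}, for two reasons. It is not a hypothesis of this theorem. Also, its ``equivalently'' clause with $y=y_1$ forces $\Phi_\psi(x,\cdot)$ to be injective, which would make $e_\psi=0$.
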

\begin{proof}
Set $\mathcal{H}=\sigma\bigl(\Phi_\psi(x,Y),\Phi_\psi(x,Y^\prime)\bigr)$ for $(Y,Y^\prime)\sim\mu_x^{\otimes 2}$ and
\[
  \tilde p_\psi(x,y,y^\prime)=\E\bigl[\bar p(x,Y,Y^\prime)\mid\mathcal{H}\bigr],\qquad
  e_\psi=\E\bigl[(\bar p-\tilde p_\psi)^2\bigr].
\]
With $W_\psi=F^{-1}(\tilde p_\psi)$ and $\alpha_\psi=\tilde p_\psi(1-\tilde p_\psi)$, the same decomposition argument as in Proposition~\ref{prop:decomposition} gives
\[
  \gL_{\mathrm{DPO}}(\pi)-\gL^\ast
  =\gE_{\mathrm{emb}}(\Phi_\psi)+\gE_{\mathrm{agr}}(\Phi_\psi,r_\pi),
\]
\[
  \gE_{\mathrm{emb}}(\Phi_\psi)\ge 8e_\psi.
\]
For the agreement term, pointwise strong convexity of $t\mapsto\KL(p\|F(t))$ at $t_p=F^{-1}(p)$ with curvature $p(1-p)$ yields
\[
  \KL\bigl(\tilde p_\psi\|F(\Delta r_\pi)\bigr)\ge \frac{\alpha_\psi}{2}\,\bigl(W_\psi-\Delta r_\pi\bigr)^2.
\]
Integrating and taking the infimum over $r_\pi\in\mathcal{R}_{\mathrm{seq}}(\psi)$,
\[
  \inf_{r_\pi}\gE_{\mathrm{agr}}(\Phi_\psi,r_\pi)
  \ge \frac12\,\E\left[\alpha_\psi\,\operatorname{dist}\bigl(W_\psi,\gV_{\mathrm{seq}}(x)\bigr)^2\right].
\]
Dropping the additivity restriction can only decrease distance, so
\[
  \operatorname{dist}\bigl(W_\psi,\gV_{\mathrm{seq}}(x)\bigr)\ge \operatorname{dist}\bigl(W_\psi,\overline{\gV_x}\bigr),
\]
and applying the weighted cycle argument from Proposition~\ref{thm:cyclebound} and Lemma~\ref{lem:condorcetlb} with $\tilde p_\psi$ in place of $\tilde p_{\phi_d}$ gives
\[
  \inf_{r_\pi}\gE_{\mathrm{agr}}(\Phi_\psi,r_\pi)\ge\frac{\ell_\delta^2}{2}\,\bar P_\psi^\delta.
\]
Combining with $\gE_{\mathrm{emb}}(\Phi_\psi)\ge 8e_\psi$ yields the claim.
\end{proof}

\begin{restatable}{lem}{rhozero}\label{lem:rho-zero}
If Assumption~\ref{asmp:prefix-cover} holds, then $\gV_{\mathrm{seq}}(x)=\overline{\gV_x}$ for all $x\in\gX$, so the sequential-additivity constraint introduces no additional projection gap.
\end{restatable}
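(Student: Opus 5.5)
We must show that $\gV_{\mathrm{seq}}(x)$, the $L^2$ closure of the differences $\Delta r$ with $r\in\mathcal{R}_{\mathrm{seq}}(\psi)$, equals $\overline{\gV_x}$, the closure of all differences $\Delta h$ with $h$ a measurable function of the trajectory embedding $\Phi_\psi(x,\cdot)$. The plan is to prove the two inclusions separately.

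\emph{First inclusion.} Every $r\in\mathcal{R}_{\mathrm{seq}}(\psi)$ has the form $r(x,y)=\sum_t g_t(\Phi_\psi(x,y_{\le t}))$. We want to show $r$ is a measurable function of $\Phi_\psi(x,y)$, so that $\gV_{\mathrm{seq}}(x)\subseteq\overline{\gV_x}$. The length $|y|$ is read off from the number of blocks of $\Phi_\psi(x,y)$. Each prefix embedding $\Phi_\psi(x,y_{\le t})$ consists of the first $t$ blocks of $\Phi_\psi(x,y)$. Hence each summand, and therefore $r$, is a measurable function of $\Phi_\psi(x,y)$, and differences of such $r$ lie in $\gV_x$.

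\emph{Reverse inclusion.} Given an arbitrary measurable $h$ with $r=h\circ\Phi_\psi$, we must exhibit heads $g_t$ realizing the same $r$. We use a telescoping construction: set $g_t(\Phi_\psi(x,y_{\le t}))=h(\Phi_\psi(x,y_{\le t}))-h(\Phi_\psi(x,y_{<t}))$, with $h$ of the empty prefix set to $0$ (or absorbed as a constant, which cancels in $\Delta$). The sum then telescopes to $h(\Phi_\psi(x,y))$.

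The key point, and the main obstacle, is that $g_t$ must be a well-defined function of the prefix embedding $\Phi_\psi(x,y_{\le t})\in(\R^d)^t$ alone. The telescoping term refers to both $h$ at depth $t$ and $h$ at depth $t-1$, so we need the following.
\begin{itemize}
\item $\Phi_\psi(x,y_{<t})$ must be recoverable from $\Phi_\psi(x,y_{\le t})$. This holds by truncating the last block.
\item Distinct trajectories with equal prefix embeddings must not force conflicting values. The prefix-covering Assumption~\ref{asmp:prefix-cover} supplies this: equal embeddings induce identical prefix-sharing patterns, and prefixes at different depths never collide. Consequently the depth $t$ is determined by the embedding, and the telescoping increments are consistent across trajectories that share a prefix.
\end{itemize}
With well-definedness in hand, measurability of $g_t$ follows from measurability of $h$ and of the truncation map. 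This gives $\{\Delta h\}\subseteq\gV_{\mathrm{seq}}(x)$, and taking $L^2(\nu_{d,x}^{\otimes2})$ closures gives equality.

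Finally, substituting into the proof of Theorem~\ref{thm:dpo-lb}, the distance $\operatorname{dist}(W_\psi,\gV_{\mathrm{seq}}(x))$ equals $\operatorname{dist}(W_\psi,\overline{\gV_x})$. So the sequential-additivity restriction adds no projection gap.
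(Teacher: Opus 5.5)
Your proposal is correct and follows the paper's proof. Your telescoping heads $g_t(s)=h(s)-h(s^-)$, where $s^-$ drops the last block of $s$ and the root value is set to zero, are exactly the paper's construction $g_t(s)=h_t(s)-h_{t-1}(\pi(t,s))$ on depth-tagged prefix states; your block-truncation argument for $\gV_{\mathrm{seq}}(x)\subseteq\overline{\gV_x}$ simply spells out the inclusion the paper takes as already noted, and treating a nonzero empty-prefix value as a constant that cancels in $\Delta$ is a slightly cleaner way to handle the root.
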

\begin{proof}
It suffices to show that every trajectory reward $r(x,y)=h(\Phi_\psi(x,y))$ can be written as a depth-indexed sum of per-token contributions, i.e., that the resulting $\Delta r$ lies in $\gV_{\mathrm{seq}}(x)$. Combined with the inclusion $\gV_{\mathrm{seq}}(x)\subseteq\overline{\gV_x}$ noted earlier, this gives $\overline{\gV_x}=\gV_{\mathrm{seq}}(x)$.

For each $t\ge 0$ let
\[
  \gS_x^{(t)}=\bigl\{\Phi_\psi(x,y_{\leq t}):y\in\gY,\ |y|\ge t\bigr\}\subseteq(\R^d)^{t},
\]
with $\gS_x^{(0)}=\{\bot\}$ a singleton root. Assumption~\ref{asmp:prefix-cover} states that the map $t\mapsto\Phi_\psi(x,y_{\leq t})$ is injective on $\{0,1,\ldots,|y|\}$ for every $y$ and that prefix-coincidence is determined by the embeddings. In particular, any two trajectories sharing $\Phi_\psi(x,y_{\leq t})=\Phi_\psi(x,y'_{\leq t'})$ must satisfy $t=t'$ and $y_{\leq t}=y'_{\leq t}$. Therefore we work on the disjoint union of depth layers $\bigsqcup_{t\ge 0}\gS_x^{(t)}$, whose elements are depth-tagged states $(t,s)$ with $s\in\gS_x^{(t)}$. The parent map takes one such node as input
\[
\pi:\bigsqcup_{t\ge1}\gS_x^{(t)}\to\bigsqcup_{t\ge0}\gS_x^{(t)},\qquad
\pi(t,s)=\bigl(t-1,\Phi_\psi(x,y_{\le t-1})\bigr),
\]
where $s=\Phi_\psi(x,y_{\le t})$.

Let $r(x,y)=h(\Phi_\psi(x,y))$ be any trajectory reward and write $h_t:\gS_x^{(t)}\to\R$ for its restriction to depth-$t$ states (extended by $h_0(\bot)=0$ at the root). Define $g_t:\gS_x^{(t)}\to\R$ for $t\ge 1$ by
\[
g_t(s)=h_t(s)-h_{t-1}(\pi(t,s)).
\]
Extend each $g_t$ arbitrarily (e.g.\ by zero) to a measurable function on $(\R^d)^t$. Telescoping along any prefix path $(s_0,s_1,\ldots,s_T)$ with $s_t=\Phi_\psi(x,y_{\leq t})$ gives, for every trajectory $y$ with $|y|=T$,
\[
h(\Phi_\psi(x,y))=h_T(s_T)-h_0(\bot)=\sum_{t=1}^{T}g_t\bigl(\Phi_\psi(x,y_{\leq t})\bigr).
\]
Hence $r(x,y)=\sum_{t=1}^{|y|}g_t(\Phi_\psi(x,y_{\leq t}))$, which gives $r\in\mathcal{R}_{\mathrm{seq}}(\psi)$. Consequently $\Delta r\in\gV_{\mathrm{seq}}(x)$, and since $r$ was arbitrary, every trajectory difference lies in $\gV_{\mathrm{seq}}(x)$. Taking $L^2$ closures yields $\overline{\gV_x}\subseteq\gV_{\mathrm{seq}}(x)$, and the reverse inclusion is by definition, so $\overline{\gV_x}=\gV_{\mathrm{seq}}(x)$.
\end{proof}

\newpage
\section{Proofs for Section~\ref{sec:joint}}

\begin{prop}\label{prop:emb-gamma}
Let $\mu'$ be obtained from $\mu$ by the soft-update with reward $r$ and temperature $\beta>0$, and let $\phi'_d$ be an embedding for which $r$ is $\sigma(\phi'_d)$-measurable, i.e.\ $r(x,y)=h(\phi'_d(x,y))$ for some measurable $h$. Set $Z_x=\E_{\mu_x}[e^{\beta r(x,\cdot)}]$, $\rho(x,y)=e^{\beta r(x,y)}/Z_x$, and $\Delta p(x,y,y^\prime)=\bar p(x,y,y^\prime)-\tilde p_{\phi'_d}(x,y,y^\prime)$. Define
\[
\Gamma(\mu,\phi'_d;r,\beta)
\;=\;\E_{\gD_\mu}\Bigl[(\Delta p)^2\bigl(\rho(x,y)\rho(x,y^\prime)-1\bigr)\Bigr]
\;=\;\Cov_{\gD_\mu}\bigl((\Delta p)^2,\,\rho(x,y)\rho(x,y^\prime)\bigr).
\]
Then
\[
e_d(\mu',\phi'_d) \;=\; e_d(\mu,\phi'_d)+\Gamma(\mu,\phi'_d;r,\beta)\,.
\]
\end{prop}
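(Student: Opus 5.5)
The plan is a change-of-measure argument in three steps. The key point is that the embedding-induced probability $\tilde p_{\phi'_d}$ is the \emph{same function} under $\mu$ and $\mu'$. Once that is established, $e_d(\mu',\phi'_d)$ is just a reweighted expectation of a fixed integrand, and the difference from $e_d(\mu,\phi'_d)$ is the stated covariance.

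\emph{Step 1 (the tilt is a density with unit mass).} For each $x$, the soft update gives $d\mu'_x/d\mu_x=\rho(x,\cdot)$. I assume $Z_x<\infty$ so that $\rho$ is well defined. Hence $d(\mu'_x)^{\otimes 2}/d\mu_x^{\otimes 2}(y,y^\prime)=\rho(x,y)\rho(x,y^\prime)$. By independence of the two coordinates,
\[
\E_{\mu_x^{\otimes 2}}[\rho(x,y)\rho(x,y^\prime)]=\bigl(\E_{\mu_x}[\rho(x,\cdot)]\bigr)^2=1 .
\]
This is what makes $\E_{\gD_\mu}[(\Delta p)^2(\rho\rho'-1)]$ equal to a covariance.

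\emph{Step 2 (invariance of $\tilde p_{\phi'_d}$; the main obstacle).} Let $\gG=\sigma(\phi'_d(x,Y),\phi'_d(x,Y^\prime))$. The abstract Bayes formula for conditional expectation under a change of measure gives
\[
\E_{\mu'}[\bar p\mid\gG]=\frac{\E_{\mu}[\bar p\,\rho\rho'\mid\gG]}{\E_{\mu}[\rho\rho'\mid\gG]}.
\]
Since $r=h\circ\phi'_d$, the product $\rho(x,Y)\rho(x,Y^\prime)=e^{\beta(h(\phi'_d(x,Y))+h(\phi'_d(x,Y^\prime)))}/Z_x^2$ is $\gG$-measurable and strictly positive. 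It therefore factors out of both the numerator and the denominator and cancels. Hence $\E_{\mu'}[\bar p\mid\gG]=\E_\mu[\bar p\mid\gG]$, $\mu$-a.s., and so also $\mu'$-a.s., because $\mu'_x\ll\mu_x$.

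This is the step where the measurability hypothesis on $r$ is essential. Without it, the tilt would re-weight pairs inside a level set of $\phi'_d$, and $\tilde p_{\phi'_d}$ would change. I would be careful to use versions of the conditional expectation defined through the map $(\phi'_d(x,y),\phi'_d(x,y'))$, so that the identity holds pointwise where it is needed. The upshot is that $\Delta p$ is one and the same function for both measures.

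\emph{Step 3 (compute the difference).} With $\Delta p$ fixed, change measure:
\[
e_d(\mu',\phi'_d)=\E_\nu\E_{(\mu'_x)^{\otimes2}}[(\Delta p)^2]=\E_{\gD_\mu}[(\Delta p)^2\rho(x,y)\rho(x,y^\prime)].
\]
Subtracting $e_d(\mu,\phi'_d)=\E_{\gD_\mu}[(\Delta p)^2]$ yields $\Gamma$ in its first form. By Step 1, $\E[\rho\rho']=1$ for each $x$, so this equals $\Cov_{\gD_\mu}((\Delta p)^2,\rho\rho')$. The covariance is taken conditionally on each $x$ and then averaged over $\nu$, matching the definition of $\Gamma$ in Section~\ref{sec:joint}.

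Integrability is not a concern: $(\Delta p)^2\le 1$ and $\rho\rho'$ is integrable, so every expectation above is finite.
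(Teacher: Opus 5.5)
Your proposal is correct and takes the same route as the paper. You show that $\tilde p_{\phi'_d}$ is unchanged by the tilt, because $\rho(x,y)\rho(x,y')$ depends on $(y,y')$ only through $(\phi'_d(x,y),\phi'_d(x,y'))$; then you change measure with that density and use $\E_{\mu_x}[\rho]=1$ for the covariance form. The only difference is presentational: you cancel the $\mathcal{G}$-measurable density in the abstract Bayes formula on the pair $\sigma$-field, whereas the paper shows that the conditional law of a single $Y$ given $\phi'_d(x,Y)$ is unchanged and lifts this to the pair by independence. Your version is, if anything, slightly more rigorous, since it avoids conditioning on possibly null level sets.
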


\begin{proof}
Fix $x$, and let $Y,Y^\prime$ be i.i.d.\ with law $\mu_x$ on $\gY$. Set $G_x=\sigma\bigl(\phi'_d(x,Y),\,\phi'_d(x,Y^\prime)\bigr)$. The conditional density of $Y$ under $\mu'_x$ given $\phi'_d(x,Y)=z$ relative to that under $\mu_x$ is
\[
\frac{e^{\beta r(x,y)}}{\E_{\mu_x}\bigl[e^{\beta r(x,Y)}\mid\phi'_d(x,Y)=z\bigr]}=\frac{e^{\beta h(z)}}{e^{\beta h(z)}}=1,
\]
where we used that $r(x,y)=h(\phi'_d(x,y))=h(z)$ is constant on the level set $\{\phi'_d(x,\cdot)=z\}$. Hence the conditional law of $Y$ given $\phi'_d(x,Y)$ is the same under $\mu_x$ and $\mu'_x$; by independence of $(Y,Y^\prime)$ under both product measures, the same holds for the joint conditional given $G_x$. Consequently
\[
\E_{\mu'}[\bar p\mid \gG_x]=\E_\mu[\bar p\mid \gG_x]=\tilde p_{\phi'_d},
\]
so $\Delta p$ is well-defined as the same residual under either measure. Using the change of measure $\frac{d\gD_{\mu'}}{d\gD_\mu}(x,y,y^\prime)=\rho(x,y)\rho(x,y^\prime)$,
\begin{align*}
e_d(\mu',\phi'_d)
&=\E_{\gD_{\mu'}}[(\Delta p)^2]
=\E_{\gD_\mu}\bigl[(\Delta p)^2\,\rho(x,y)\rho(x,y^\prime)\bigr]\\
&=\E_{\gD_\mu}\bigl[(\Delta p)^2\bigr]+\E_{\gD_\mu}\bigl[(\Delta p)^2(\rho(x,y)\rho(x,y^\prime)-1)\bigr]\\
&=e_d(\mu,\phi'_d)+\Gamma(\mu,\phi'_d;r,\beta).
\end{align*}
The covariance form follows from $\E_{\gD_\mu}[\rho(x,y)\rho(x,y^\prime)]=\E_{x\sim\nu}\bigl[(\E_{\mu_x}\rho)^2\bigr]=1$, since $\E_{\mu_x}\rho=Z_x^{-1}\E_{\mu_x}[e^{\beta r}]=1$.
\end{proof}

The change in embedding error decomposes exactly as a covariance under $\gD_\mu$ between the squared embedding residual $(\Delta p)^2$ and the relative likelihood $\rho(x,y)\rho(x,y^\prime)$ that the soft-update assigns to the pair. Embedding error grows under retraining if and only if these two quantities are non-negatively correlated. The next proposition gives sufficient conditions for that to happen.

\begin{prop}\label{prop:gamma}
Under the assumptions of Proposition~\ref{prop:emb-gamma}, $\Gamma(\mu,\phi'_d;r,\beta)\ge 0$ holds in either of the following two settings:
\begin{enumerate}
  \item $\Cov_{\gD_\mu}\bigl((\bar p-\tilde p_{\phi'_d})^2,\,\rho(x,y)\rho(x,y^\prime)\bigr)\ge 0$.
  \item  For $\nu$-almost every $x$, whenever one pair $(y,y^\prime)$ has larger unresolved embedding error $(\bar p-\tilde p_{\phi'_d})^2$ than another pair, it also has no smaller total reward score $r(x,y)+r(x,y^\prime)$, after grouping pairs by the same embedding cells $(\phi'_d(x,y),\phi'_d(x,y^\prime))$.
\end{enumerate}
\end{prop}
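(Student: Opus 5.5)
Condition 1 is exactly the nonnegativity of the covariance form of $\Gamma$ established in Proposition~\ref{prop:emb-gamma}, so it gives the conclusion immediately. There, $\Gamma(\mu,\phi'_d;r,\beta)=\Cov_{\gD_\mu}\bigl((\Delta p)^2,\rho(x,y)\rho(x,y^\prime)\bigr)$, using $\E_{\gD_\mu}[\rho\rho']=1$. The real content is condition 2, and the plan is to reduce it to a Chebyshev/Harris-type comonotonicity inequality.

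First I rewrite the tilt weight as a monotone function of the total reward score:
\[
\rho(x,y)\rho(x,y^\prime)=Z_x^{-2}\exp\bigl(\beta\,(r(x,y)+r(x,y^\prime))\bigr).
\]
Since $\beta>0$, this is a strictly increasing function of $s(x,y,y^\prime):=r(x,y)+r(x,y^\prime)$ for each fixed $x$. Because $r=h\circ\phi'_d$, the score $s$ is $\gG_x$-measurable, meaning it is constant on embedding cells $(\phi'_d(x,y),\phi'_d(x,y^\prime))$.

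Next I condition on $x$. Let $A=(\Delta p)^2$. Since $\rho\rho'$ is $\gG_x$-measurable, the tower property gives
\[
\E_{\mu_x^{\otimes2}}[A\,\rho\rho']=\E_{\mu_x^{\otimes2}}\bigl[\E[A\mid\gG_x]\,\rho\rho'\bigr].
\]
This is the meaning of "after grouping pairs by the same embedding cells": condition 2 should be read as saying that the cell-averaged residual $\bar A:=\E[A\mid\gG_x]$ and $s$ are similarly ordered. That is, $(\bar A(\omega)-\bar A(\omega'))(s(\omega)-s(\omega'))\ge0$ for $\mu_x^{\otimes2}$-a.e.\ pair of points $\omega,\omega'$. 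Since $\rho\rho'$ is an increasing function of $s$, $\bar A$ and $\rho\rho'$ are then also similarly ordered.

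Chebyshev's sum inequality then applies in its symmetrized form. Take two independent copies $\omega,\omega'$ and write
\[
\Cov(\bar A,\rho\rho')=\tfrac12\,\E\bigl[(\bar A(\omega)-\bar A(\omega'))(\rho\rho'(\omega)-\rho\rho'(\omega'))\bigr]\ge0.
\]
The integrand is nonnegative almost everywhere, which gives $\Cov_{\mu_x^{\otimes2}}(A,\rho\rho')\ge0$ for $\nu$-a.e.\ $x$.

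The remaining step is aggregating over $x$, and this is the main obstacle. The per-$x$ normalization gives $\E_{\mu_x^{\otimes2}}[\rho\rho']=1$ for every $x$. Hence
\[
\E_{\gD_\mu}[A(\rho\rho'-1)]=\E_\nu\bigl[\Cov_{\mu_x^{\otimes2}}(A,\rho\rho')\bigr],
\]
with no between-prompt covariance term appearing. Each summand is nonnegative, so $\Gamma\ge0$.

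The care needed is in justifying the interpretation of condition 2 at the level of cell averages, rather than raw pairs, and in handling ties. Ties in $s$ or in $\bar A$ contribute zero to the symmetrized integrand, so they cause no problem. I also need integrability: $A\le1$, and $\rho\rho'$ is integrable because $\E[\rho]=1$ and the product is under a product measure.
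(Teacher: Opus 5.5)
Your proposal is correct and takes essentially the same route as the paper. Condition 1 is read off from the covariance form of $\Gamma$. For condition 2, both arguments write $\rho(x,y)\rho(x,y^\prime)=Z_x^{-2}e^{\beta(r(x,y)+r(x,y^\prime))}$ as increasing in the total score, apply Chebyshev's sum inequality for each fixed $x$, and integrate over $\nu$ using $\E_{\mu_x}[\rho]=1$. Your tower-property step replaces $(\Delta p)^2$ by its cell average, which is valid because $\rho\rho'$ is $\sigma(\phi'_d(x,y),\phi'_d(x,y^\prime))$-measurable; together with the symmetrized two-copy form of Chebyshev, it makes precise what the paper calls applying the inequality ``on the common ordering.'' It also lets you work under the slightly weaker cell-averaged reading of condition 2.
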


\begin{proof}
Setting (1) is the covariance form of $\Gamma$ in Proposition~\ref{prop:emb-gamma}.

For (2), the function $S(x,y,y^\prime)=\rho(x,y)\rho(x,y^\prime)=Z_x^{-2}\,e^{\beta(r(x,y)+r(x,y^\prime))}$ is, conditional on $x$, a non-decreasing function of $r(x,y)+r(x,y^\prime)$ since $\beta>0$. Co-monotonicity with $(\Delta p)^2$ and the Chebyshev sum inequality applied to the pushforward of $\mu_x^{\otimes 2}$ on the common ordering give
\[
\E_{\mu_x^{\otimes 2}}\bigl[(\Delta p)^2\,S\bigr]\ge\E_{\mu_x^{\otimes 2}}[(\Delta p)^2]\cdot\E_{\mu_x^{\otimes 2}}[S]=\E_{\mu_x^{\otimes 2}}[(\Delta p)^2],
\]
using $\E_{\mu_x}[\rho]=1$. Integrating over $x\sim\nu$ yields $\Gamma\ge 0$.
\end{proof}

\jointcov*
\begin{proof}
The $\Gamma$ term is the statement of Proposition~\ref{prop:emb-gamma}. For $\Theta$, fix $x$ and use the derivative
\[
\frac{d(\mu_x^{(t+1)})^{\otimes 3}}{d(\mu_x^{(t)})^{\otimes 3}}(y_1,y_2,y_3)
=\rho_t(x,y_1)\rho_t(x,y_2)\rho_t(x,y_3),
\]
which integrates to $1$ since $\E_{\mu_x^{(t)}}[\rho_t(x,\cdot)]=1$ and the three coordinates are independent. Because $\mathcal{CC}_\infty^{2\delta}$ is defined via $\bar p$ and does not depend on $\mu^{(t)}$,
\begin{align*}
P_\infty^{2\delta,(t+1)}-P_\infty^{2\delta,(t)}
&=\E_\nu\Bigl[\E_{\mu_x^{(t),\otimes 3}}\bigl[(\rho_t(x,y_1)\rho_t(x,y_2)\rho_t(x,y_3)-1)\,\mathbf{1}_{\mathcal{CC}_\infty^{2\delta}}\bigr]\Bigr]\\
&=\E_\nu\Bigl[\Cov_{\mu_x^{(t),\otimes 3}}\bigl(\mathbf{1}_{\mathcal{CC}_\infty^{2\delta}},\rho_t(x,y_1)\rho_t(x,y_2)\rho_t(x,y_3)\bigr)\Bigr],
\end{align*}
where the second equality uses $\E_{\mu_x^{(t),\otimes 3}}[\rho_t(x,y_1)\rho_t(x,y_2)\rho_t(x,y_3)]=1$.
\end{proof}

\jointfixed*
\begin{proof}
Invariance of $\mu^*$ under the soft-update means $d\mu_x^{*,'}/d\mu_x^*=1$, i.e.\ $\rho^*(x,y)=1$ for $\nu$-a.e.\ $x$ and $\mu_x^*$-a.e.\ $y$, in which case both products $\rho^*(x,y)\rho^*(x,y^\prime)$ and $\rho^*(x,y_1)\rho^*(x,y_2)\rho^*(x,y_3)$ are constant (equal to $1$) on the supports of the corresponding product measures. Constants have zero covariance with any function, so $\Gamma(\mu^*,\phi^*;r^*,\beta)=0$ and $\Theta(\mu^*;r^*,\beta)=0$.
\end{proof}

\newpage
\section{More Experiment Details and Results}\label{appendix:exp}

\paragraph{Fitting the Reward}
For each $d$, items are partitioned into groups by $\phi_d$.
We fit one scalar score per group by full-batch gradient descent on the pairwise logistic loss to the coarsened matrix $\tilde P_{\phi_d}$, with learning rate $5\times 10^{-2}$ and a decaying step-size schedule.

\paragraph{DPO}
We define implicit rewards $r_i=\lambda\bigl(\log\pi_\theta(y_i)-\log\pi_{\mathrm{ref}}(y_i)\bigr)$, average $r$ within each $\phi_d$-group, and set $q_{ij}^\pi=\sigma(r_i-r_j)$.
The policy $\pi_\theta$ and reference $\pi_{\mathrm{ref}}$ are fixed in this evaluation. 

\begin{figure}[tbh]
    \centering
    \includegraphics[width=0.48\linewidth]{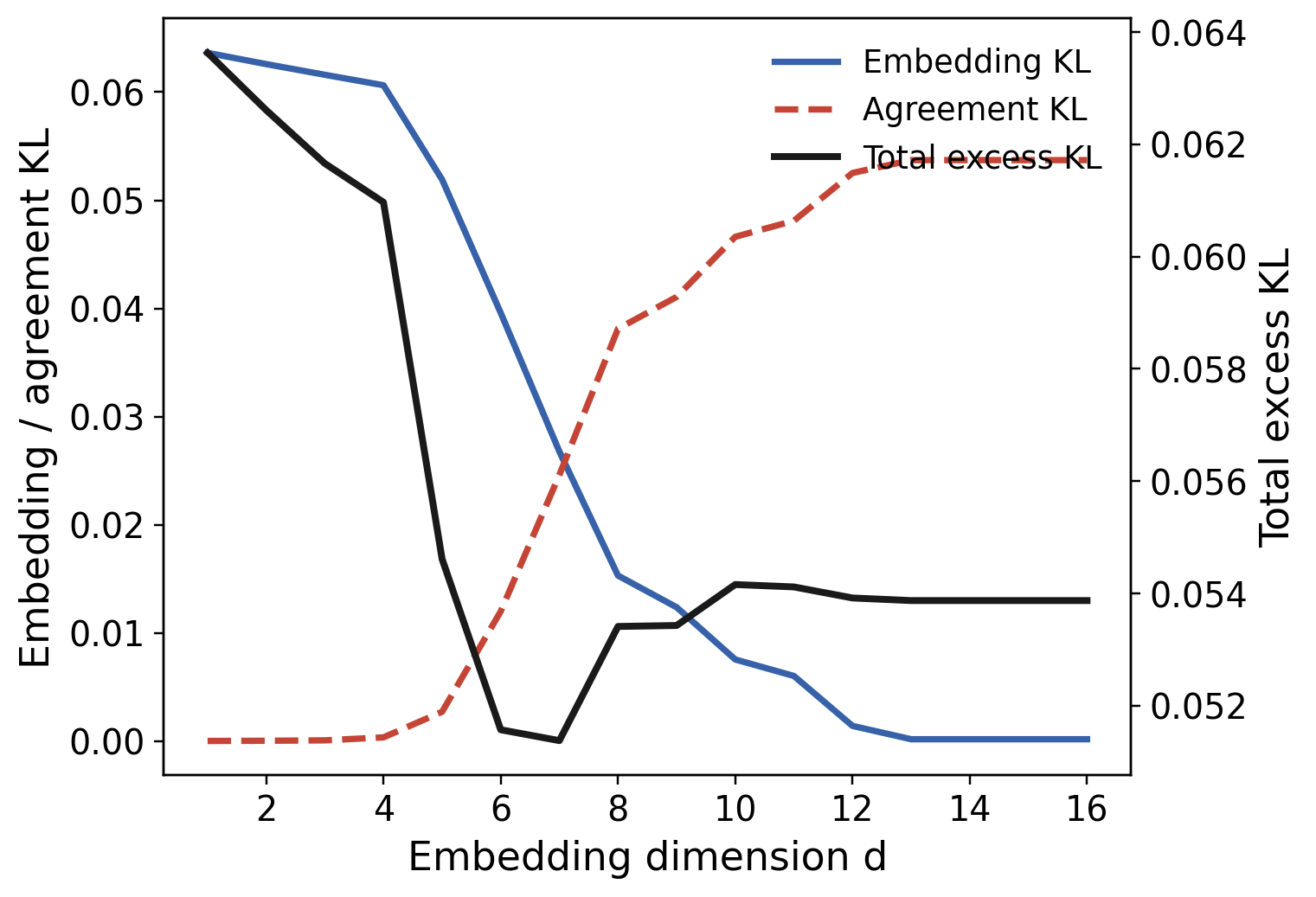}
    \includegraphics[width=0.48\linewidth]{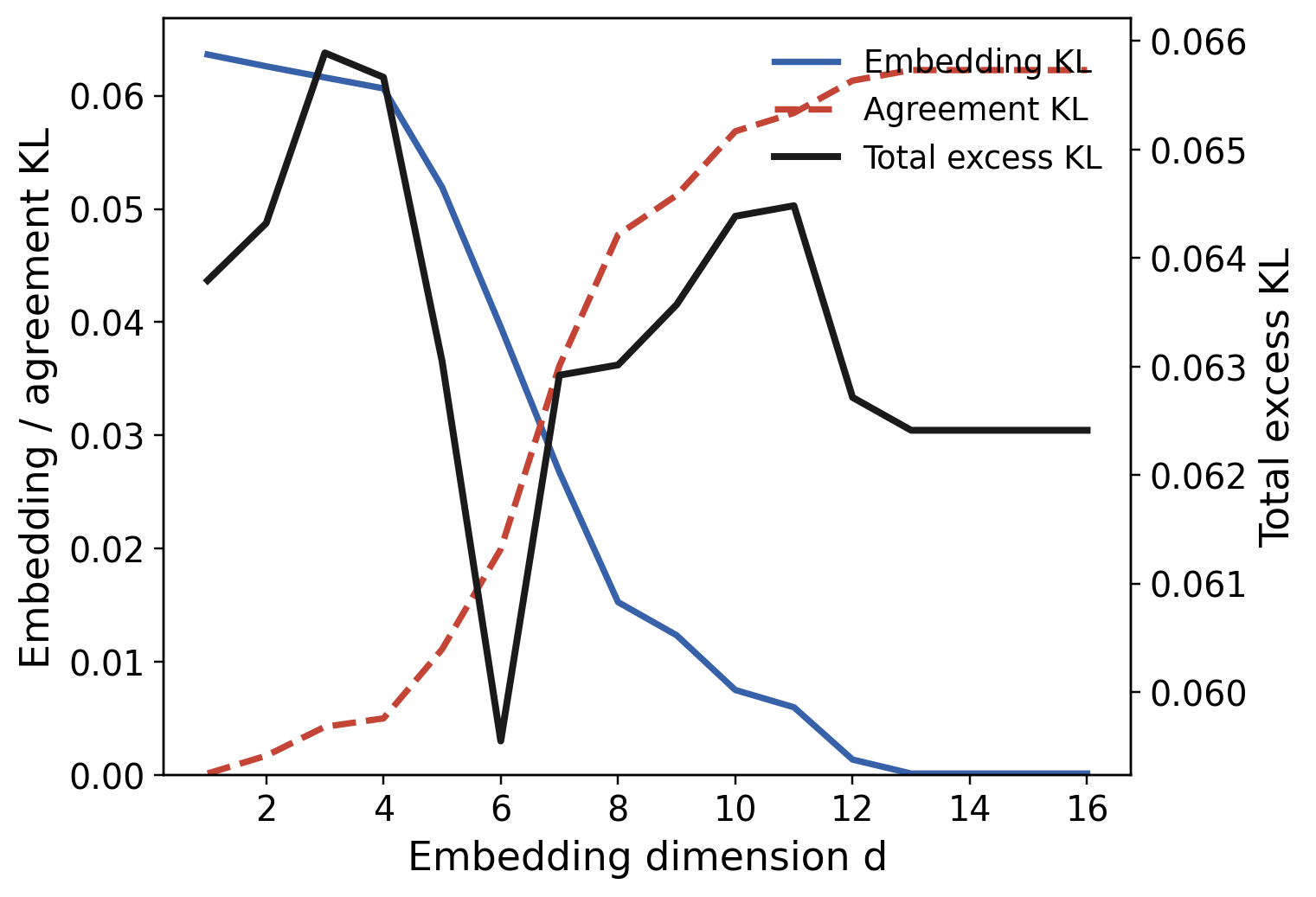}
    \includegraphics[width=0.48\linewidth]{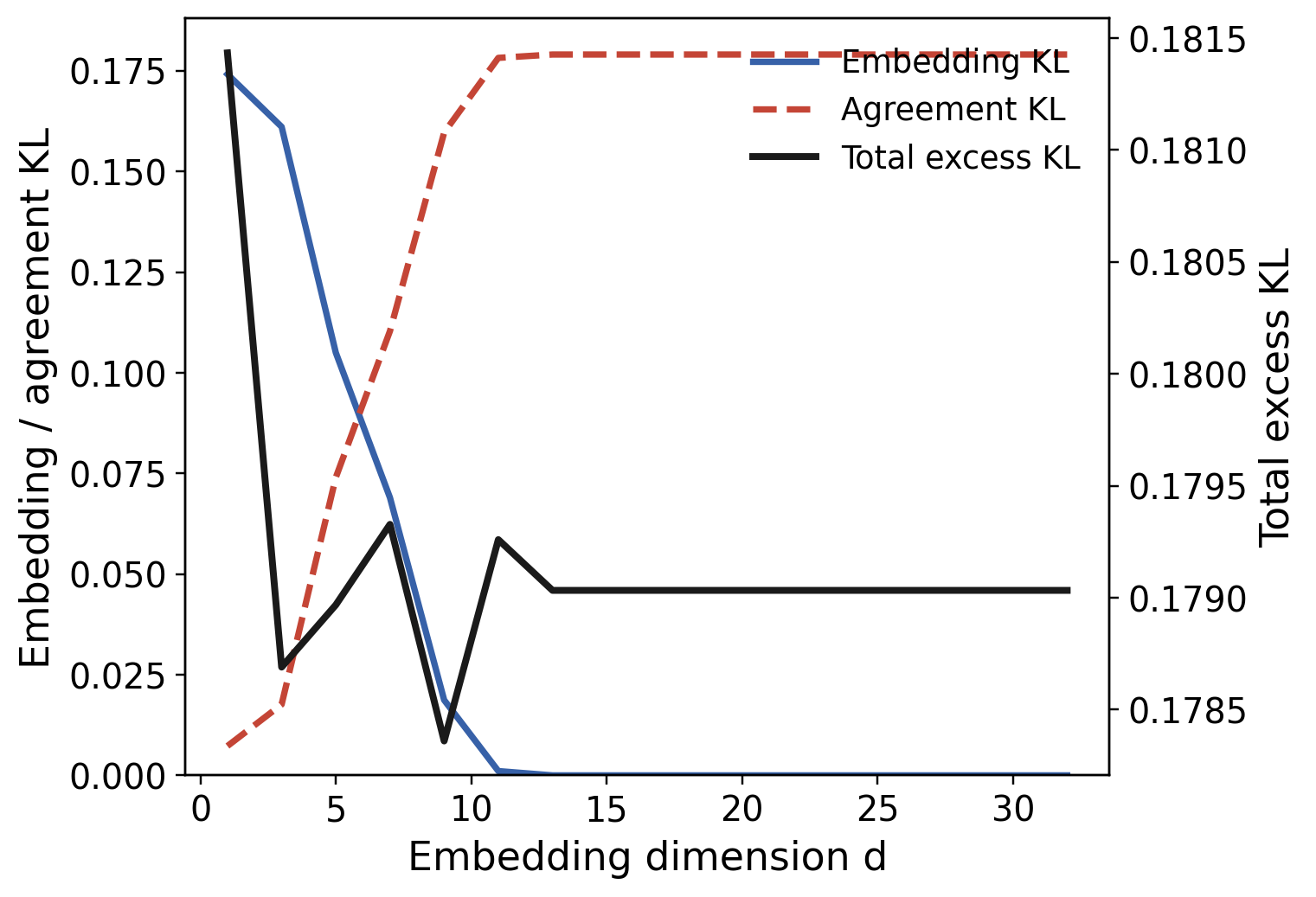}
    \includegraphics[width=0.48\linewidth]{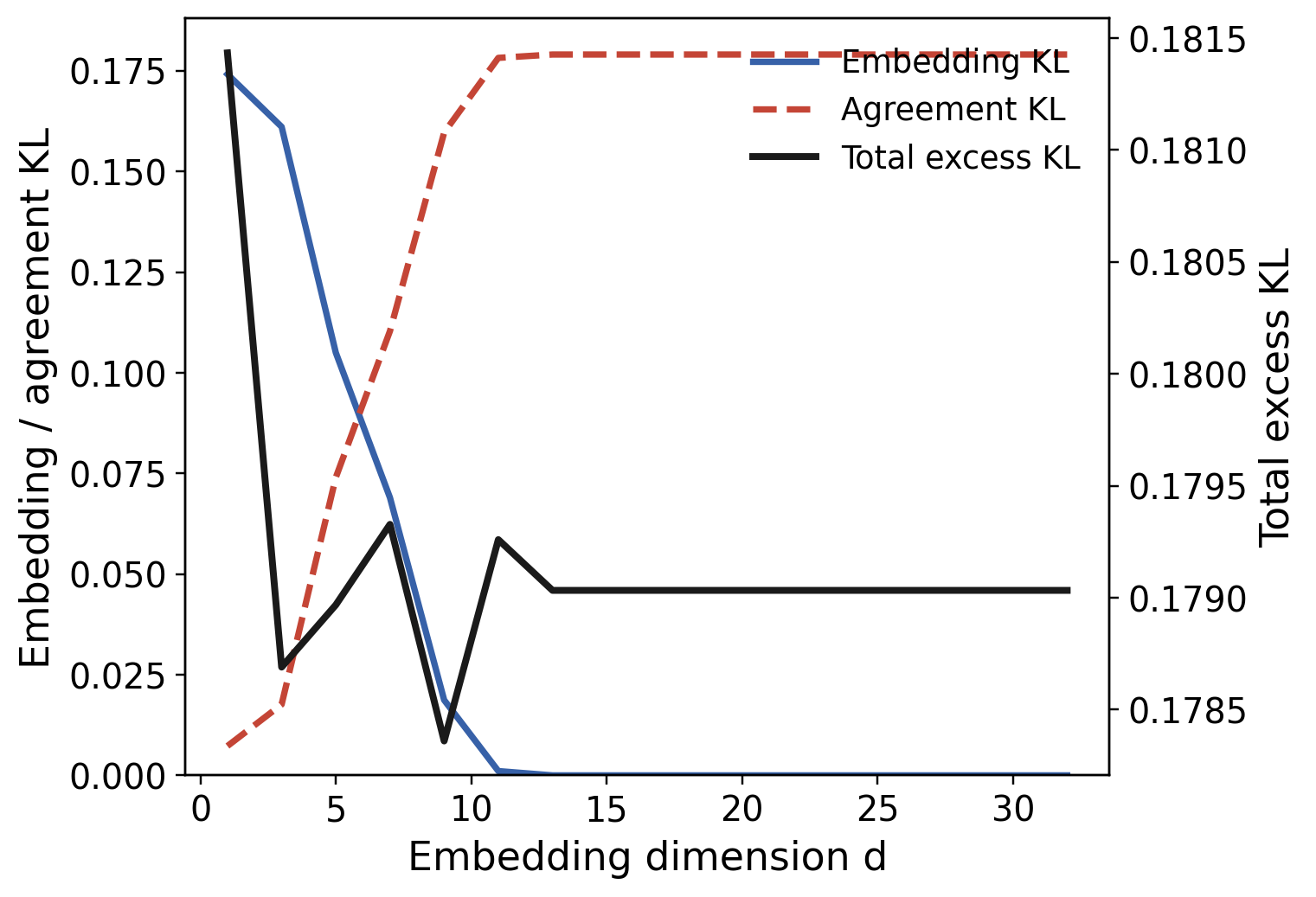}
    \caption{Top row: results on Jester, Bottom row: results on Sushi. Left panel: results with RLHF, right panel: results with DPO.}
    \label{fig:mpnet_appendix}
\end{figure}
\begin{figure}[tbh]
    \centering
    \includegraphics[width=0.48\linewidth]{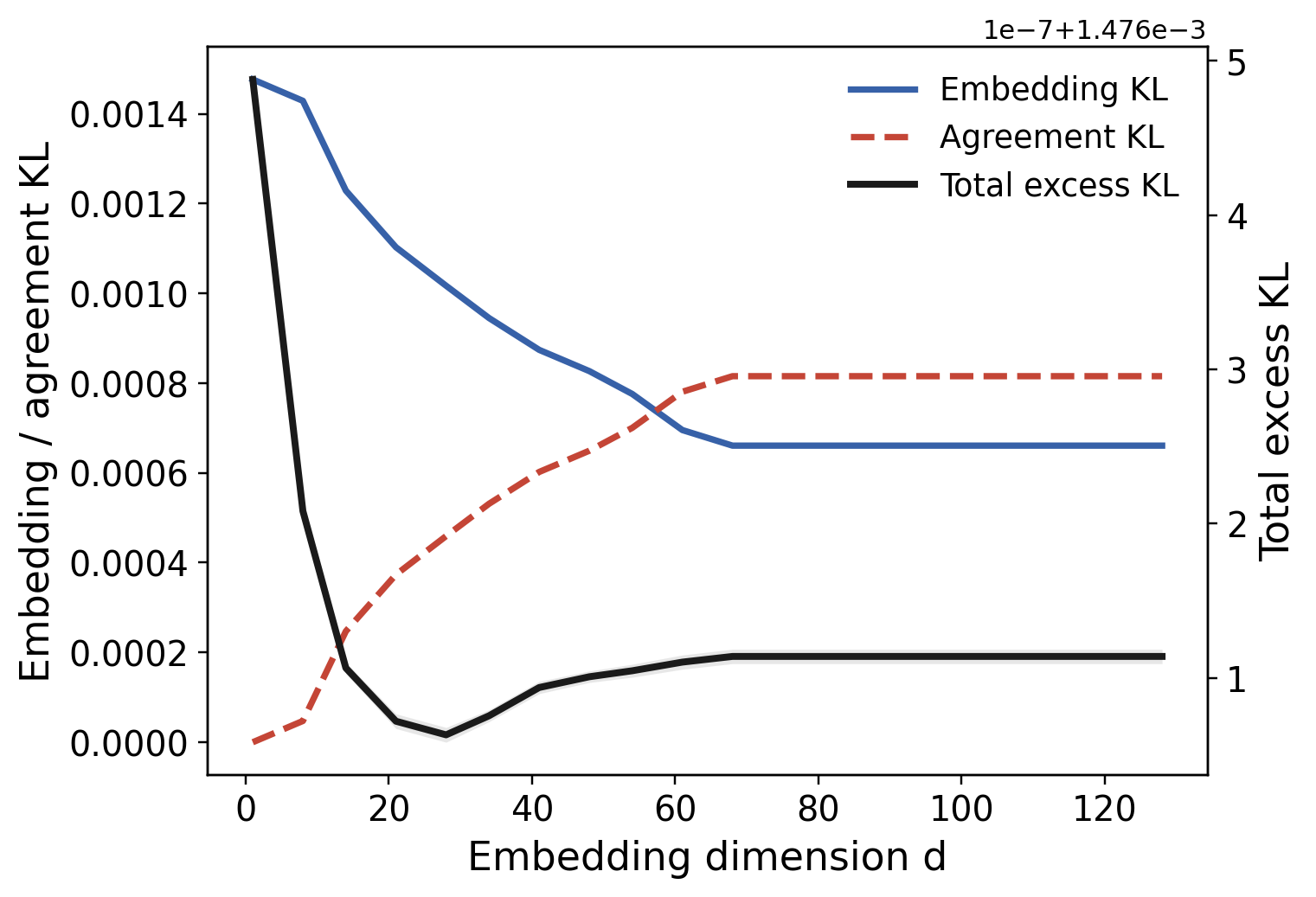}
    \includegraphics[width=0.48\linewidth]{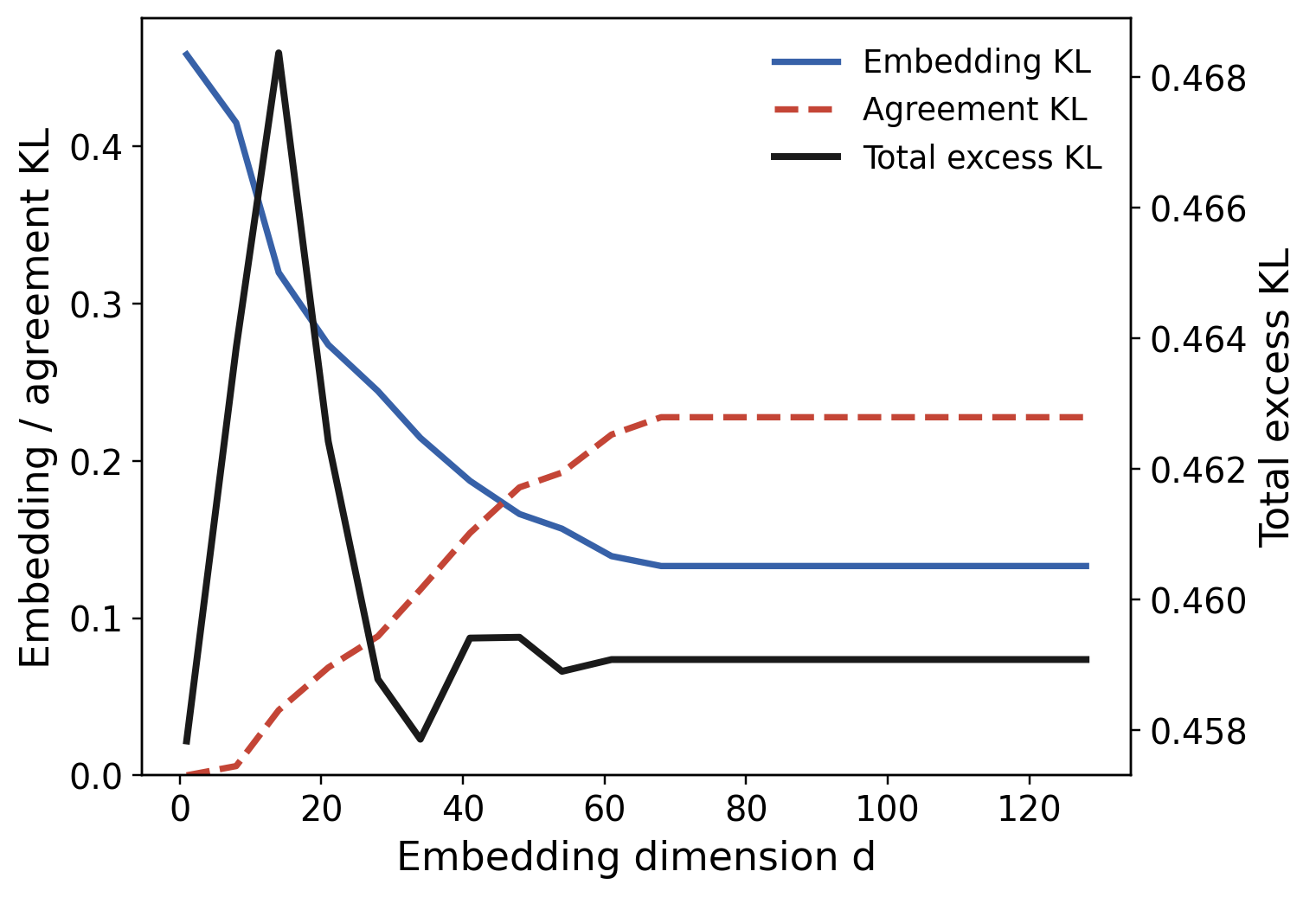}
    \caption{Excess loss decomposition on the MT-Bench preference dataset with \texttt{all-MiniLM-L6-v2} embedding. Left: RLHF. Right: DPO.}
    \label{fig:mtbench}
\end{figure}

\begin{figure}[thb]
    \centering
    \includegraphics[width=\linewidth]{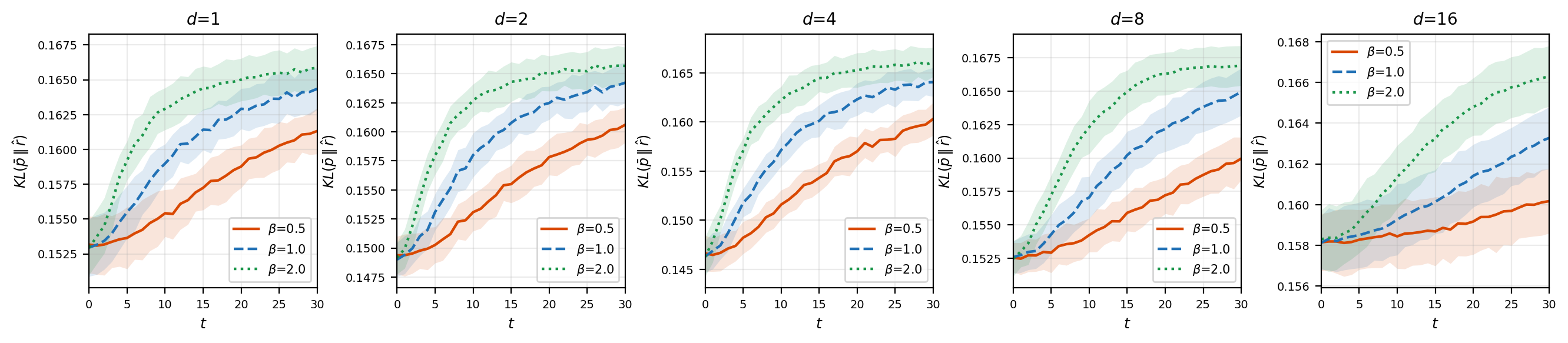}
    \includegraphics[width=\linewidth]{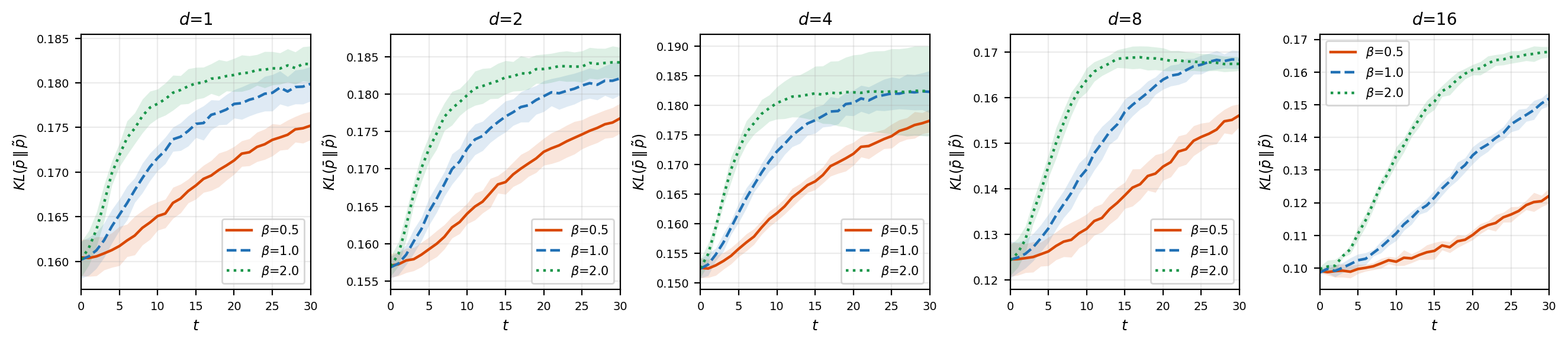}
    \includegraphics[width=\linewidth]{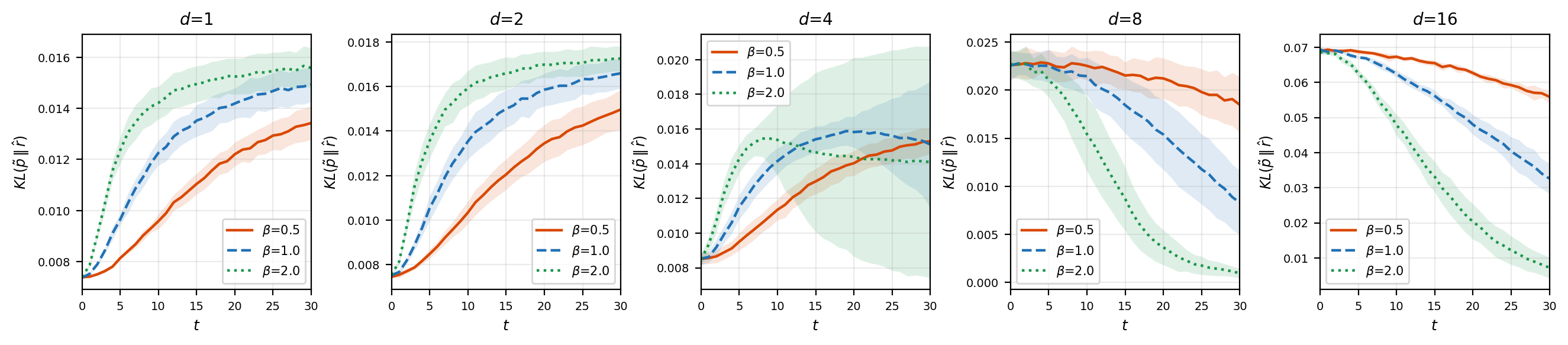}
    \caption{Joint embedding--reward updates on synthetic data. Top: total excess loss over update rounds. Middle: embedding error $\gE_{\mathrm{emb}}(\phi_d)$ over rounds. Bottom: agreement cost $\gE_{\mathrm{agr}}(\phi_d,r)$ over rounds.}
    \label{fig:online}
\end{figure}
\paragraph{Licenses and terms of use.}
Pretrained embeddings are \texttt{sentence-transformers/all-MiniLM-L6-v2} and \texttt{sentence-transformers/all-mpnet-base-v2} from the Hugging Face Hub (\url{https://huggingface.co/sentence-transformers/all-MiniLM-L6-v2}, \url{https://huggingface.co/sentence-transformers/all-mpnet-base-v2}). Each model card lists Apache License, version 2.0, and we use the checkpoints accordingly. MT-Bench pairwise judgments are taken from the Hugging Face dataset \texttt{lmsys/mt\_bench\_human\_judgments} (\url{https://huggingface.co/datasets/lmsys/mt_bench_human_judgments}), which lists CC BY 4.0 on its dataset card. The Jester and Sushi data are used under the terms of the original releases cited above.

Figure~\ref{fig:mpnet_appendix} reports results on Jester and Sushi for both RLHF and DPO with pretrained embedding from \texttt{mpnet} \citep{song2020mpnet}.

Figure~\ref{fig:mtbench} shows the results on MT-Bench. Again, the embedding loss decreases and the agreement cost increases with embedding dimension, and their sum may be minimized at an intermediate dimension.

We also evaluate joint embedding-reward updates using the same synthetic setup. We initialize from a constant embedding and run $T=10$ rounds of alternating reward fitting and distribution tilting. At each round we measure the embedding error $\gE_{\mathrm{emb}}(\phi_d^{(t)})$, the agreement cost $\gE_{\mathrm{agr}}(\phi_d^{(t)},r_t)$, and their sum. The Top panel of Figure~\ref{fig:online} shows that total excess loss increases over round. The middle panel shows that the embedding error grows over rounds. The bottom panel shows agreement cost can either increase or decrease, with different dimension. 

\end{document}